\author[1]{Bassel Mannaa}
\author[2]{Rasmus Ejlers Møgelberg}
\affil[1,2]{Department of Computer Science, IT University of Copenhagen, Copenhagen, Denmark\\
\texttt{basm@itu.dk,mogel@itu.dk}}
\title{The clocks they are adjunctions \protect\\ \Large{Denotational semantics for Clocked Type Theory}}
\titlerunning{The clocks they are adjunctions}
\date{\today}
\authorrunning{B Mannaa, R E Møgelberg}
\subjclass{Dummy classification -- please refer to \url{http://www.acm.org/about/class/ccs98-html}}
\keywords{Dummy keyword -- please provide 1--5 keywords}
\newenvironment{diagram}{\begin{tikzcd}[sep=large]}{\end{tikzcd}}
\theoremstyle{plain}
\newtheorem{proposition}[theorem]{Proposition}
\newenvironment{proofof}[1]
{\begin{proof}[Proof of {#1}]}
{\end{proof}}
\newenvironment{proofsketch}
{\begin{proof}[Proof (sketch)]}
{\end{proof}}
\newcommand\sym[1]{\mathsf{#1}}
\newcommand{\cat}[1]{\mathcal{#1}}
\newcommand{\id}{\mathsf{id}}
\newcommand{\set}{\mathsf{Set}}
\newcommand{\inv}[1]{#1^{-1}}
\newcommand{\catT}{\mathbb{T}}
\renewcommand{\L}{\mathsf{L}}
\newcommand{\R}{\mathsf{R}}
\newcommand{\transp}[1]{\overline{#1}}
\newcommand{\tick}{\mathrm{tick}}
\newcommand{\opcat}[1]{{{#1}^{\mathrm{op}}}}
\newcommand{\triple}[3]{(#1;#2;#3)}
\newcommand{\timeobj}[2]{(#1;#2)}
\newcommand{\grtotal}{\mathsf{GR}}
\newcommand{\gr}[1]{\mathsf{GR}[#1]}
\newcommand{\Fam}{\mathsf{Fam}}
\newcommand{\typ}{\mathsf{Ty}}
\newcommand{\trm}{\mathsf{Tm}}
\newcommand{\cwftm}[3]{#1 \vdash #2 : #3}
\newcommand{\cwfty}[2]{#1 \vdash #2}
\newcommand{\p}[0]{\mathsf{p}}
\newcommand{\q}[0]{\mathsf{q}}
\newcommand{\compr}[2]{#1.#2}
\newcommand{\cpair}[2]{\la #1, #2\ra}
\newcommand{\isocomp}[2]{\zeta_{{#1},{#2}}}
\newcommand{\pair}[2]{\left(#1,#2\right)}
\newcommand{\Nat}[0]{\mathsf{N}}
\newcommand{\la}[0]{\langle}
\newcommand{\ra}[0]{\rangle}
\newcommand{\dprod}[3]{\ensuremath{{\textstyle\prod\left(#1 : #2\right) . #3}}}
\newcommand{\subst}[2]{[#2/#1]}
\newcommand{\idty}[3]{#2 =_{#1} #3}
\newcommand{\later}{\triangleright}
\newcommand{\tearlier}{\blacktriangleleft}
\newcommand{\tlater}{\blacktriangleright}
\newcommand{\cv}{\mathrm{CV}}
\newcommand{\clk}{\mathrm{Clk}}
\newcommand{\dfix}{\mathsf{dfix}}
\newcommand{\nxt}{\mathsf{next}}
\newcommand{\pret}[1]{{\llbracket #1 \rrbracket}}
\newcommand\tabs[2]{\lambda (#1 : #2).}
\newcommand\tickc{\diamond}
\newcommand\tapp[2][\tickA]{#2\,[#1] }
\newcommand\tappc[1]{\tapp[\tickc]{#1}}
\newcommand{\tickA}{\alpha}
\newcommand{\tickB}{\beta}
\newcommand\latbind[2]{{\triangleright}\, (#1:#2) .}
\newcommand\toksubst[3][\kappa]{\left[#2/#3\right]}
\newcommand{\clocktype}{\mathrm{Clock}}
\newcommand\Str{\sym{Str}}
\newcommand\gStr[1][\kappa]{\sym{Str}^{#1}}
\newcommand{\cons}[2]{\mathrel{#1::#2}}
\newcommand{\basicsub}[2]{[#1\mapsto #2]}
\newcommand{\subex}[3]{#1\basicsub{#2}{#3}}
\newcommand{\cirr}[1][\kappa]{\sym{cirr}^{#1}}
\newcommand{\tirr}[1][\kappa]{\sym{tirr}^{#1}}
\newcommand{\pfix}[1][\kappa]{\sym{pfix}^{#1}}
\newcommand{\nats}{\mathbb{N}}
\newcommand{\ru}[2]{\dfrac{\begin{array}[b]{@{}c@{}} #1 \end{array}}{#2}}
\newcommand{\of}{{:}} 
\newcommand\hastype[4][\Delta]{
#2 \vdash_{#1} #3: #4
}
\newcommand\wfcxt[2][\Delta]{#2 \vdash_{#1}}
\newcommand\istype[3][\Delta]{
\ensuremath{#2 \vdash_{#1} #3 \, \operatorname{type}}
}
\newcommand\istypeshort[3][\Delta]{
\ensuremath{#2 \vdash_{#1} #3}
}
\newcommand\jeqjud[5][\Delta]{
#2 \vdash_{#1} #3 \jeq #4: #5
}
\newcommand{\jeq}{\equiv}
\newcommand{\jeqty}{\equiv}
\newcommand{\defeq}{\mathbin{\overset{\textsf{def}}{=}}}
\newcommand{\clott}{\textsf{CloTT}}
\newcommand{\gdtt}{\textsf{GDTT}}
\begin{document}

\maketitle
\abstract{
Clocked Type Theory (\clott) is a type theory for guarded recursion useful for programming with
coinductive types, allowing productivity to be encoded in types, and for reasoning about advanced 
programming language features using an abstract form of step-indexing. \clott\ has previously
been shown to enjoy a number of 
syntactic properties including strong normalisation, canonicity and decidability of type checking.
In this paper we present a denotational semantics for \clott\ useful, e.g., for studying future extensions 
of \clott\ with constructions such as path types. 

The main challenge for constructing this model is to model the notion of ticks used in \clott\ for coinductive
reasoning about coinductive types. We build on a category previously used to model guarded recursion, but in
this category there is no object of ticks, so tick-assumptions in a context can not be modelled using standard tools.
Instead we show how ticks can be modelled using adjoint functors, and how to model the tick constant
using a semantic substitution. 
}

\section{Introduction}

In recent years a number of extensions of Martin-L{\"o}f type theory~\cite{MartinLof:84} have been proposed to 
enhance the expressiveness or usability of the type theory. The most famous of these is Homotopy Type Theory~\cite{hottbook},
but other directions include the related Cubical Type Theory~\cite{CTT}, 
FreshMLTT~\cite{FreshMLTT}, a type theory with name abstraction
based on nominal sets, and Type Theory in Color~\cite{bernardy2015presheaf} 
for internalising relational parametricity in type theory. Many
of these extensions use denotational semantics to argue for consistency and to inspire constructions in the language.

This paper is part of a project to extend type theory with guarded recursion~\cite{Nakano:Modality}, a variant of 
recursion that uses a modal type operator $\later$ (pronounced `later') 
to preserve consistency of the logical reading of type theory.
The type $\later A$ should be read as classifying data of type $A$ available one time step from now, and comes with
a map $\nxt : A \to \later A$ and a fixed point operator mapping a function $f : \later A \to A$ to a fixed point for 
$f \circ \nxt$. This, in combination with \emph{guarded recursive types}, i.e., types where the recursive variable
is guarded by a $\later$, e.g., $\gStr[] \jeqty \nats \times \later \gStr[]$ gives a powerful type theory in which 
operational models of combinations of advanced programming language features such as higher-order
store~\cite{Birkedal-et-al:topos-of-trees} and nondeterminism~\cite{Bizjak-et-al:countable-nondet-internal} can be modelled
using an abstract form of step-indexing~\cite{Appel:M01}. 
Combining this with a notion of clocks, indexing the $\later$ operator with
clock names, and universal quantification over clocks, one can encode coinduction using guarded recursion,
allowing productivity~\cite{coquand1993infinite} of coinductive definitions to be encoded in types~\cite{atkey13icfp}.

The most recent type theory with guarded recursion is Clocked Type Theory (\clott)~\cite{bahr2017clocks}, 
which introduces the notion of ticks 
on a clock. Ticks are evidence that time has passed and can be used to unpack elements of type $\later A$ to elements
of $A$. In fact, in \clott, $\later A$ is a special form of function type from ticks to $A$. The combination of ticks and clocks
in \clott\ can be used for coinductive reasoning about coinductive types, by encoding the \emph{delayed substitutions} 
of~\cite{GDTT}. 

Bahr et al~\cite{bahr2017clocks} have shown that \clott\ can be given a reduction semantics satisfying 
strong normalisation, confluence and
canonicity. This establishes that productivity can indeed be encoded in types: For a closed term $t$ of stream type, 
the $n$'th element can be computed in finite time. These syntactic results also imply soundness of the type theory. 
However, these results have only been established for a core type theory without, e.g., identity types, and
the arguments can be difficult to extend to larger calculi. In particular, we are interested in extending \clott\ with path
types as in Guarded Cubical Type Theory~\cite{GCTT}. Therefore a denotational model of \clott\ 
can be useful, and this paper presents such a model. 

The work presented here builds on a number of existing models for guarded recursion. The most basic such, modelling the
single clock case, is the topos of trees model~\cite{Birkedal-et-al:topos-of-trees}, 
in which a closed type is modelled as a family of sets 
$X_n$ indexed by natural numbers $n$, together with restriction maps of the form $X_{n+1} \to X_n$ for every $n$. In other
words, a type is a presheaf over the ordered natural numbers. In this model $\later$ is modelled as $(\later X)_0 = 1$ and 
$(\later X)_{n+1} = X_n$ and guarded recursion reduces to natural number recursion. The guarded recursive 
type $\gStr[]$ mentioned above can be modelled in the topos of trees as $\gStr[](n) = 1\times \nats^n$. 

Bizjak and M{\o}gelberg~\cite{GDTTmodel} recently extended this model to the case of many clocks, using a category
$\set^\catT$ of covariant presheaves over a category $\catT$ of time objects, i.e., pairs of a finite set $X$ and a map 
$X \to \nats$. In this model, universal quantification over clocks is modelled by constructing an object in the topos of 
trees and taking the limit of that. For example, taking the limits over the object $\gStr[]$ gives the usual coinductive type 
of streams over natural numbers. 

The main challenge when adapting the model of~\cite{GDTTmodel} to \clott\ is to model ticks, which were not present
in the language modelled in~\cite{GDTTmodel}. In particular, how does one model tick assumptions of the form
$\tickA : \kappa$ in a context, when there appears to be no object of ticks in the model to be used as the
denotation of the clock $\kappa$. In this paper we observe that these assumptions can be modelled using a left adjoint
$\tearlier^\kappa$ to the functor $\tlater^\kappa$ used in~\cite{GDTTmodel} to model $\later^\kappa$ the delay modality
associated to the clock $\kappa$. Precisely we model context extension as $\pret{\Gamma, \tickA : \kappa} = 
\tearlier^\kappa\pret\Gamma$. To clarify what is needed to model ticks, we focus on a fragment of \clott\ called
the \emph{tick calculus} capturing just the interaction of ticks with dependent types. We show that the tick calculus
can be modelled soundly in a category with family~\cite{dybjer1996} (a standard notion of model for dependent type theory), 
with an adjunction $\L \dashv \R$ of endofunctors on the underlying category, 
for which the right adjoint lifts to types and terms, and there is a natural transformation from $\L$ to the identity. This appears 
to be a general pattern seen also in the model of fresh name abstraction of FreshMLTT~\cite{PITTS201519}
and dependent path types
in cubical type theory~\cite{CTT}. 
Similarly challenging is how to model the special tick constant $\tickc$. Since there is no object of ticks, there is no element
corresponding to $\tickc$ either. Still, we shall see that there exists a semantic substitution of $\tickc$ for a tick variable
that can be used to model application of terms to $\tickc$.  

The paper is organised as follows: 
The tick calculus and its model theory are introduced in  
Section~\ref{section:delayed substitution and ticks}. Section~\ref{sec:clott} introduces \clott, omitting
guarded recursive types and universes, which we leave for future work. Section~\ref{sec:basic:model} presents the
basics of the model, in particular the presheaf category $\set^\catT$ and the adjunction $\tearlier^\kappa \dashv 
\tlater^\kappa$. The presence of ticks in contexts leads to a non-standard notion of substitutions, and we study the 
syntax and semantics of these in Section~\ref{sec:substitutions}. Sections~\ref{sec:forall} and~\ref{sec:tickc} 
extend the model with universal quantification over clocks and $\tickc$, respectively. Finally, Section~\ref{sec:cirr} verifies
the important clock irrelevance axiom, and Section~\ref{sec:conclusion} concludes and discusses future work.

\section{A tick calculus}
\label{section:delayed substitution and ticks}

Before introducing \clott\ we focus on a fragment to explain the notion of ticks and how to model these. To motivate 
ticks, consider the notion of applicative functor from functional programming~\cite{mcbride2008applicative}: 
a type former $\later$ with maps
$A \to \later A$ and $\later(A \to B) \to \later A \to \later B$ satisfying a number of equations that we shall not recall. 
These maps can be used 
for programming with the constructor $\later$, but for reasoning in a dependent type theory, one needs an extension of these
to dependent function types. 
For example, in guarded recursion one can prove a theorem $X$ by constructing a map $\later X \to X$ and taking its fixed
point in $X$. If the theorem is that a property holds for all elements in a type of guarded streams satisfying 
$\gStr[] \jeqty \nats \times \later \gStr[]$, then $X$ will be of the form $\dprod{xs}{\gStr[]}P$. To apply the 
(essentially coinductive) assumption of type $\later\dprod {xs}{\gStr[]}P$ to the tail of a stream, which has type
$\later\gStr[]$ we need an extension of the applicative functor action. 

What should the type of such an extension be? Given $a: \later A$ and $f: \later (\dprod xAB)$ 
the application of $f$ to $a$ should be something of the form $\later B\subst x{??}$. If we think of $\later$ as a delay,
intuitively $a$ is a value of type $A$ delayed by one time, and the $??$ should be the value delivered
by $a$ one time step from now. Ticks are evidence that time has passed, and they allow us to talk about values delivered 
in the future.

 
The \emph{tick calculus} is the extension of dependent type theory with the following four rules
\begin{gather*}
\ru{\Gamma\vdash}{\Gamma,\tickA\of \tick\vdash} \qquad
\ru{\Gamma,\tickA\of\tick\vdash A}{\Gamma\vdash \later(\tickA\of\tick)A}\qquad \ru{\Gamma,\tickA\of\tick\vdash  t:A}{\Gamma\vdash \lambda(\tickA\of\tick) t:\later(\tickA\of\tick)A} \qquad
\ru{\Gamma\vdash t: \later(\tickA\of\tick) A}{\Gamma,\tickB\of\tick, \Gamma'\vdash \tapp[\tickB] t : A[\tickB/\tickA]} 
\end{gather*}
An assumption of the form $\tickA \of \tick$ in a context is an assumption that one time step has passed, and $\tickA$ 
is the evidence of this. Variables on the right-hand side of such an assumption should be thought of as arriving one time step later than those on the left. Ticks can be abstracted in terms and types, so that the type constructor $\later$ now comes
with evidence that time has passed that can be used in its scope. The type $\later(\tickA\of\tick)A$ can be thought
of as a form of dependent function type over ticks, which we abbreviate to $\later A$ if $\tickA$ does not occur
free in $A$. 
The elimination rule states that if a term $t$ can be typed as $\later(\tickA\of\tick)A$ before the arrival 
of tick $\tickB$, $t$ can be opened using $\tickB$ to give something of type $A[\tickB/\tickA]$. Note that the 
causality restriction in the typing rule prevents a term like 
$\lambda x .  \lambda(\tickA\of\tick) . \tapp{\tapp x} : \later\later A \to \later A$ being well typed; a tick can only 
be used to unpack the same term once. The context $\Gamma'$ in the elimination rule ensures that typing rules are
closed under weakening, also for ticks. 
Note that the clock object $\tick$ is not a type.

The equality theory is likewise extended with the usual $\beta$ and $\eta$ rules: 
\begin{align*}
 \lambda(\tickA\of \tick) \tapp[\tickB]t & = t[\tickB/\tickA] & \lambda(\tickA\of\tick) (\tapp t) & = t
\end{align*}
As stated, the tick calculus should be understood as an extension of standard dependent type theory. In particular one 
can add dependent sums and function types with standard rules. Variables can be introduced from anywhere in 
the context, also past ticks. 

We can now type the dependent applicative structure as 
\begin{align*}
 \lambda( x \of A) \lambda(\tickA \of\tick) x & \,\of\, A \to \later A \\
 \lambda f  \lambda y  \lambda(\tickA \of\tick)  \tapp f(\tapp y) &\,\of\, 
 \later\left(\dprod xAB\right) \to \dprod y{\later A}{\later(\tickA \of \tick) . B\subst x{\tapp y}}
\end{align*}

For a small example on how ticks in combination with the fixed point operator 
$\dfix : (\later X \to X) \to \later X$ 
can be used to reason about guarded recursive data, let 
$\gStr[] \jeqty \nats \times \later \gStr[]$ be the type of guarded recursive streams mentioned above, and
suppose $x\of \Nat \vdash P(x)$ is a family to be thought of as a predicate on $\Nat$ (where 
$\cons x{xs}$ is the pairing of $x$ and $xs$). A lifting of $P$ to streams
would be another guarded recursive type $y\of \Str \vdash \hat{P}(y)$ satisfying 
$\hat{P}(\cons x{xs}) \jeqty P(x) \times \later (\alpha \of \tick) \hat{P}(xs\,[\alpha])$. If $p: \Pi(x\of\Nat) P$ is a proof of $P$ 
we would expect that also $\Pi(y\of \Str) \hat{P}$ can be proved, and indeed this can be done as follows. 
Consider first 
\begin{align*}
 f & : \later (\Pi(y\of \Str) \hat{P}) \to \Pi(y\of \Str) \hat{P} \\
 f\, q \, (\cons x{xs}) & \defeq \cons{p(x)}{\lambda(\tickA \of \tick)\tapp[\tickA]q(\tapp[\tickA]{xs})}
\end{align*}
Then $f(\dfix(f))$ has the desired type. 

More generally, ticks can be used to encode \cite{bahr2017clocks} the \emph{delayed substitutions} of \cite{GDTT}, 
which have been used to reason coinductively about coinductive data. For more examples of reasoning using these 
see~\cite{GDTT}. For reasons of space, we will not model general guarded recursive types in this paper, but see 
Section~\ref{sec:basic:model} for how to model the types used above. 
%
%

\subsection{Modelling ticks using adjunctions}
\label{externalcwf}


We now describe a notion of model for the tick calculus. It is based on the notion of category with families (CwF) \cite{dybjer1996},
which is a standard notion of model of dependent type theory. Recall that a CwF is a pair $(\cat{C},T)$ such that 
$\cat C$ is a category with a distinguished terminal object and $T : \opcat{\cat C} \to \Fam(\set)$ is a functor
together with a comprehension map to be recalled below. The functor $T$ associates
to every object $\Gamma$ in $\cat C$ a map $T(\Gamma) : \trm(\Gamma) \to \typ(\Gamma)$ and to every morphism
$\gamma : \Delta \to \Gamma$ maps $\typ(\gamma): \typ(\Gamma) \to \typ(\Delta)$ and 
$\trm(\gamma): \trm(\Gamma) \to \trm(\Delta)$ such that $T(\Delta) \circ \trm(\gamma) = \typ(\gamma) \circ T(\Gamma)$. 
Following standard conventions, we write $\cwfty\Gamma A$ to 
mean $A \in \typ(\Gamma)$ and $\cwftm \Gamma tA$ to mean $t \in T(\Gamma)^{-1}(A)$, and we write 
$\cwfty\Delta{A[\gamma]}$ for $\typ(\gamma)(A)$ when $\cwfty\Gamma A$, 
and likewise $\cwftm\Delta{t[\gamma]}{A[\gamma]}$ for $\trm(\gamma)(t)$ when $\cwftm \Gamma tA$. 
We refer to the objects of $\cat C$ as contexts, morphisms as substitutions, elements of
$\typ(\Gamma)$ as types and elements of $\trm(\Gamma)$ as terms. 

Comprehension associates to each $\cwfty{\Gamma}A$ a context $\compr \Gamma A$, a substitution 
$\p_A : \compr \Gamma A \to \Gamma$ and a term $\cwftm{\compr\Gamma A}{\q_A}{A[\p_A]}$, such that for every
$\gamma : \Delta \to \Gamma$, and $\cwftm\Delta t{A[\gamma]}$ there exists a unique substitution 
$\cpair \gamma t : \Delta \to \compr \Gamma A$ such that  $\p_A \circ\cpair \gamma t = \gamma$ and 
$\q_A[\cpair \gamma t] = t$. 

To model the tick calculus we need an operation $\L$ modelling the extension of a context with a tick, plus an 
operation $\R$ modelling
$\later$. In the simply typed setting, $\R$ would be a right adjoint to context extension, but for dependent types this is 
not quite so, since these operations work on different objects (contexts and types respectively). In the model we consider
in this paper, the right adjoint does exist as an operation on contexts, but also extends to types and terms in the sense of the 
following definition.
\begin{definition}
\label{def:cwfa}
 Let $(\cat{C},T)$ be a CwF and let $\R : \cat C \to \cat C$ be a functor. An \emph{extension of $\R$ to types and terms}
 is a pair of operations on types and term presented here in the form of rules
 \begin{gather*}
\ru{\cwfty\Gamma A}{\cwfty{\R \Gamma}{\R A}} \qquad \ru{\cwftm\Gamma tA}{\cwftm{\R \Gamma}{\R t}{\R A}} 
\end{gather*}
commuting with substitutions in the sense that $(\R A)[\R \gamma] = \R (A[\gamma])$ and $(\R t)[\R \gamma] = \R (t[\gamma])$
hold for all substitutions $\gamma$, and commuting with comprehension in the sense that there exists an operation
associating to each $\cwfty\Gamma A$ a morphism $\isocomp{\Gamma}{A}:\R\Gamma.\R A \rightarrow \R(\Gamma.A)$ 
in $\cat C$ inverse to $\la \R \p_A, \R \q_A\ra$.
A \emph{CwF with adjunction} is a pair of adjoint endofunctors 
$\L \dashv \R : \cat C \to \cat C$ with an extension of $\R$ to types and terms. 
\end{definition}

Given a CwF with adjunction, one can define an operation mapping types $\cwfty{\L\Gamma}A$ to types 
$\cwfty\Gamma{\R_\Gamma A}$ defined as $\R_\Gamma A = (\R A)[\eta]$ where $\eta$ is the unit of the 
adjunction. 

\begin{lemma}
\label{lem:bijectivecorresp}
 There is a bijective correspondence between terms $\cwftm{\L\Gamma}aA$ and terms $\cwftm{\Gamma}b{\R_\Gamma A}$
 for which we write $\transp{(-)}$ for both directions where $\cwftm{\Gamma}{\transp{a}}{\R_\Gamma A}$ is given by $\transp{a} = (\R a)[\eta]$ and $\cwftm{\L \Gamma}{\transp{b}}A$ is given by $\transp{b} = \q_A[\epsilon \circ \L (\isocomp{\L\Gamma}{A} \circ \la \eta,b\ra)]$. Moreover, if $\gamma : \Delta \to \Gamma$, $\cwftm{\L\Gamma}aA$ 
 and $\cwftm{\Gamma}b{\R_\Gamma A}$ then 
\begin{align*}
(\R_\Gamma A)[\gamma] & = R_\Delta (A[\L \gamma]) & \transp{a[\L \gamma]} & = \transp{a}[\gamma] & 
\transp{b[\gamma]} & = \transp{b}[\L\gamma]
\end{align*}
\end{lemma}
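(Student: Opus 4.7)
The plan is to verify in order: well-typedness of the two directions, the bijection $\transp{\transp a}=a$ and $\transp{\transp b}=b$, the substitution formula $(\R_\Gamma A)[\gamma] = \R_\Delta(A[\L\gamma])$, the naturality equation for $a$, and then the naturality equation for $b$. The ingredients are the two triangle identities for $\L \dashv \R$, naturality of $\eta$ and $\epsilon$, the universal property of the comprehension pairing $\la -,-\ra$, and the fact that $\isocomp{\L\Gamma}A$ is two-sided inverse to $\la \R\p_A,\R\q_A\ra$ (which gives in particular $\R\p_A\circ\isocomp{\L\Gamma}A = \p_{\R A}$ and $\R\q_A[\isocomp{\L\Gamma}A] = \q_{\R A}$).

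For well-typedness of $\transp a = (\R a)[\eta_\Gamma]$ the only check is that $\eta_\Gamma \colon \Gamma \to \R\L\Gamma$. For $\transp b$, writing $\sigma_b = \epsilon\circ\L(\isocomp{\L\Gamma}A\circ\la\eta_\Gamma,b\ra)$, well-typedness reduces to $\p_A\circ\sigma_b = \id_{\L\Gamma}$; by naturality of $\epsilon$ applied to $\p_A$, this equals $\epsilon\circ\L(\R\p_A\circ\isocomp{\L\Gamma}A\circ\la\eta_\Gamma,b\ra) = \epsilon\circ\L(\p_{\R A}\circ\la\eta_\Gamma,b\ra) = \epsilon\circ\L\eta_\Gamma$, which is $\id_{\L\Gamma}$ by the triangle identity.

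For the bijection, in one direction I would first use uniqueness in the universal property to rewrite $\la\eta_\Gamma,(\R a)[\eta_\Gamma]\ra = \la\id_{\R\L\Gamma},\R a\ra\circ\eta_\Gamma$, and then show $\isocomp{\L\Gamma}A\circ\la\id,\R a\ra = \R\la\id,a\ra$ by precomposing its inverse $\la\R\p_A,\R\q_A\ra$ and checking the two projections (using $\R\p_A\circ\R\la\id,a\ra = \R\id = \id$ and $\R\q_A[\R\la\id,a\ra] = \R a$). Combining these, $\sigma_{\transp a} = \epsilon\circ\L\R\la\id,a\ra\circ\L\eta_\Gamma = \la\id,a\ra\circ\epsilon\circ\L\eta_\Gamma = \la\id,a\ra$ by naturality of $\epsilon$ and the triangle identity, so $\transp{\transp a} = \q_A[\la\id,a\ra] = a$. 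In the other direction, naturality of $\eta$ gives $\R\sigma_b\circ\eta_\Gamma = \R\epsilon\circ\eta_{\R(\compr{\L\Gamma}A)}\circ\isocomp{\L\Gamma}A\circ\la\eta_\Gamma,b\ra$; the other triangle identity collapses $\R\epsilon\circ\eta\R$ to the identity, leaving $\isocomp{\L\Gamma}A\circ\la\eta_\Gamma,b\ra$. Hence $\transp{\transp b} = (\R\q_A)[\isocomp{\L\Gamma}A\circ\la\eta_\Gamma,b\ra] = \q_{\R A}[\la\eta_\Gamma,b\ra] = b$.

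For the substitution clauses, the chain $(\R_\Gamma A)[\gamma] = (\R A)[\eta_\Gamma\circ\gamma] = (\R A)[\R\L\gamma\circ\eta_\Delta] = \R(A[\L\gamma])[\eta_\Delta] = \R_\Delta(A[\L\gamma])$ uses naturality of $\eta$ and the fact that $\R$ commutes with substitution on types; the identical chain applied to $\R a$ in place of $\R A$ yields $\transp{a[\L\gamma]} = \transp a[\gamma]$. The remaining identity $\transp{b[\gamma]} = \transp b[\L\gamma]$ I would obtain from the bijection: setting $a = \transp b$, so $b = \transp a$ and $b[\gamma] = \transp a[\gamma] = \transp{a[\L\gamma]}$; applying the inverse transpose yields $\transp{b[\gamma]} = a[\L\gamma] = \transp b[\L\gamma]$. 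The main obstacle is the bookkeeping around $\isocomp{\L\Gamma}A$: every time $\R$ meets a comprehension pairing one must insert this isomorphism (or its inverse) at exactly the right spot so that projections and pairings on $\compr{\R\L\Gamma}{\R A}$ correspond to $\R$ of the analogous data on $\compr{\L\Gamma}A$. This coherence is precisely what is packaged into the definition of a CwF with adjunction; without it the transpose would not even be well-typed.
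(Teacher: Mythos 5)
The paper itself omits the proof of this lemma, so there is no official argument to compare against; judged on its own, your proof is correct and is the expected one. You verify well-typedness via the triangle identity $\epsilon\L\circ\L\eta=\id$ after pushing $\p_A$ past $\epsilon$ by naturality and using $\R\p_A\circ\isocomp{\L\Gamma}A=\p_{\R A}$; you obtain $\transp{\transp a}=a$ by showing $\sigma_{\transp a}=\la\id,a\ra$ through the identity $\isocomp{\L\Gamma}A\circ\la\id,\R a\ra=\R\la\id,a\ra$, naturality of $\epsilon$, and the triangle identity again; you obtain $\transp{\transp b}=b$ from naturality of $\eta$ and the other triangle identity $\R\epsilon\circ\eta\R=\id$; and the three substitution equations follow from naturality of $\eta$ and the compatibility of $\R$ with substitution, with the last one derived cleverly from the bijection rather than by direct computation. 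The one minor slip is purely terminological: where you say ``precomposing its inverse,'' you are actually postcomposing with $\la\R\p_A,\R\q_A\ra$ and invoking uniqueness from the universal property of comprehension, which is what your projection checks establish. Everything else is sound.
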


%
%
%
%
%
%

\subsection{Interpretation}
\label{sec:tick:calc:interp}

The notion of CwF with adjunction is almost sufficient for modelling the tick calculus, but to interpret tick weakening, we
will assume given a natural transformation $\p_\L : \L \to \id_{\cat C}$. Defining 
\[
\pret{\Gamma,\alpha:\tick\vdash} = \L \pret{\Gamma} 
\]
$\p_\L$ allows us to define a context projection $\p_{\Gamma'} : \pret{\Gamma, \Gamma'\vdash} \to \pret{\Gamma\vdash}$ 
by induction on $\Gamma'$ using $\p_\L$ in the case of tick variables. We can then define the rest of the interpretation as
\begin{align*}
 \pret{\Gamma, x : A, \Gamma' \vdash x : A} & = \q_A[\p_{\Gamma'}] & 
 \pret{\Gamma\vdash \later(\alpha\of \tick) A} & = \R_\pret{\Gamma}\pret{A} \\
\pret{\Gamma\vdash \lambda(\alpha\of \tick) t} & =  \transp{\pret{t}} & 
\pret{\Gamma,\alpha'\of\tick,\Gamma'\vdash \tapp[\tickA']t} & =  \transp{\pret{t}}[ \p_{\pret{\Gamma'}}]
\end{align*}

\begin{proposition}
 The above interpretation of the tick calculus into a CwF with adjunction and tick weakening $\p_\L$ is sound. 
\end{proposition}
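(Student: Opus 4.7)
The plan is to prove soundness by mutual induction on derivations of well-formed contexts, types, terms, and judgemental equalities, treating all non-tick rules by the standard CwF interpretation of dependent type theory. Before tackling the four tick rules I would establish a substitution lemma of the form $\pret{A[\sigma]} = \pret{A}[\pret{\sigma}]$ and $\pret{t[\sigma]} = \pret{t}[\pret{\sigma}]$ for all substitutions $\sigma$. The novelty over the ordinary argument is that the interpretations of $\later$-types and of $\lambda$-abstraction/tick-application use $\R_{(-)}(-)$ and $\transp{(-)}$, but the commutation equations in Lemma~\ref{lem:bijectivecorresp} are designed precisely so that these operations cooperate with substitution on both sides of the adjunction, so the induction step goes through.

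Next I would verify well-formedness for each of the four new rules. Context extension is immediate from $\pret{\Gamma,\tickA\of\tick\vdash} = \L\pret\Gamma$. The $\later$-formation rule uses that $\R_{\pret\Gamma}\pret A$ is by construction a type in $\pret\Gamma$ whenever $\pret A$ is a type in $\L\pret\Gamma$. Tick abstraction and tick application are handled by the two directions of the bijection of Lemma~\ref{lem:bijectivecorresp}: given $\cwftm{\L\pret\Gamma}{\pret t}{\pret A}$ we obtain $\cwftm{\pret\Gamma}{\transp{\pret t}}{\R_{\pret\Gamma}\pret A}$ for the introduction rule, and given $\cwftm{\pret\Gamma}{\pret t}{\R_{\pret\Gamma}\pret A}$ we obtain $\cwftm{\L\pret\Gamma}{\transp{\pret t}}{\pret A}$, which we then weaken along the projection $\p_{\pret{\Gamma'}}:\pret{\Gamma,\tickB\of\tick,\Gamma'\vdash}\to\L\pret\Gamma$ (defined by induction on $\Gamma'$ using $\p_\L$ for tick variables and the usual $\p_A$ for term variables) to obtain the interpretation of $\tapp[\tickB]t$. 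Since the name of a tick variable is not recorded in the semantic context, the syntactic renaming $[\tickB/\tickA]$ has no semantic effect and the target type matches.

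Finally, both the $\beta$ and $\eta$ equations reduce to the bijection $\transp{\transp{a}} = a$ of Lemma~\ref{lem:bijectivecorresp}, once the substitution lemma has been used to push the projection $\p_{\pret{\Gamma'}}$ through the $\lambda$/$\tapp$ pair and the surrounding context extensions have been matched. The main obstacle I foresee is precisely this interplay between $\transp{(-)}$ and substitution by $\p_{\pret{\Gamma'}}$: because $\Gamma'$ may interleave further tick variables with ordinary term variables, $\p_{\pret{\Gamma'}}$ is a composite of applications of $\p_\L$ and of comprehension projections $\p_A$, and one must check carefully that the naturality of $\transp{(-)}$ holds with respect to such composite substitutions. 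The third equation $\transp{b[\gamma]} = \transp{b}[\L\gamma]$ of Lemma~\ref{lem:bijectivecorresp} delivers exactly this commutation, but organising its iterated application along the inductive definition of $\p_{\pret{\Gamma'}}$ is where the verification of the elimination rule and the $\beta$ equation will require the most care.
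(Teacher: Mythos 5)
Your proposal is correct and matches the paper's own approach: the paper gives no separate argument for this proposition but observes that the tick calculus is a fragment of \clott\ and refers to the full \clott\ soundness proof, which proceeds exactly as you describe — induction on derivations simultaneously with a substitution lemma, using the adjoint bijection of Lemma~\ref{lem:bijectivecorresp} for the tick rules. One small refinement to your last paragraph: in verifying $\beta$ and $\eta$ the projection $\p_{\pret{\Gamma'}}$ sits \emph{outside} the nested transposes, so they cancel immediately via $\transp{\transp{a}} = a$ and nothing needs to be pushed through them; the naturality clause $\transp{b[\gamma]} = \transp{b}[\L\gamma]$ is instead what carries the substitution lemma through the tick-application case, where one commutes $\transp{(-)}$ past a lifted substitution $\L\pret{\sigma_0}$ and then proves separately, by induction on $\Gamma'$ using naturality of $\p_\L$ together with the comprehension projections, that $\L\pret{\sigma_0}\circ\p_{\pret{\Delta_1}} = \p_{\pret{\Gamma'}}\circ\pret\sigma$. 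So the interplay you flag is real, but it lives in the substitution lemma itself rather than in the equational checks, which follow formally once that lemma is in place.
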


%
%
%
%
%

\section{Clocked Type Theory}
\label{sec:clott}

Clocked Type Theory (\clott) is an extension of the tick calculus with guarded recursion and multiple clocks. Rather than
having a global notion of time as in the tick calculus, ticks are associated with clocks and clocks can be assumed and 
universally quantified. Judgements have a separate context of clock variables $\Delta$, for example, the typing judgement has
the form $\hastype{\Gamma}tA$, where $\Delta$ is a set of clock variables $\kappa_1,\dots, \kappa_n$. The clock context
can be thought of as a context of assumptions of the form $\kappa_1 : \clocktype, \dots, \kappa_n : \clocktype$ 
that appear to the left of the assumptions of $\Gamma$, except that $\clocktype$ is not a type. There are no operations 
for forming clocks, only clock variables. It is often convenient to have a single clock constant $\kappa_0$ and this
can be added by working in a context of a single clock variable.  

\begin{figure*}[tbp]
\begin{center}
\textbf{Type formation rules} 
\begin{mathpar}
  \inferrule*
  {\istype{\Gamma,\tickA:\kappa}{A}\\ \kappa \in \Delta}
  {\istype{\Gamma}{\latbind{\tickA}{\kappa} A}}
  \and
  \inferrule*
  {\istype[\Delta, \kappa]{\Gamma}{A}\\
    \wfcxt{\Gamma}}
  {\istype{\Gamma}{\forall \kappa . A}}
\end{mathpar}
\textbf{Typing rules}
\begin{mathpar}
  \inferrule*
  {\hastype[\Delta, \kappa]{\Gamma}{t}{A}\\
    \wfcxt{\Gamma}}
  {\hastype{\Gamma}{\Lambda\kappa. t}{\forall \kappa . A}}
  \and
  \inferrule*
  {\hastype{\Gamma}{t}{\forall \kappa . A}\\
    \kappa'\in\Delta}
  {\hastype{\Gamma}{t [\kappa']}{A \subst{\kappa}{\kappa'}}}
  \and
  \inferrule*
  {\hastype{\Gamma,\tickA:\kappa}{t}{A} \\ \kappa \in \Delta}
  {\hastype{\Gamma}{\tabs{\tickA}{\kappa} t}{\latbind{\tickA}{\kappa} A}}
  \and
  \inferrule*
  {\hastype{\Gamma}{t}{\latbind{\tickA}{\kappa} A}\\\wfcxt{\Gamma,\tickA':\kappa,\Gamma'}}
  {\hastype{\Gamma,\tickA': \kappa,\Gamma'}{\tapp[\tickA'] t}{A\toksubst{\tickA'}{\tickA}}}
  \and
  \inferrule*
  {\hastype[\Delta,\kappa]{\Gamma}{t}{\latbind{\tickA}{\kappa} A}\\ \wfcxt{\Gamma}\\\kappa'\in\Delta}
  {\hastype{\Gamma}{\tappc{(t\,\subst{\kappa}{\kappa'})}}{A\subst{\kappa}{\kappa'}\toksubst{\tickc}{\tickA}}} 
  \and
  \inferrule*
  {\hastype{\Gamma}{t}{\later^\kappa A \to A}}
  {\hastype{\Gamma}{\dfix^\kappa\,t}{\later^\kappa A}}
\end{mathpar}
\textbf{Judgemental equality}
  \begin{align*}
    (\Lambda\kappa.t)[\kappa']&\jeq  t \subst{\kappa'}\kappa &
    (\Lambda \kappa. t [\kappa]) &\jeq t & 
   \tapp{(\tabs{\tickA'}{\kappa} t)} &\jeq t\toksubst{\tickA}{\tickA'} \\ 
    \tabs{\tickA}{\kappa} (\tapp t) &\jeq t & 
    \tappc{(\dfix^\kappa \, t)} &\jeq t\,(\dfix^\kappa\,t) &
  \end{align*}
\caption{Selected typing and judgemental equality rules of Clocked Type Theory.}
\label{fig:clott:typing}
\end{center}
\end{figure*}

The rules for typing judgements and judgemental equality are given in Figure~\ref{fig:clott:typing}. These should 
be seen as an extension of a dependent type theory with dependent function and sum types, as well as 
extensional identity types. The rules for these are completely standard (ignoring the clock context), and thus are omitted from
the figure. We write $\jeq$ for judgemental equality and $\idty Atu$ for identity types. The model will also model the 
\emph{identity reflection} rule
\[
\inferrule*{\hastype{\Gamma}{p}{\idty Atu}}
{\jeqjud{\Gamma}tuA}
\]
of extensional type theory. 

The guarded fixed point operator $\dfix$ is useful in combination with guarded recursive types. Suppose for example that
we have a type of natural numbers $\nats$ and a type of guarded recursive streams $\gStr$ satisfying 
$\gStr \jeqty \nats \times \later^\kappa \gStr$. One can then
use $\dfix$ for recursive programming with guarded streams, e.g., when defining a constant stream of zeros as
$\dfix^\kappa (\lambda x. \pair 0x)$. The type of $\dfix$ ensures that only productive recursive definitions are typeable, e.g.,
$\dfix^\kappa (\lambda x . x)$ is not. 

The tick constant $\tickc$ gives a way to execute a delayed computation $t$ of type $\later^\kappa A$ to compute a value of
type $A$. In particular, if $t$ is a fixed point, application to the tick constant unfolds the fixed point once. This explains the need
to name ticks in \clott: substitution of $\tickc$ for a tick variable $\tickA$ in a term allows for all fixed points applied to $\tickA$ 
in the term to be unfolded. In particular, the names of ticks are crucial for the strong normalisation result for \clott\ in \cite{bahr2017clocks}. 

To ensure productivity, application of $\tickc$ must be restricted. In particular a term such as 
$\dfix^\kappa(\lambda x : \gStr . \tappc x)$ should not be well typed. The typing rule for application to the tick constant 
ensures this by assuming that the clock $\kappa$ associated to the delay is not free in the context of the term $t$. For example,
the rule 
\[
  \inferrule*
  {\hastype[\Delta,\kappa]{\Gamma}{t}{\latbind{\tickA}{\kappa} A}\\ \wfcxt{\Gamma}}
  {\hastype[\Delta,\kappa]{\Gamma}{\tappc{t}}{A\toksubst{\tickc}{\tickA}}} 
\]
is admissible, which can be proved using weakening lemma for the clock variable context. This rule, however, is not closed
under variable substitution, which is the motivation for the more general rule of Figure~\ref{fig:clott:typing}. The typing
rule is a bit unusual, in that it involves substitution in the term in the conclusion. We shall see in Section~\ref{sec:welldef}
that this causes extra proof obligations for welldefinedness of the denotational semantics.

Universal quantification over clocks allow for coinductive types to be encoded using guarded recursive types~\cite{atkey13icfp}. 
For example
$\Str \defeq \forall\kappa . \gStr$ is a coinductive type of streams. The head and tail maps $\sym{hd} : \Str \to \nats$ and 
$\sym{tl} : \Str \to \Str$ can be defined as
\begin{align*}
 \sym{hd}(xs) & \defeq \pi_1(xs[\kappa_0]) &
 \sym{tl}(xs) & \defeq  \Lambda \kappa . (\tappc{(\pi_2(xs[\kappa]))})
\end{align*}
using the clock constant $\kappa_0$. 

Finally we recall the \emph{clock irrelevance axiom}
\begin{equation} \label{eq:cirr}
\inferrule*
{\hastype{\Gamma}{t}{\forall\kappa . A} \\ \istype{\Gamma}A}
{\hastype\Gamma{\cirr t}{\forall{\kappa'} . \forall{\kappa''} . \idty{A}{t [\kappa']}{t [\kappa'']}}}
\end{equation}
crucial for correctness of the encoding of coinductive types~\cite{atkey13icfp}. Note that the hypothesis implies that 
$\kappa$ is not free  in $A$. This rule can be used to prove that $\forall\kappa . A$ is isomorphic 
to $A$ if $\kappa$ is not free in $A$. Likewise the \emph{tick irrelevance axiom} 
\begin{equation}\label{eq:tirr}
\inferrule*
{\hastype{\Gamma}{t}{\later^\kappa A}}
{\tirr t :  \latbind{\tickA}{\kappa}  \latbind{\tickA'}{\kappa} \idty{A}{\tapp{t}}{\tapp[\tickA'] t}}
\end{equation}
states that the identity of ticks is irrelevant for the equality theory, despite being crucial for the reduction semantics. 
Tick irrelevance implies fixed point unfolding
\[
\inferrule*
{\hastype[\Delta, \kappa]{\Gamma}{f}{\later^\kappa A \to A} \\ \wfcxt{\Gamma} \\ \kappa'\in \Delta}
{\hastype{\Gamma}{\pfix[\kappa'] \, f\subst{\kappa}{\kappa'}}{\latbind{\tickA}{\kappa}{\idty{A}{\tapp{(\dfix^{\kappa'} f\subst{\kappa}{\kappa'})}}{(f(\dfix^{\kappa'} f))\subst{\kappa}{\kappa'}}}}}
\]

The type theory \clott\ as defined in \cite{bahr2017clocks} also has guarded recursive types and a universe. 
We leave these for future work, see Section~\ref{sec:conclusion}.

\section{Presheaf semantics}
\label{sec:basic:model}

The setting for the denotational semantics of \clott\ is a category of covariant presheaves over a category $\catT$ of time objects.
This category has previously been used to give a model of \gdtt~\cite{GDTTmodel}. 

We will assume given a countably infinite set $\cv$ of (semantic) clock variables, for which we use $\lambda, \lambda',\dots$ to range over. 
A \emph{time object} is a pair $\timeobj\Theta\vartheta$ where $\Theta$ is a finite subset of $\cv$ and 
$\vartheta:\Theta\rightarrow \nats$ is a map giving the number of ticks left on each clock in $\Theta$. We will write the finite sets
$\Theta$ as lists writing e.g., $\Theta, \lambda$ for $\Theta \cup\{\lambda\}$ and $\subex\vartheta\lambda n$ for the extension
of $\vartheta$ to $\Theta,\lambda$, or indeed for the update of $\vartheta$, if $\vartheta$ is already defined on $\lambda$. 
A morphism $\timeobj\Theta\vartheta \to \timeobj{\Theta'}{\vartheta'}$ is a function $\tau:\Theta \rightarrow \Theta'$ such that 
$\vartheta'\tau\leq \vartheta$ in the pointwise order. The inequality allows for time to pass in a morphism, but morphisms can
also synchronise clocks in $\Theta$ by mapping them to the same clock in $\Theta'$, or introduce new clocks if $\tau$ is not
surjective. Define $\grtotal$ to be the category $\set^{\catT}$ of covariant presheaves
on $\catT$. The topos of trees can be seen as a restriction of this where time objects always have a single clock. 


%

The category $\grtotal$ contains a special object of clocks, given by the first projection $\clk \timeobj \Theta \vartheta = \Theta$. 
%
If $\Delta$ is a set, one can form the object $\clk^\Delta$ as 
$\clk^\Delta \timeobj\Theta\vartheta = \Theta^\Delta$. Let $\catT_\Delta$ be the category of elements of $\clk^\Delta$, i.e.,
the objects are triples $\triple\Theta\vartheta f$ where $\timeobj\Theta\vartheta \in \mathbb{T}$ and 
$f:\Delta \rightarrow \Theta$ and a morphism $\tau : \triple\Theta\vartheta f \rightarrow \triple{\Theta'}{\vartheta'}{f'}$ 
is  a morphism $\tau:\timeobj\Theta\vartheta \to \timeobj{\Theta'}{\vartheta'}$ such that $\tau\circ f = f'$. 
A clock context $\Delta$ will be
interpreted as $\clk^\Delta$ and contexts, types and terms in clock context $\Delta$ will be modelled in the category 
$\gr\Delta \defeq \set^{\catT_\Delta}$ of covariant presheaves over $\catT_\Delta$. If $F$ is a covariant presheaf  
over $\gr\Delta$ and $\tau : \triple\Theta\vartheta f \rightarrow \triple{\Theta'}{\vartheta'}{f'}$ and 
$x \in F\triple\Theta\vartheta f$ we will write $\tau\cdot x$ for $F(\tau)(x) \in F\triple{\Theta'}{\vartheta'}{f'}$. 
%
%
%

To describe the model of \clott, we start by fixing a clock context $\Delta$ and modelling the fragment of \clott\ excluding 
universal quantification over clocks and the tick constant $\tickc$. The resulting fragment is a version of the tick calculus
with one notion of tick for each clock $\kappa$ in $\Delta$. To model this, we need the structure of a CwF with adjunction on $\gr\Delta$ for each $\kappa$ in $\Delta$. Recall that, like any presheaf category, $\gr\Delta$ can be equipped with 
the structure of a CwF where contexts are objects, types in context $\Gamma$ are presheaves over the elements of $\Gamma$
and terms are sections. Precisely, a type over $\Gamma$ is a mapping associating a set $A(\gamma)$ to each 
$\gamma \in \Gamma \triple\Theta\vartheta f$ and to each 
$\tau : \triple\Theta\vartheta f \rightarrow \triple{\Theta'}{\vartheta'}{f'}$ a mapping 
$\tau\cdot(-) : A(\gamma) \to A(\tau\cdot\gamma)$ such that $\id\cdot x = x$ and $(\rho\tau)\cdot x = \rho\cdot(\tau\cdot x)$ 
for all $x, \tau$ and $\rho$. A term is a mapping associating to each $\gamma$ an element $t(\gamma) \in A(\gamma)$ such
that $t(\tau\cdot \gamma) = \tau\cdot t(\gamma)$. We often make the underlying $\catT_\Delta$ object explicit writing
$t_{\triple\Theta\vartheta f}(\gamma)$.  

As an example of a model of a type, recall the type of guarded streams satisfying 
$\gStr[\kappa] \jeqty \nats \times \later \gStr[\kappa]$ from Section~\ref{sec:clott}.
This is a closed type in a clock context $\Delta$ (assuming $\kappa \in \Delta$), and so will be interpreted as a presheaf 
in $\gr\Delta$ defined as
$\pret{\Str^\kappa} \triple \Theta \vartheta f =\nats^{\vartheta(f(\kappa))+1} \times \{\ast\}$.
We will assume that the products in this associate to the right, so that this is the type of tuples of the form
$(n_{\vartheta(f(\kappa))},(\dots, (n_0, \ast))\dots)$. This is needed to model the equality 
$\gStr[\kappa] \jeqty \nats \times \later \gStr[\kappa]$, rather than just an isomorphism of types.
Given a predicate $x\of \Nat\vdash P$, we can lift it to a predicate 
$y:\gStr[\kappa] \vdash \hat{P}$ satisfying $\hat{P}(x:xs) \jeqty P(x) \times \latbind \tickA\kappa{\hat{P}(xs\,[\tickA])}$ 
as in Section~\ref{section:delayed substitution and ticks}, by defining 
\[
\pret{\hat{P}}_{\triple \Theta \vartheta f}(n_{\vartheta(f(\kappa))},(\dots, (n_0, \ast))\dots)
= \{(x_{\vartheta(f(\kappa))},(\dots, (x_0, \ast))\dots) \mid \forall i . x_i \in \pret P_{\triple \Theta \vartheta f}(n_i) \}
\]
It is a simple calculation (using the definitions below) that these interpretations model the type equalities mentioned above.

\subsection{Adjunction structure on $\gr\Delta$}

For the adjunction, recall that in the topos of trees the functor $\tlater$ is defined as 
$(\tlater F)(n + 1) = Fn$ and $(\tlater F)(0) = \{\ast\}$. This has a left adjoint $\tearlier$ defined as $(\tearlier F) n = F (n+1)$. 
The right adjoint generalises in a straight forward way to
the multiclock setting of \clott: If $F$ is in $\gr\Delta$, define
\begin{align*}
& (\tlater^\kappa F) \triple\Theta\vartheta f = 
\begin{cases} 
F \triple{\Theta}{\vartheta[f(\kappa)-]}f &\vartheta(f(\kappa))>0\\
\{\ast\}& \text{otherwise}
\end{cases}
\end{align*}
where $\vartheta[f(\kappa)-] (f(\kappa))= \vartheta(f(\kappa)) - 1$ and $\vartheta[f(\kappa)-](\lambda) = \vartheta(\lambda)$ 
for $\lambda \neq f(\kappa)$.
This is the same definition as used in the \gdtt\ model of~\cite{GDTTmodel}.

\begin{lemma} \label{lem:later:types:terms}
 The functor $\tlater^\kappa$ extends to types and terms.
%
\end{lemma}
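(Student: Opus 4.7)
The plan is to define $\tlater^\kappa A$ and $\tlater^\kappa t$ pointwise, mirroring the case analysis already present in the definition of $\tlater^\kappa$ on contexts. For $\cwfty\Gamma A$ and $\gamma \in (\tlater^\kappa \Gamma)\triple\Theta\vartheta f$, I would set $(\tlater^\kappa A)(\gamma) = A(\gamma)$ when $\vartheta(f(\kappa)) > 0$, noting that $\gamma$ then lives naturally in $\Gamma\triple\Theta{\vartheta[f(\kappa)-]}f$, and $(\tlater^\kappa A)(\ast) = \{\ast\}$ otherwise. Terms are defined analogously, with $(\tlater^\kappa t)_{\triple\Theta\vartheta f}(\gamma) = t_{\triple\Theta{\vartheta[f(\kappa)-]}f}(\gamma)$ in the non-trivial case and $\ast$ in the trivial case.

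Next I would verify functoriality of $\tlater^\kappa A$ under restriction along a morphism $\tau: \triple\Theta\vartheta f \to \triple{\Theta'}{\vartheta'}{f'}$. Since $\vartheta'(f'(\kappa)) = \vartheta'(\tau(f(\kappa))) \leq \vartheta(f(\kappa))$, morphisms can never cross from the trivial case up to the non-trivial case. Thus only two subcases arise: source and target both non-trivial, where the restriction is inherited from $A$ after observing that $\tau$ also defines a morphism between the ticked-down time objects; and source non-trivial with target trivial, where the restriction is the terminal map to $\{\ast\}$. Naturality of $\tlater^\kappa t$ reduces similarly to naturality of $t$.

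The substitution commutations $(\tlater^\kappa A)[\tlater^\kappa \gamma] = \tlater^\kappa(A[\gamma])$ and the analogous identity for terms are then direct pointwise checks: both sides unfold to the same expression in the non-trivial case and collapse to $\{\ast\}$ (respectively $\ast$) otherwise. For the comprehension isomorphism I would observe that, pointwise, $\tlater^\kappa \Gamma . \tlater^\kappa A$ and $\tlater^\kappa(\Gamma.A)$ both consist of pairs $(\gamma, a)$ with $\gamma \in \Gamma\triple\Theta{\vartheta[f(\kappa)-]}f$ and $a \in A(\gamma)$ in the non-trivial case, and both reduce to singletons in the trivial case. Hence $\isocomp{\Gamma}{A}$ is the evident identification, and its being inverse to $\la \tlater^\kappa \p_A, \tlater^\kappa \q_A\ra$ is immediate from the standard presheaf definitions of $\p$, $\q$ and pairing.

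The main obstacle is the bookkeeping across the boundary at $\vartheta(f(\kappa)) = 0$. All substantive work is concentrated in the unified treatment of the two cases: the non-trivial case merely restates the naturality of $A$ and $t$, while the trivial case exploits only that $\{\ast\}$ is terminal. Once the case analysis is set up, each verification reduces to a short computation, and the asymmetry between $\tearlier^\kappa$ and $\tlater^\kappa$ is precisely what makes the right adjoint, and not the left, extend cleanly to types and terms.
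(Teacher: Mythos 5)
Your definitions coincide with the paper's (same case split on whether $\vartheta(f(\kappa))=0$, same pointwise description of $\tlater^\kappa A$, $\tlater^\kappa t$, and the same comprehension map $\isocomp{\Gamma}{A}$ given by the identity in the nontrivial case and the terminal map otherwise), and the verifications you spell out are exactly the ones the paper treats as routine. The proof is correct and takes the same route; your added remark that the inequality $\vartheta'\tau \leq \vartheta$ makes the decrement $\vartheta[f(\kappa)-]$ compatible with restriction while the increment is not is precisely the point the paper highlights separately when motivating the non-obvious form of the left adjoint.
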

\begin{proof}
We just give the definitions. 
For $\gamma \in (\tlater^\kappa \pret{\Gamma}) \triple \Theta \vartheta f$  define
\[(\tlater^\kappa \pret{A})_{\triple \Theta \vartheta f}(\gamma) = 
            \begin{cases} 
                      \{\ast\} & \vartheta(f(\kappa)) = 0\\
                      \pret{A}_{\triple \Theta {\vartheta[f(\kappa)-]} f}(\gamma) & \text{otherwise}
                      \end{cases}
                      \]
The case for terms is similar. 

The isomorphism $\isocomp{\Gamma}{A}$ is given by
$
{\isocomp{\Gamma}{A}}_{\triple \Theta \vartheta f} = 
\begin{cases}
\la \ast,\ast\ra \mapsto \ast & \vartheta(f(\kappa)) = 0\\
\id & \text{otherwise}
\end{cases}
$
\end{proof}
At first sight it would seem that one can define a left adjoint to the above functor given by 
$(\tearlier^\kappa F)\triple{\Theta}{\vartheta}{f} = F\triple\Theta{\vartheta[f(\kappa) +]}{f}$, 
where $\vartheta[f(\kappa) +]$ is defined similarly to
$\vartheta[f(\kappa)-]$.
Unfortunately, $\tearlier^\kappa F$ so described is not a presheaf because it has no well-defined action on maps since a map 
$\tau: \triple{\Theta}{\vartheta}f \rightarrow \triple{\Theta'}{\vartheta'}{f'}$ does not necessarily induce a map 
$\triple{\Theta}{\vartheta[f(\kappa)+]}f \to \triple{\Theta'}{\vartheta'[f'(\kappa)+]}{f'}$: If 
$\tau(f(\kappa)) = \tau(\lambda)$ there is no guarantee that 
$\vartheta'[f'(\kappa)+](\tau(\lambda)) \leq \vartheta[f(\kappa)+](\lambda)$.

To get the correct description of the left adjoint consider the set $f^{-1}(f(\kappa)) \subseteq\Delta$ of syntactic
clocks synchronised with $\kappa$ by $f$. Given a morphism 
$\tau : \triple{\Theta}{\vartheta}{f} \to \triple{\Theta'}{\vartheta'}{f'}$, more clocks can be synchronised 
with $\kappa$ by $f'$ than $f$, but never fewer. If we think of time as flowing in the direction of morphisms, 
the left adjoint must take into account all the possible ways that $\kappa$ could have been
synchronised with fewer syntactic clocks ``in the past''. Such a past is given by a subset $X \subset f^{-1}(f(\kappa))$
such that $\kappa \in X$. 

%
%

\begin{lemma} \label{lem:earlier}
The functor $\tlater^\kappa$ has a left adjoint $\tearlier^\kappa$ given by
\begin{align*}
\tearlier^\kappa F \triple\Theta\vartheta f  & = \displaystyle\coprod_{\kappa\in X \subset f^{-1}(f(\kappa))} F \triple\Theta\vartheta f [X,\kappa+] 
\end{align*}
where 
$\triple\Theta\vartheta f[X,\kappa+] = \triple{\Theta,\#_\Theta}{\vartheta[\#_\Theta\mapsto \vartheta(f(\kappa))+1]}{f[X\mapsto \#_\Theta]}$
for $\#_\Theta$ a chosen clock name fresh for $\Theta$. 
\end{lemma}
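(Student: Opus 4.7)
The plan is to give $\tearlier^\kappa$ an action on morphisms making it a functor, and then exhibit a natural bijection $\hom(\tearlier^\kappa F, G) \cong \hom(F, \tlater^\kappa G)$.

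For the action on morphisms, given $\tau : \triple\Theta\vartheta f \to \triple{\Theta'}{\vartheta'}{f'}$, the condition $\tau \circ f = f'$ forces $f^{-1}(f(\kappa)) \subseteq f'^{-1}(f'(\kappa))$, so every $X$ indexing a summand of $\tearlier^\kappa F\triple\Theta\vartheta f$ also indexes a summand of $\tearlier^\kappa F\triple{\Theta'}{\vartheta'}{f'}$. For each such $X$ I extend $\tau$ to a morphism $\tilde\tau : \triple\Theta\vartheta f[X,\kappa+] \to \triple{\Theta'}{\vartheta'}{f'}[X,\kappa+]$ by sending the fresh clock $\#_\Theta$ to the fresh clock $\#_{\Theta'}$. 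The only nontrivial verification is that tick budgets are respected at the fresh clock, which amounts to $\vartheta'(f'(\kappa))+1 \leq \vartheta(f(\kappa))+1$, immediate from $\tau$ being a morphism. Then $(\tearlier^\kappa F)(\tau)(X,x) = (X, F(\tilde\tau)(x))$ defines the action, and functoriality follows from functoriality of $F$ together with compositionality of $\tilde{(-)}$.

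For the adjunction I construct the two transposes directly. Given $\phi : \tearlier^\kappa F \to G$ and $x \in F \triple\Theta\vartheta f$, put $\phi^\sharp(x) = \ast$ when $\vartheta(f(\kappa)) = 0$; otherwise take $X = f^{-1}(f(\kappa))$ and let $\iota$ be the map from $\triple\Theta\vartheta f$ into $\triple\Theta{\vartheta[f(\kappa)-]}f[X,\kappa+]$ that sends $f(\kappa)$ to the fresh clock and fixes every other element, and set $\phi^\sharp(x) = \phi(X, F(\iota)(x))$. The choice $X = f^{-1}(f(\kappa))$ is forced: any smaller $X$ leaves clocks $\lambda \in f^{-1}(f(\kappa)) \setminus X$ with the conflicting constraints $\iota(f(\lambda)) = f(\lambda) = f(\kappa)$ and $\iota(f(\lambda)) = \iota(f(\kappa)) = \#_\Theta$. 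In the other direction, for $\psi : F \to \tlater^\kappa G$ and $(X,x) \in \tearlier^\kappa F \triple\Theta\vartheta f$ with $x \in F\triple\Theta\vartheta f[X,\kappa+]$, the modified tick map takes the value $\vartheta(f(\kappa))+1 > 0$ at $\#_\Theta$, so $\psi(x)$ lies in $G\triple{\Theta,\#_\Theta}{\vartheta[\#_\Theta\mapsto\vartheta(f(\kappa))]}{f[X\mapsto\#_\Theta]}$; letting $\sigma_X$ collapse $\#_\Theta$ back to $f(\kappa)$, set $\psi^\flat(X,x) = G(\sigma_X)(\psi(x))$.

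The main verification is mutual inversion together with naturality. For $(\phi^\sharp)^\flat(X,x)$, naturality of $\phi$ lets me move $G(\sigma_X)$ inside as the action of $\sigma_X$ on the coproduct; the composite $\tilde{\sigma_X} \circ \iota$ works out to the identity on $\triple\Theta\vartheta f[X,\kappa+]$, so the expression collapses to $\phi(X,x)$. Symmetrically, for $((\psi^\flat)^\sharp)(y)$, naturality of $\psi$ rewrites $F(\iota)$ as the action of $\iota$ on $\tlater^\kappa G$, the induced morphism between decremented time objects composed with $\sigma_X$ is again the identity, and we recover $\psi(y)$. Naturality of the resulting bijection in $F$ and $G$ is immediate from the pointwise formulas. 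The main obstacle is the bookkeeping of nested fresh clock names across the two layers of $[X,\kappa+]$-extensions, and verifying that the canonical morphisms compose to genuine identities rather than just isomorphisms; this is exactly what pins down the specific choice $X = f^{-1}(f(\kappa))$ in the formula for $\phi^\sharp$.
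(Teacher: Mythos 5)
Your proposal verifies the adjunction via the hom-set bijection $\hom(\tearlier^\kappa F, G) \cong \hom(F, \tlater^\kappa G)$, whereas the paper constructs the unit $\eta^\kappa$ and counit $\epsilon^\kappa$ explicitly and checks the two triangle identities. The concrete ingredients are, however, identical: your action on morphisms agrees with the paper's $\tau[X,\kappa+]$, your $\iota$ is exactly the paper's $s^\kappa_{\triple\Theta\vartheta f}$, your $\sigma_X$ is the paper's $r^{X,\kappa}_{\triple\Theta\vartheta f}$, and your insistence that $\phi^\sharp$ use the maximal $X=f^{-1}(f(\kappa))$ is exactly the first component of the paper's $\eta$. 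Your mutual-inversion calculations $(\phi^\sharp)^\flat = \phi$ and $(\psi^\flat)^\sharp = \psi$ unwind, after pushing the maps through naturality of $\phi$ and $\psi$, to the two triangle identities; the composites you check to be identities are precisely the paper's $r^{X,\kappa}\circ s^\kappa[\kappa-]$ and $r^{X,\kappa}[X,\kappa+]\circ s^\kappa$. So the two arguments are really the same proof in the two standard presentations of an adjunction, and your explanation of \emph{why} the choice $X = f^{-1}(f(\kappa))$ is forced is a nice piece of motivation the paper does not give.

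One place where both your write-up and the paper's proof lean on a claim of routineness that deserves a second look: you dismiss naturality (that $\phi^\sharp$ and $\psi^\flat$ are morphisms of presheaves, and that the bijection is natural in $F$ and $G$) as "immediate from the pointwise formulas," and the paper similarly never spells out naturality of $\eta^\kappa$ and $\epsilon^\kappa$. This is not entirely innocent here, because the index set $\{X : \kappa\in X\subseteq f^{-1}(f(\kappa))\}$ of the coproduct defining $\tearlier^\kappa F\triple\Theta\vartheta f$ varies with the time object while the presheaf action $\tau\cdot(X,\gamma)=(X,\tau[X,\kappa+]\cdot\gamma)$ keeps $X$ fixed, yet $\phi^\sharp$ (and $\eta$) selects the $f$-dependent index $f^{-1}(f(\kappa))$. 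Pinning down exactly why the resulting squares commute across a morphism $\tau$ along which $f^{-1}(f(\kappa))$ grows is the one genuinely delicate step in this lemma and should be written out rather than waved at.
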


Finally, the projection $\p_{\tearlier^\kappa} : \tearlier^\kappa \to \id$ maps an element
$\pair X\gamma$ in $\tearlier^\kappa F \triple\Theta\vartheta f$ to $\chi \cdot \gamma$ where 
\[
  \chi : \triple{\Theta}{\vartheta}f[X,\kappa+] \to \triple\Theta\vartheta f
\]
is defined as $\chi(\#_\Theta) = f(\kappa)$ and $\chi(\lambda) = \lambda$ for $\lambda\in \Theta$. 
Collectively, Lemmas~\ref{lem:later:types:terms} and \ref{lem:earlier} together with the projection $\p_{\tearlier^\kappa}$
state that for each $\kappa$, $\gr\Delta$ carries the structure of a model of the tick calculus. This is enough to model 
the tick abstractions and applications of \clott. 

The adjoint correspondent  $\transp{\p_{\tearlier^\kappa}}:\id \rightarrow \tlater^\kappa$ to $\p_{\tearlier^\kappa}$
maps an element $\gamma \in F\triple\Theta\vartheta f$ to its restriction in $F \triple\Theta {\vartheta[f(\kappa)-]} f$.
This is the map referred to as $\nxt$ in~\cite{GDTTmodel}. 
Moreover, a simple calculation shows that the interpretation of $\later^\kappa A$, i.e., $\latbind\tickA\kappa A$ for $\tickA$ not 
free in $A$, is the same as in~\cite{GDTTmodel}, namely
\[
\pret{\istype{\Gamma}{\later^\kappa A}}_{\triple{\Theta}\vartheta f}(\gamma) = 
\pret{A}_{\triple {\Theta}{\vartheta[f(\kappa)-]} f} (\gamma|_{\triple {\Theta}{\vartheta[f(\kappa)-]} f})
\]
We can thus define the interpretation of $\dfix$ as in~\cite{GDTTmodel} by induction on $\vartheta(f(\kappa))$:
\begin{align*}
\pret{\dfix\,t}_{\triple\Theta\vartheta f}(\gamma) = 
\begin{cases}
\ast & \vartheta(f(\kappa)) = 0\\
\pret{t\;(\dfix\;t)}_{\triple\Theta{\vartheta[f(\kappa)-]}f}(\gamma|_{\triple\Theta{\vartheta[f(\kappa)-]}f}) & \text{Otherwise}
\end{cases}
\end{align*}

Finally we note soundness of the tick irrelevance axiom (\ref{eq:tirr}).

\begin{proposition}
\label{prop:tirr}
If $\hastype{\Gamma}{t}{\later^\kappa A}$ then 
$\pret{\hastype{\Gamma, \tickA : \kappa, \tickA' : \kappa}{\tapp t}{A}} =
\pret{\hastype{\Gamma, \tickA : \kappa, \tickA' : \kappa}{\tapp[\tickA']t}{A}}$
\end{proposition}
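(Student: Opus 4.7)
My plan is to unfold both interpretations pointwise using the rule from Section~\ref{sec:tick:calc:interp} and then apply naturality of $\pret t$ to reduce the equality to the commutativity of two composites of collapse morphisms in $\catT_\Delta$.

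For $\pret{\tapp t}$ (where $\tapp = \tapp[\tickA]$), the rule applies to the premise $\Gamma \vdash t : \later^\kappa A$ with $\Gamma' = (\tickA' : \kappa)$, giving $\pret{\tapp t} = \transp{\pret t}[\p_\L(\tearlier^\kappa \pret\Gamma)]$. For $\pret{\tapp[\tickA']t}$, the rule applies with $\Gamma' = ()$ to the weakened premise $\Gamma, \tickA : \kappa \vdash t : \later^\kappa A$; interpreting the weakening as substitution along $\p_\L(\pret\Gamma)$ and using the substitution law $\transp{b[\gamma]} = \transp b[\L\gamma]$ of Lemma~\ref{lem:bijectivecorresp} yields $\pret{\tapp[\tickA']t} = \transp{\pret t}[\tearlier^\kappa \p_\L(\pret\Gamma)]$.

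I then evaluate both sides at a typical element $(X, (Y, \delta)) \in (\tearlier^\kappa \tearlier^\kappa \pret\Gamma)\triple\Theta\vartheta f$, where $Y \subset X \subset f^{-1}(f(\kappa))$ and $\delta \in \pret\Gamma$ at the doubly-extended time object $\triple\Theta\vartheta f[X,\kappa+][Y,\kappa+]$. Unfolding $\p_\L(\tearlier^\kappa \pret\Gamma)$ and $\tearlier^\kappa \p_\L(\pret\Gamma)$ via Lemma~\ref{lem:earlier}, the two morphisms turn out to send $(X, (Y, \delta))$ into different summands of $\tearlier^\kappa \pret\Gamma \triple\Theta\vartheta f$ (indexed respectively by $Y$ and by $X$), so the two substitutions do not coincide. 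Unpacking $\transp{\pret t}$ further and applying the naturality of $\pret t : \pret\Gamma \to \tlater^\kappa \pret A$, however, both values simplify to $\pret A(\sigma)(\pret t_\delta)$ for a morphism $\sigma$ of $\catT_\Delta$ going from the appropriate restriction of the doubly-extended time object back to $\triple\Theta\vartheta f$. A direct calculation shows that in both cases $\sigma$ is the same map: identity on $\Theta$ and sending both freshly introduced clock names to $f(\kappa)$. Equivalently, the two composites of collapse morphisms obtained by the two orderings agree in $\catT_\Delta$.

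The main obstacle is pinning down the functorial action of $\tearlier^\kappa F$ on the collapse morphisms of Lemma~\ref{lem:earlier}; this action is not spelled out explicitly in the paper and must be extracted from the fact that $\tearlier^\kappa$ is a left-adjoint functor. Once this action is made explicit, identifying the summand index and the induced morphism on extended time objects is routine, and the final equality of the two composites of collapse morphisms is immediate from direct calculation.
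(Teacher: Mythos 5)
Your approach is genuinely different from the paper's, and it is essentially correct. The paper proceeds purely at the level of CwF-with-adjunction algebra: it first establishes the interchange Lemma~\ref{lem:laterofnextisnextoflater} ($\tlater^\kappa\transp{\p_{\tearlier^\kappa}}\circ\varphi = \tlater^\kappa\varphi\circ\transp{\p_{\tearlier^\kappa}}$ for $\varphi:F\to\tlater^\kappa G$), uses it to prove the intermediate Lemma~\ref{lem:tirr} ($\pret{\lambda\tickA'.\,t} = \pret{\lambda\tickA'.\lambda\tickA.\,\tapp[\tickA']t}$) by a sequence of abstract rewriting steps on the transpose $\transp{(-)}$, and then obtains Proposition~\ref{prop:tirr} by applying the $\beta$-rule twice. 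You instead unfold both sides pointwise in $\grtotal$ and reduce the claim to a concrete equality of morphisms in $\catT_\Delta$. Both approaches ultimately depend on the same concrete feature of the model — that the map collapsing a fresh clock name back to $f(\kappa)$ can be reordered with extending by a further fresh name — so neither is purely abstract, but yours makes this visible directly while the paper packages it into the interchange lemma. Your route is arguably more elementary, though it requires more care with the somewhat heavy coproduct description of $\tearlier^\kappa$.

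You correctly observe that $\p_\L(\tearlier^\kappa\pret\Gamma)$ and $\tearlier^\kappa\p_\L(\pret\Gamma)$ send $(X,(Y,\delta))$ into different summands (indexed by $Y$ and by $X$ respectively), so the argument must wait until after the transpose is applied. The gap you flag — the functorial action of $\tearlier^\kappa F$ on morphisms — is real but small: the action is $\tearlier^\kappa F(\tau)(X,\gamma) = (X,\tau[X,\kappa+]\cdot\gamma)$ where $\tau[X,\kappa+]$ sends the fresh name of the source to the fresh name of the target and agrees with $\tau$ elsewhere (and the condition $\kappa\in X\subseteq f^{-1}(f(\kappa))$ is preserved because a morphism of $\catT_\Delta$ can only merge clocks, never split them). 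With this in hand your final calculation goes through: both composites of collapse morphisms from $\triple\Theta\vartheta f[X,\kappa+][Y,\kappa+]$ (with the $\kappa$-coordinate decremented) down to $\triple\Theta\vartheta f$ are the map that is the identity on $\Theta$ and sends both fresh names $\#_\Theta$ and $\#_{\Theta,\#_\Theta}$ to $f(\kappa)$, and naturality of $\pret t$ then gives the equality. One small point of care: your first identification $\pret{\tapp[\tickA']t}=\transp{\pret t}[\tearlier^\kappa\p_\L(\pret\Gamma)]$ relies on two facts that should be cited explicitly — that weakening is interpreted as substitution along the context projection, and the identity $\transp{b[\gamma]}=\transp b[\L\gamma]$ from Lemma~\ref{lem:bijectivecorresp}; you do mention the latter, but the former deserves a line.
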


%

\section{Substitutions}
\label{sec:substitutions}

Having described the interpretation of the fragment of \clott\ that lives within a fixed clock context $\Delta$ it remains to 
describe the interpretation of the universal quantification over clocks and of the tick constant $\tickc$. 
Quantification over clocks can be seen
as a dependent product over a type of clocks, and should therefore be modelled as a right adjoint to weakening in the
clock context. Weakening is an example of a substitution and, as we shall see, the tick constant $\tickc$ will
also be modelled using a form of substitution. We therefore first study substitutions, which are non-standard in \clott\
because of the two contexts, and because of the unusual typing rules for ticks. 

\subsection{Syntactic substitutions}

A syntactic substitution from $\wfcxt{\Gamma}$ to $\wfcxt[\Delta']{\Gamma'}$  is a pair $\pair\nu\sigma$
of a substitution $\nu$ of clocks for clocks and a substitution $\sigma$ of terms for variables and ticks for ticks variables. 
Substitutions are formed according to the following rules. 
\begin{itemize}
\item If $\nu : \Delta' \to \Delta$ is a map of sets, then $\pair\nu\cdot : \wfcxt{\Gamma} \to \wfcxt[\Delta']{\cdot}$
\item If $\pair\nu\sigma : \wfcxt{\Gamma} \to \wfcxt[\Delta']{\Gamma'}$ and 
$\hastype\Gamma{t}{A\pair\nu\sigma}$ then 
$\pair\nu{\subex\sigma xt} : \wfcxt{\Gamma} \to \wfcxt[\Delta']{\Gamma', x : A}$
\item If $\pair\nu\sigma : \wfcxt{\Gamma} \to \wfcxt[\Delta']{\Gamma'}$ and $\wfcxt{\Gamma, \tickA : \nu(\kappa), \Gamma''}$ 
and $\wfcxt[\Delta']{\Gamma', \tickB : \kappa}$ are welformed then 
$\pair\nu{\subex\sigma\tickB\tickA} :\wfcxt{\Gamma, \tickA : \nu(\kappa), \Gamma''} \to
\wfcxt[\Delta']{\Gamma', \tickB : \kappa}$
\item If $\pair\nu\sigma : \wfcxt\Gamma \to \wfcxt[\Delta']{\Gamma'}$, $\kappa \notin \Delta'$ and $\kappa' \in \Delta$, then
\[
\pair{\subex\nu\kappa{\nu(\kappa')}}{\subex\sigma{\tickA}\tickc} : \wfcxt\Gamma \to \wfcxt[\Delta',\kappa]{\Gamma',\tickA : \kappa}
\]
\end{itemize}
Here $A\pair\nu\sigma$ is the result of substituting $A$ along $\pair\nu\sigma$ which is defined in the standard way. 

\subsection{Semantic substitutions}
\label{sec:semantic:subst}


The clock substitution $\nu$ gives rise to a functor $\catT_{\Delta} \to \catT_{\Delta'}$ mapping an object 
$\triple\Theta\vartheta f$ to $\triple\Theta\vartheta{f\nu}$, and this induces a functor $\nu^* : \gr{\Delta'} \to \gr\Delta$
by $(\nu^*F)\triple\Theta\vartheta f = F\triple\Theta\vartheta{f\nu}$. This functor extends to a morphism of 
CwFs~\cite{dybjer1996}, in particular it maps a type $A$ in context $\Gamma$ in 
the CwF structure of $\gr{\Delta'}$ to a type $\nu^*A$ in context $\nu^*\Gamma$ the CwF structure of $\gr{\Delta}$, and
likewise for terms. For example, $(\nu^*A)(\gamma)$ for $\gamma\in \nu^*\Gamma\triple\Theta\vartheta f =
\Gamma\triple\Theta\vartheta{f\nu}$ is defined as $A\gamma$. 
Moreover, this map commutes on the nose with comprehension and substitution. For example,
if $A$ is a type in context $\Gamma$ in $\gr{\Delta'}$ and $\gamma : \Gamma' \to \Gamma$, then 
$(\nu^*A) [\nu^*\gamma] = \nu^*(A[\gamma])$. Moreover, it commutes with $\tlater$ in the following sense.

\begin{lemma}
 If $\nu : \Delta'\to \Delta$ and $\kappa\in \Delta'$ then $\nu^*\circ\tlater^\kappa = \tlater^{\nu(\kappa)}\circ \nu^*$.
\end{lemma}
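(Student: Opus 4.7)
The plan is to verify the equality by a direct unfolding of definitions, object by object, on the underlying presheaf, and then to check that the action on morphisms in $\catT_\Delta$ agrees on both sides. Since both functors land in $\gr\Delta$, I fix a presheaf $F\in\gr{\Delta'}$ and an object $\triple\Theta\vartheta f$ of $\catT_\Delta$, so that $f\colon\Delta\to\Theta$ and $f\nu\colon\Delta'\to\Theta$.

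The key observation is that $(f\nu)(\kappa)=f(\nu(\kappa))$, so that the two test conditions for the case split in the definition of $\tlater$ coincide. In the nonzero case, I compute
\[
(\nu^{*}\tlater^{\kappa}F)\triple\Theta\vartheta f
= (\tlater^{\kappa}F)\triple\Theta\vartheta{f\nu}
= F\triple{\Theta}{\vartheta[(f\nu)(\kappa)-]}{f\nu},
\]
while
\[
(\tlater^{\nu(\kappa)}\nu^{*}F)\triple\Theta\vartheta f
= (\nu^{*}F)\triple{\Theta}{\vartheta[f(\nu(\kappa))-]}{f}
= F\triple{\Theta}{\vartheta[f(\nu(\kappa))-]}{f\nu},
\]
and these are equal because $(f\nu)(\kappa)=f(\nu(\kappa))$. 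In the zero case both sides evaluate to $\{\ast\}$ for the same reason.

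Next I would verify that the two functors act identically on morphisms $\tau\colon\triple\Theta\vartheta f\to\triple{\Theta'}{\vartheta'}{f'}$ in $\catT_\Delta$. Applying $\nu^{*}$ to $\tlater^{\kappa}F$ uses the same underlying function $\tau\colon\Theta\to\Theta'$ viewed as a morphism $\triple\Theta\vartheta{f\nu}\to\triple{\Theta'}{\vartheta'}{f'\nu}$ in $\catT_{\Delta'}$, and then applies the $\tlater^{\kappa}$-action, which restricts $\tau$ to a morphism between the $[(f\nu)(\kappa)-]$ and $[(f'\nu)(\kappa)-]$ variants; applying $\tlater^{\nu(\kappa)}$ after $\nu^{*}$ restricts the same $\tau$ along $f(\nu(\kappa))$ and $f'(\nu(\kappa))$. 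By the same identity $(f\nu)(\kappa)=f(\nu(\kappa))$ (and the analogous one for $f'$), the two morphism actions coincide. Finally, since both constructions are defined pointwise in $F$ by the above formulas, the equality is automatically natural in $F$, giving the desired equality of functors.

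I do not expect any real obstacle: the proof is essentially bookkeeping around the identity $(f\nu)(\kappa)=f(\nu(\kappa))$, with a case split on $\vartheta(f(\nu(\kappa)))$. The only place one has to be slightly careful is to make sure the equality is strict (not just a canonical isomorphism), which it is because $\nu^{*}$ is defined by strict reindexing along $(-)\circ\nu$ and the decrement $\vartheta[(-)-]$ depends only on the value $f(\nu(\kappa))\in\Theta$, not on how this value is named.
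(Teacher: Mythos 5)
Your proof is correct, and it is the natural direct verification; the paper states this lemma without proof, evidently regarding it as routine, and your unfolding of definitions (exploiting $(f\nu)(\kappa) = f(\nu(\kappa))$ for both the case split and the morphism action) is exactly what that omitted argument would be. Your closing remark about strictness of the equality is a worthwhile observation, since the whole point of the lemma in the paper is that the commutation holds on the nose rather than up to isomorphism.
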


The interpretation of a substitution $\pair\nu\sigma$ is a morphism
\[
\pret{\pair\nu\sigma} : \pret{\wfcxt\Gamma} \to \nu^*\pret{\wfcxt[\Delta']\Gamma'}
\]
in $\gr{\Delta}$, which we will define below. 
But first we state the substitution lemma, which must be proved by induction on terms and types
simultaneously with the definition of the interpretation, as is standard for models of dependent type
theory.  

%
%

\begin{lemma} \label{lem:substitution}
Let $\pair{\nu}{\sigma} : \wfcxt[\Delta]{\Gamma} \to \wfcxt[\Delta']{\Gamma'}$ be a substitution and let
${\Gamma'}\vdash_{\Delta'}  J$ be a judgement of a wellformed type or a typing judgement. Then
$\pret{ J\pair\nu\sigma} = (\nu^*\pret{ J})[\pret{\pair\nu\sigma}]$. 
\end{lemma}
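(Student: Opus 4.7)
The plan is to prove the lemma by induction on the derivation of $J$, simultaneously defining $\pret{\pair{\nu}{\sigma}}$ by induction on the formation rules for substitutions (which follow the shape of the target context $\Gamma'$). The two inductions are genuinely mutually dependent: defining $\pret{\pair{\nu}{\subex{\sigma}{x}{t}}}$ needs $\pret{t}$, whose compatibility with substitution is precisely the statement of the lemma.

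First I would give the interpretation of $\pret{\pair\nu\sigma}$ by cases on the last substitution-formation rule. The empty case is the unique map into the terminal object $\nu^*\term = \term$. A term-variable extension $\pair{\nu}{\subex{\sigma}{x}{t}}$ is interpreted as the CwF pairing $\la\pret{\pair\nu\sigma},\pret{t}\ra$, using that $\nu^*$ preserves comprehension on the nose. A tick-variable extension $\pair{\nu}{\subex{\sigma}{\tickB}{\tickA}}$ out of $\wfcxt{\Gamma,\tickA:\nu(\kappa),\Gamma''}$ is interpreted by first pre-composing the inductively-given map from $\tearlier^{\nu(\kappa)}\pret{\Gamma}$ to $\nu^*\pret{\Gamma'}$ with the iterated projection $\p_{\pret{\Gamma''}}$, then applying $\tearlier^{\nu(\kappa)}$, and using the commutation $\tearlier^{\nu(\kappa)}\circ\nu^* = \nu^*\circ\tearlier^\kappa$ (the adjoint transpose of the lemma on $\tlater$) to land in $\nu^*\pret{\Gamma',\tickB:\kappa}$. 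The tick-constant clause is postponed as it needs the semantic substitution of $\tickc$ introduced in Section~\ref{sec:tickc}.

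Next I would carry out the induction on the derivation of $J$. The standard cases (variables, $\Pi$, $\Sigma$, identity types, $\forall$-quantification) follow from the CwF coherences of $\nu^*$ together with the induction hypothesis; the variable case reduces to observing that projections compose as expected, including across tick variables, which matches the recursive shape of $\pret{\pair\nu\sigma}$. The $\later$-type formation case uses $\nu^*\circ\tlater^\kappa = \tlater^{\nu(\kappa)}\circ\nu^*$ together with the comprehension interaction from Lemma~\ref{lem:later:types:terms}. The tick abstraction and tick application cases reduce to the naturality of the adjoint correspondence $\transp{(-)}$ in substitutions, which is exactly what the three displayed equations in Lemma~\ref{lem:bijectivecorresp} provide when instantiated with $\gamma = \pret{\pair\nu\sigma}$.

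The main obstacle will be the tick-constant case, both in defining $\pret{\pair{\nu}{\subex{\sigma}{\tickA}{\tickc}}}$ and in verifying the substitution lemma for the elimination rule for $\tappc$, whose conclusion already contains a substitution. Here the semantic $\tickc$-substitution must commute with arbitrary $\pair\nu\sigma$, and in particular be compatible with the action of $\nu$ on the freshly-introduced clock. This is the welldefinedness issue flagged for Section~\ref{sec:welldef}, so the present lemma can only be fully established once the construction of Section~\ref{sec:tickc} is in place; for the fragment modelled so far (without $\tickc$) the proof goes through by the structural argument above.
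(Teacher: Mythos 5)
Your overall architecture matches the paper's: the substitution lemma is proved by induction on judgements, mutually recursive with the definition of $\pret{\pair\nu\sigma}$ by cases on the last substitution-formation rule, and the tick-constant clause is correctly postponed to Section~\ref{sec:tickc}. The empty and term-variable cases are right.

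However, there is a genuine gap in the tick-variable clause. You claim the commutation $\tearlier^{\nu(\kappa)}\circ\nu^* = \nu^*\circ\tearlier^\kappa$ as ``the adjoint transpose of the lemma on $\tlater$.'' This equality does \emph{not} hold, and the paper explicitly says so: Section~\ref{sec:semantic:subst} states that $\nu^*(\tearlier^\kappa\Gamma)$ is in general not equal to $\tearlier^{\nu(\kappa)}(\nu^*\Gamma)$. Taking the mate of the equality $\nu^*\tlater^\kappa = \tlater^{\nu(\kappa)}\nu^*$ produces only a natural transformation
\[
e^{\kappa,\nu}_\Gamma = \epsilon^{\nu(\kappa)}_{\nu^*(\tearlier^\kappa\Gamma)}\circ\tearlier^{\nu(\kappa)}\nu^*(\eta^\kappa_\Gamma) : \tearlier^{\nu(\kappa)}(\nu^*\Gamma)\to\nu^*(\tearlier^\kappa\Gamma),
\]
which is not invertible. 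Concretely, at an object $\triple\Theta\vartheta f$ the coproduct defining $\nu^*(\tearlier^\kappa\Gamma)$ ranges over subsets $Y\ni\kappa$ of $(f\nu)^{-1}(f\nu(\kappa))$, while the coproduct defining $\tearlier^{\nu(\kappa)}(\nu^*\Gamma)$ ranges over subsets $X\ni\nu(\kappa)$ of $f^{-1}(f\nu(\kappa))$; the map sends $X$ to $\nu^{-1}X$, which is generally neither injective nor surjective. So the definition of the tick-variable clause must be stated using $e^{\kappa,\nu}$ in the one available direction (as the paper does, $\pret{\pair\nu{\subex\sigma\tickB\tickA}} = e^{\kappa,\nu}_{\pret{\Gamma'}}\circ\tearlier^{\nu(\kappa)}\pret{\pair\nu\sigma}\circ\p_{\Gamma''}$), not via an identification of functors.

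This misstep propagates to the tick-abstraction and tick-application cases. You attribute them to ``naturality of the adjoint correspondence $\transp{(-)}$ in substitutions'' as in Lemma~\ref{lem:bijectivecorresp}, but that lemma only speaks to substitutions within a \emph{single} CwF-with-adjunction, i.e.\ fixed clock context. Across a clock substitution $\nu$ you are comparing two different adjunctions ($\tearlier^\kappa\dashv\tlater^\kappa$ in $\gr{\Delta'}$ and $\tearlier^{\nu(\kappa)}\dashv\tlater^{\nu(\kappa)}$ in $\gr\Delta$) mediated by the non-invertible map $e^{\kappa,\nu}$. What actually makes the calculation go through is the pair of equations
$\tlater^{\nu(\kappa)}e^{\kappa,\nu}\circ\eta^{\nu(\kappa)}_{\nu^*} = \nu^*(\eta^\kappa)$ and $\nu^*(\epsilon^\kappa)\circ e^{\kappa,\nu}_{\tlater^\kappa} = \epsilon^{\nu(\kappa)}_{\nu^*}$, which express exactly the compatibility of $e^{\kappa,\nu}$ with the units and counits of the two adjunctions. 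You would need to isolate and prove these (they require a separate argument from the mate calculus or the concrete description of $e$), and thread them through the $\latbind\tickA\kappa$-cases; the bijective-correspondence lemma alone does not suffice.
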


The main difficulty for defining the interpretation of substitutions is that 
the operator $\tearlier^\kappa$ does not commute with clock substitutions in the sense that $\nu^* (\tearlier^\kappa \Gamma)$ is not necessarily equal to $\tearlier^{\nu(\kappa)} (\nu^* \Gamma)$.
However, we can define a map in the relevant direction:
\begin{align*}
e^{\kappa, \nu}_\Gamma  = \epsilon_{\nu^*(\tearlier^\kappa \Gamma)}^{\nu(\kappa)} \circ \tearlier^{\nu(\kappa)} \nu^*(\eta_\Gamma^\kappa) 
&\, :\, \tearlier^{\nu(\kappa)} (\nu^*\Gamma) \to \nu^*(\tearlier^\kappa \Gamma)
\end{align*}
where $\eta_\Gamma^\kappa : \Gamma \to \tlater^\kappa\tearlier^\kappa \Gamma$ is the unit of the adjunction and
\[\epsilon_{\nu^*(\tearlier^\kappa \Gamma)}^{\nu(\kappa)} \, : \,
\tearlier^{\nu(\kappa)}\tlater^{\nu(\kappa)}\nu^*(\tearlier^\kappa \Gamma) \to \nu^*(\tearlier^\kappa \Gamma)\]
is the counit. The composition type checks since 
$\nu^*\tlater^\kappa = \tlater^{\nu(\kappa)} \nu^*$.

The map $e^{\kappa, \nu}$ has a simple description in the model: $e^{\kappa, \nu}_\Gamma$ maps an element $(X,\gamma)$ 
in
\begin{align*}
 \tearlier^{\nu(\kappa)} (\nu^* \Gamma) \triple \Theta \vartheta f 
 &= \coprod_{\substack{\nu(k)\in X \\ f(X) = f(\nu(\kappa))}} (\nu^* \Gamma) \triple {\Theta,\#_\Theta} {\vartheta[\#_\Theta \mapsto \vartheta((f\circ \nu)(\kappa))+1]} {f[X\mapsto \#_\Theta]}\\ 
&= \coprod_{\substack{\nu(k)\in X \\ f(X) = f(\nu(\kappa))}} \Gamma \triple {\Theta,\#_\Theta} {\vartheta[\#_\Theta \mapsto \vartheta((f\circ \nu)(\kappa))+1]} {f\circ \nu [\inv{\nu} X\mapsto \#_\Theta]}
\end{align*}
to $(\inv{\nu} X,\gamma)$ in the set $\nu^*(\tearlier^\kappa\Gamma) \triple \Theta \vartheta f$ which equals
\begin{align*}
   \tearlier^\kappa\Gamma \triple \Theta \vartheta {f\circ \nu} 
&=\coprod_{\substack{\kappa\in Y\\ f(\nu(Y)) = f(\nu(\kappa))}} \Gamma \triple {\Theta,\#_\Theta} {\vartheta[\#_\Theta \mapsto \vartheta((f\circ \nu)(\kappa))+1]} {f\circ \nu [Y\mapsto \#_\Theta]}
\end{align*}

The interpretation of substitutions is defined as
\begin{align*}
  \pret\cdot & = x \mapsto \star \\
  \pret{\pair\nu{\subex\sigma xt}} & = \cpair{\pret{\pair\nu\sigma}}{\pret t} \\
  \pret{\pair\nu{\subex\sigma\tickB\tickA}} & = e^\kappa_{\pret{\Gamma'}}\circ\tearlier^{\nu(\kappa)}{\pret{\pair\nu\sigma}} \circ \p_{\Gamma''} \\
  \pret{\pair{\subex\nu\kappa{\nu(\kappa')}}{\subex\sigma{\tickA}\tickc}} 
  & = \nu^*{\pret{\pair{\basicsub{\kappa}{\kappa'}}{\basicsub{\tickA}{\tickc}}}}\circ \pret{\pair\nu\sigma}
\end{align*}
where we have assumed the types as in the rules for forming substitutions and $\p_{\Gamma''}$ is the context 
projection defined as in Section~\ref{sec:tick:calc:interp}. The last case uses
$\pret{\pair{\basicsub{\kappa}{\kappa'}}{\basicsub{\tickA}{\tickc}}}$ which will be defined in Section~\ref{sec:tickc} below.

The rest of this section is a sketch proof of the substitution lemma for the fragment of \clott\ modelled so far, i.e.,
excluding quantification over clocks and $\tickc$. As we extend the interpretation we will also extend the proof of the
substitution lemma. 

 The proof is by induction over judgements and is simultaneous with the definition of the interpretation of substitutions.
 The cases of standard dependent type theory (dependent functions and sums, identity types) can be essentially reduced to the 
 standard proof of the substitution lemma for dependent type theory in presheaf models as follows
 (although the non-standard notion of substitution requires some new lemmas). Since $\gr\Delta$ is
 the object of presheaves over $\catT_\Delta$, the category of elements of $\pret\Delta = \clk^\Delta$, given a context 
 $\wfcxt\Gamma$ one can form the comprehension $\compr{\pret\Delta}{\pret\Gamma}$ as an object of $\grtotal$. Types 
 and terms in context $\compr{\pret\Delta}{\pret\Gamma}$ in the CwF structure associated to $\grtotal$ are then
 in bijective correspondence with those over $\pret\Gamma$ in the CwF structure of $\gr\Delta$. A substitution
 $\pair\nu\sigma : \wfcxt\Gamma \to \wfcxt[\Delta']{\Gamma'}$ induces a morphism 
 $\compr{\pret\Delta}{\pret\Gamma} \to \compr{\pret{\Delta'}}{\pret{\Gamma'}}$ in $\grtotal$ and the substitution defined 
 above corresponds to substitution along this map. Thus the interpretation of the standard type theoretic constructions
 are the same as the standard ones in the presheaf model $\grtotal$, and the corresponding cases of the substitution lemma 
 can be proved similarly to the standard proof of the substitution lemma for presheaf models of type theory. 
  
 The cases corresponding to the constructions from the tick calculus follow from the following two equations.
\begin{align}
 \tlater^{\nu(\kappa)}e^{\kappa,\nu} \circ \eta^{\nu(\kappa)}_{\nu^*} & = \nu^*(\eta^\kappa) 
 && :  \nu^* \to \tlater^{\nu(\kappa)} \nu^* \tearlier^\kappa = \nu^* \tlater^\kappa \tearlier^\kappa \label{eq:e:comm:eta} \\
 \nu^*(\epsilon^\kappa) \circ e^{\kappa,\nu}_{\tlater^\kappa} & = \epsilon^{\nu(\kappa)}_{\nu^*}
 && :  \tearlier^{\nu(\kappa)}\nu^* \tlater^\kappa  = \tearlier^{\nu(\kappa)}\tlater^{\nu(\kappa)}\nu^* \to \nu^* \label{eq:e:comm:eps}
\end{align}

For example,  the case of $\latbind \tickA\kappa A$ can be proved as follows (writing $\pret\sigma$ for $\pret{\pair\nu\sigma}$)
\begin{align*}
 \nu^*\pret{\istypeshort[\Delta']{\Gamma'}{\latbind \tickA\kappa A}}[\pret\sigma] & = 
 \nu^*\left(\left(\tlater^{\kappa}\pret{\istypeshort[\Delta']{\Gamma', \tickA : \kappa}{A}}\right)[\eta^\kappa_{\pret{\Gamma'}}]\right)[\pret\sigma] \\
& = \left(\tlater^{\nu(\kappa)}\nu^*\pret{\istypeshort[\Delta']{\Gamma', \tickA : \kappa}{A}}\right)[\nu^*\eta^\kappa_{\pret{\Gamma'}}][\pret\sigma] \\
& = \left(\tlater^{\nu(\kappa)}\nu^*\pret{\istypeshort[\Delta']{\Gamma', \tickA : \kappa}{A}}\right)[\tlater^{\nu(\kappa)}e^\kappa \circ \eta^{\nu(\kappa)}_{\nu^*\pret{\Gamma'}} \circ \pret\sigma] \\
& = \left(\tlater^{\nu(\kappa)}\nu^*\pret{\istypeshort[\Delta']{\Gamma', \tickA : \kappa}{A}}\right)[\tlater^{\nu(\kappa)}(e^\kappa \circ \tearlier^{\nu(\kappa)}\pret\sigma)\circ \eta^{\nu(\kappa)}_{\nu^*\pret{\Gamma'}}] \\
& = \left(\tlater^{\nu(\kappa)}\left(\nu^*\pret{\istypeshort[\Delta']{\Gamma', \tickA : \kappa}{A}}\right)[e^\kappa \circ \tearlier^{\nu(\kappa)}\pret\sigma]\right) [\eta^{\nu(\kappa)}_{\nu^*\pret{\Gamma'}}] \\
& = \left(\tlater^{\nu(\kappa)}\pret{\istypeshort{\Gamma, \tickB : \nu(\kappa)}{A\pair{\nu}{\subex\sigma\tickA\tickB}}}\right) [\eta^{\nu(\kappa)}_{\nu^*\pret{\Gamma'}}] \\
& = \pret{\istypeshort{\Gamma}{\latbind \tickB{\nu(\kappa)}{(A\pair{\nu}{\subex\sigma\tickA\tickB})}}} \\
& = \pret{\istypeshort{\Gamma}{(\latbind \tickA{\kappa}A)\pair{\nu}\sigma}} 
\end{align*}

%

\section{Interpretation of clock quantification} 
\label{sec:forall}

Universal quantification over clocks should be modelled as a right adjoint to the semantic correspondent to clock weakening. 
Syntactically, clock weakening 
from context $\wfcxt[\Delta]{\Gamma}$ to $\wfcxt[\Delta, \kappa]{\Gamma}$ corresponds to the 
substitution $\pair i{\id_\Gamma}$, where $i$ is the inclusion of $\Delta$ into $\Delta, \kappa$. 
Recall from Section~\ref{sec:substitutions} 
that types and terms in context $\pret{\wfcxt[\Delta]\Gamma}$ in the CwF structure of 
$\gr\Delta$ correspond to types and terms in context $\compr{\pret\Delta}{\pret{\wfcxt[\Delta]\Gamma}}$ in $\grtotal$, and recall
that $\pret\Delta = \clk^\Delta$.  
By the substitution lemma, clock weakening from context $\wfcxt[\Delta]{\Gamma}$ to $\wfcxt[\Delta, \kappa]{\Gamma}$
corresponds to substitution along the composition
\[
\compr{\pret{\Delta, \kappa}}{\pret{\wfcxt[\Delta, \kappa]{\Gamma}}} 
\xrightarrow{\compr{\pret{\Delta, \kappa}}\pret{\pair i{\id_\Gamma}}}
\compr{\pret{\Delta, \kappa}}{i^*\pret{\wfcxt[\Delta]{\Gamma}}} \to
\compr{\pret{\Delta}}{\pret{\wfcxt[\Delta]{\Gamma}}} 
\]
All such substitutions have right adjoints, but to get a simple description of this (and to satisfy the substitution lemma), we
will give a concrete description which can be briefly described as follows: To model $\forall\kappa . A$, open
a fresh semantic clock $\#$, map $\kappa$ to $\#$ 
and take the limit of $\pret A$ as $\#$ ranges over all natural numbers. To type this 
description we need the following lemma.

\begin{lemma} \label{lem:(i,id):invertible}
 Let $\triple\Theta\vartheta f$ be an object of $\gr\Delta$, let $\#$ be fresh for $\Theta$ and let 
 $\iota : \Theta \to \Theta, \#$ be the inclusion.
%
 The component of $\pret{\pair{i}{\id_\Gamma}}$ at $\triple{\Theta, \#}{\vartheta\basicsub\# n}{f\basicsub{\kappa}{\#}}$ 
 is an isomorphism
\[
  \pret{\wfcxt[\Delta, \kappa]{\Gamma}}_{\triple{\Theta, \#}{\vartheta\basicsub\# n}{f\basicsub{\kappa}{\#}}} \to 
  i^* \pret{\wfcxt[\Delta]{\Gamma}}_{\triple{\Theta, \#}{\vartheta\basicsub\# n}{f\basicsub{\kappa}{\#}}}
  = \pret{\wfcxt[\Delta]{\Gamma}}_{\triple{\Theta, \#}{\vartheta\basicsub\# n}{\iota f}}
\] 
If $\kappa'\in \Delta$, the inverse to 
$\pret{\pair{i}{\id_\Gamma}}_{\triple{\Theta, \#}{\vartheta\basicsub\# n}{f\basicsub{\kappa}{\#}}}$ is given by
the component of 
\begin{align*}
i^*(\pret{\pair{\id_\Delta\basicsub{\kappa}{\kappa'}}{\id_\Gamma}}) : 
i^* \pret{\wfcxt[\Delta]{\Gamma}} \to 
i^*(\id_\Delta\basicsub{\kappa}{\kappa'})^* \pret{\wfcxt[\Delta,\kappa]{\Gamma}}
= \pret{\wfcxt[\Delta, \kappa]{\Gamma}}
\end{align*}
at ${\triple{\Theta, \#}{\vartheta\basicsub\# n}{f\basicsub{\kappa}{\#}}})$. 
\end{lemma}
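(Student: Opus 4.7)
\begin{proofsketch}
I would prove both halves of the claim by a joint induction on the context $\Gamma$, maintaining the invariant that at any object of $\catT_{\Delta, \kappa}$ at which $\kappa$ is the unique preimage of its own image (the freshness condition satisfied by $\triple{\Theta, \#}{\vartheta\basicsub{\#}{n}}{f\basicsub{\kappa}{\#}}$), the forward map $\pret{\pair{i}{\id_\Gamma}}$ and the candidate inverse $i^*\pret{\pair{\id_\Delta\basicsub{\kappa}{\kappa'}}{\id_\Gamma}}$ both reduce to the identity on a common underlying set, and are therefore mutually inverse.

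The pivotal observation is that $e^{\lambda, i}$ acts as the identity at such an object for every $\lambda \in \Delta$. Unpacking the concrete description of $e^{\kappa, \nu}$ following Lemma~\ref{lem:earlier}, both $\tearlier^\lambda i^* \Gamma$ and $i^* \tearlier^\lambda \Gamma$ at $\triple{\Theta, \#}{\vartheta\basicsub{\#}{n}}{f\basicsub{\kappa}{\#}}$ decompose as coproducts indexed by the same collection $\{X : \lambda \in X \subseteq f^{-1}(f(\lambda))\}$, since $(f\basicsub{\kappa}{\#})^{-1}(f(\lambda))$ lies in $\Delta$ (as $\# \notin \Theta$), and the action $(X,\gamma)\mapsto(i^{-1}X,\gamma)$ reduces to the identity because $X \subseteq \Delta$ forces $i^{-1}X = X$. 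The corresponding summands agree because the lifted clock assignment $(f\basicsub{\kappa}{\#})[X \mapsto \#_2]$ restricts along $i$ exactly to $(\iota f)[X \mapsto \#_2]$.

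The induction is then routine. The empty case is trivial. A variable extension $\Gamma, x : A$ uses $\cpair{\pret{\pair{i}{\id_\Gamma}}}{\pret x}$, which preserves the identity property on the added coordinate by the inductive hypothesis and the fact that $\pret x$ is the variable projection. A tick extension $\Gamma_0, \tickA : \lambda, \Gamma_1$ uses the formula $e^{\lambda, i} \circ \tearlier^\lambda \pret{\pair{i}{\id_{\Gamma_0}}} \circ \p_{\Gamma_1}$: the projection lands in the tick-extended subobject; applying $\tearlier^\lambda$ to the inductive map gives the identity on each summand, because every lifted object is again of the freshness form required by the invariant (the original fresh $\#$ remains unused by $f$ and the newly adjoined $\#_2$); and the outer $e^{\lambda, i}$ is the identity by the pivotal observation. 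For the candidate inverse, the functor $i^*$ shifts evaluation to the $\catT_\Delta$-object $\triple{\Theta, \#}{\vartheta\basicsub{\#}{n}}{\iota f}$, and a structurally parallel induction handles each $e^{\lambda, \id_\Delta\basicsub{\kappa}{\kappa'}}$ arising from a tick extension, using the same freshness of $\#$ together with the fact that $\id_\Delta\basicsub{\kappa}{\kappa'}$ restricts to the identity on $\Delta$.

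The main obstacle will be the tick-extension step, where one must carefully unfold $\pret{\pair\nu{\subex\sigma\tickB\tickA}}$, check that the inductive hypothesis transports through $\tearlier^\lambda$ summand-by-summand, and confirm that the freshness invariant is preserved at every lifted object. A further subtlety is the identification of $i^*(\id_\Delta\basicsub{\kappa}{\kappa'})^*\pret{\wfcxt[\Delta, \kappa]{\Gamma}}$ with $\pret{\wfcxt[\Delta, \kappa]{\Gamma}}$ needed to type the candidate inverse at the specified component: at this component it reduces to showing that $\pret{\wfcxt[\Delta, \kappa]\Gamma}$ agrees at $\triple{\Theta, \#}{\vartheta\basicsub{\#}{n}}{f\basicsub{\kappa}{\#}}$ and $\triple{\Theta, \#}{\vartheta\basicsub{\#}{n}}{f\basicsub{\kappa}{f(\kappa')}}$, which once more follows by analysing the tick coproducts using freshness of $\#$ and the observation that $\Gamma$ does not mention $\kappa$.
\end{proofsketch}
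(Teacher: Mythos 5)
Your proposal follows essentially the same strategy as the paper's proof: induction on $\Gamma$, with the pivotal observation that $e^{\lambda,i}$ is the identity at an object whose clock assignment is fresh for $\kappa$, because every $X$ in the coproduct index already lies in $\Delta$ (so $i^{-1}X = X$). The variable and tick cases are handled exactly as in the paper. Where you differ is in packaging: the paper proves the isomorphism by induction and then obtains the inverse formula by invoking a separately-proved composition identity (Lemma~\ref{lem:subst:comp:i:kappa:to:kappa'}, stating that $(\id_\Delta\basicsub{\kappa}{\kappa'})^* \pret{\pair{i}{\id_\Gamma}} \circ \pret{\pair{\id_\Delta\basicsub{\kappa}{\kappa'}}{\id_\Gamma}}$ is the identity), whereas you propose a single joint induction carrying both directions. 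Both routes go through; the paper's modular factoring is the one that gets reused elsewhere, but the joint version is defensible and arguably more direct at this spot.

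Two small inaccuracies worth fixing. First, in the tick-extension case the context cannot have residual material $\Gamma_1$ \emph{after} the tick: the interpretation $\pret{\pair{i}{\id_\Gamma}}$ when $\Gamma = \Gamma_0, \tickA : \lambda$ is $e^{\lambda,i}\circ \tearlier^{\lambda}\pret{\pair{i}{\id_{\Gamma_0}}}$ with no $\p_{\Gamma_1}$ factor, because for the identity substitution the source and target contexts coincide, so the ``$\Gamma''$'' slot in the formula for $\pret{\pair\nu{\subex\sigma\tickB\tickA}}$ is empty. Second, it is an overstatement to say both maps ``reduce to the identity on a common underlying set'': the domain and codomain at the relevant component are genuinely distinct sets (they differ via the inductive isomorphism on the tails of the context), and what you actually establish is that the maps are mutually inverse, not literally identity. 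What \emph{does} reduce to the identity at the distinguished component is each $e^{\lambda,i}$ factor, which is the right observation to isolate.

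Your final remark that the typing identification $i^*(\id_\Delta\basicsub{\kappa}{\kappa'})^*\pret{\wfcxt[\Delta,\kappa]{\Gamma}} = \pret{\wfcxt[\Delta,\kappa]{\Gamma}}$ needs verification at the distinguished component (and fails at a general component, since $\kappa$ can enter a $\tearlier^{\kappa''}$ coproduct index once $g(\kappa)$ collides with $g(\kappa'')$) is a correct observation, and indeed one the paper's proof leaves implicit.
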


We can now define the interpretation of universal quantification over clocks: 
\begin{align*}
&\pret{\istype{\Gamma}{\forall\kappa . A}}_{\triple\Theta\vartheta f}(\gamma) \defeq \\
& \{ (\omega_n) \in \displaystyle\prod_{n\in \nats} {\pret{\istype[\Delta, \kappa]{\Gamma}A}}_{{\triple{\Theta, \#}{\vartheta\basicsub\# n}{\subex f{\kappa}{\#}}}} (\inv{\pret{\pair{i}{\id_\Gamma}}}(\iota\cdot \gamma)) 
  \mid \forall n . \omega_n = (\omega_{n+1})|_{n} \}
\end{align*}
Here, by assumption $\gamma \in \pret{\wfcxt[\Delta]\Gamma}_{\triple\Theta\vartheta f}(\gamma)$ and so 
\[
\iota \cdot \gamma \in \pret{\wfcxt[\Delta]\Gamma}_{\triple{\Theta, \#}{\vartheta\basicsub{\#}n} {\iota f}} 
= i^*\pret{\wfcxt[\Delta]\Gamma}_{\triple{\Theta, \#}{\vartheta\basicsub{\#}n} {f\basicsub{\kappa}{\#}}} 
\] 
which means that $\inv{\pret{\pair{i}{\id_\Gamma}}}(\iota\cdot \gamma) \in 
\pret{\wfcxt[\Delta, \kappa]\Gamma}_{\triple{\Theta, \#}{\vartheta\basicsub{\#}n} {f\basicsub{\kappa}{\#}}}$
and thus the type of each $\omega_n$ is a welldefined set. In the condition for the families, $(\omega_{n+1})|_{n}$
is the restriction of $\omega_{n+1}$ along the map ${\triple{\Theta, \#}{\vartheta\basicsub{\#}{n+1}} {f\basicsub{\kappa}{\#}}} 
\to {\triple{\Theta, \#}{\vartheta\basicsub{\#}n} {f\basicsub{\kappa}{\#}}}$ given by the identity on $\Theta, \#$. 
For the interpretation of terms define 
\begin{align*}
\pret{\hastype{\Gamma}{\Lambda \kappa . t}{\forall\kappa . A}}_{\triple\Theta\vartheta f} 
& = (\pret{\hastype[\Delta,\kappa]{\Gamma}{t}{A}}_{{\triple{\Theta, \#}{\vartheta\basicsub\# n}{\iota f}}} (\inv{\pret{\pair{i}{\id_\Gamma}}}(\iota\cdot \gamma)))_n \\
\pret{\hastype{\Gamma}{t [\kappa']}{A \subst{\kappa}{\kappa'}}}_{\triple\Theta\vartheta f} 
& = \basicsub{\#}{f(\kappa')}\cdot(\pret{\hastype{\Gamma}{t}{\forall\kappa .A}}_{\triple\Theta\vartheta f} (\gamma))_{\vartheta(f(\kappa'))}
\end{align*}
To see that the latter type checks, note first that 
\[
\pret{\hastype{\Gamma}{t}{\forall\kappa .A}}_{\triple\Theta\vartheta f} (\gamma))_{\vartheta(f(\kappa'))}
\in 
{\pret{\istype[\Delta, \kappa]{\Gamma}A}}_{{\triple{\Theta, \#}{\vartheta\basicsub\#{\vartheta(f(\kappa'))}}{\subex f{\kappa}{\#}}}} (\inv{\pret{\pair{i}{\id_\Gamma}}}(\iota\cdot \gamma)) 
\]
and since $\basicsub{\#}{f(\kappa')} : {\triple{\Theta, \#}{\vartheta\basicsub\#{\vartheta(f(\kappa'))}}{\subex f\kappa\#}} \to 
{\triple{\Theta}{\vartheta}{\subex f\kappa{f(\kappa')}}}$, the right hand side of the definition is in 
\begin{align*}
 & {\pret{\istype[\Delta, \kappa]{\Gamma}A}}_{\triple{\Theta}{\vartheta}{\subex f\kappa{f(\kappa')}}} (\basicsub{\#}{f(\kappa')} \cdot\inv{\pret{\pair{i}{\id_\Gamma}}}(\iota\cdot \gamma))  \\
 = &  
(\subex{\id_\Delta}{\kappa}{\kappa'})^*(\pret{\istype[\Delta, \kappa]{\Gamma}A})_{\triple{\Theta}{\vartheta}{f}} (\basicsub{\#}{f(\kappa')} \cdot\inv{\pret{\pair{i}{\id_\Gamma}}}(\iota\cdot \gamma))  \\
 = &  
(\subex{\id_\Delta}{\kappa}{\kappa'})^*(\pret{\istype[\Delta, \kappa]{\Gamma}A})_{\triple{\Theta}{\vartheta}{f}} (\basicsub{\#}{f(\kappa')} \cdot{\pret{\pair{\subex{\id_\Delta}{\kappa}{\kappa'}}{\id_\Gamma}}}(\iota\cdot \gamma))  \\
 = &  
 (\subex{\id_\Delta}{\kappa}{\kappa'})^*(\pret{\istype[\Delta, \kappa]{\Gamma}A})_{\triple{\Theta}{\vartheta}{f}} ({\pret{\pair{\subex{\id_\Delta}{\kappa}{\kappa'}}{\id_\Gamma}}}(\basicsub{\#}{f(\kappa')} \cdot\iota\cdot \gamma))  \\
 = &  
(\subex{\id_\Delta}{\kappa}{\kappa'})^*(\pret{\istype[\Delta, \kappa]{\Gamma}A})_{\triple{\Theta}{\vartheta}{f}} ({\pret{\pair{\subex{\id_\Delta}{\kappa}{\kappa'}}{\id_\Gamma}}}(\gamma))  \\
 = &  
\pret{\istype[\Delta]{\Gamma}{A\subst{\kappa}{\kappa'}}}_{\triple{\Theta}{\vartheta}{f}} (\gamma) 
\end{align*}
where the last equality is by the substitution lemma. 

\begin{lemma} \label{lem:beta:eta:clock:quant}
 The $\beta$ and $\eta$ rules for universal quantification over clocks are sound for the interpretation. 
\end{lemma}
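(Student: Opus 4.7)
The plan is to establish the two judgemental equalities $\pret{(\Lambda\kappa.t)[\kappa']}=\pret{t\subst{\kappa'}{\kappa}}$ and $\pret{\Lambda\kappa.t[\kappa]}=\pret{t}$ separately. Both reduce to the welldefinedness calculation already displayed in the text just after the interpretation of $\pret{t[\kappa']}$, combined with the substitution lemma (Lemma~\ref{lem:substitution}) and Lemma~\ref{lem:(i,id):invertible}.

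For $\beta$, I would first unfold $\pret{(\Lambda\kappa.t)[\kappa']}_{\triple\Theta\vartheta f}(\gamma)$ using the definitions of $\pret{t[\kappa']}$ and $\pret{\Lambda\kappa.-}$, which produces
\[
\basicsub{\#}{f(\kappa')}\cdot \pret{t}_{\triple{\Theta,\#}{\vartheta\basicsub{\#}{\vartheta(f(\kappa'))}}{f\basicsub{\kappa}{\#}}}\bigl(\inv{\pret{\pair{i}{\id_\Gamma}}}(\iota\cdot\gamma)\bigr).
\]
The chain of naturality equalities used to verify welldefinedness of $\pret{t[\kappa']}$ transports this (now for a term rather than for a type, with one fewer step) to $\pret{t}_{\triple\Theta\vartheta f}\bigl(\pret{\pair{\subex{\id_\Delta}{\kappa}{\kappa'}}{\id_\Gamma}}(\gamma)\bigr)$, which the substitution lemma identifies with $\pret{t\subst{\kappa'}{\kappa}}_{\triple\Theta\vartheta f}(\gamma)$.

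For $\eta$, the $n$-th component of $\pret{\Lambda\kappa.t[\kappa]}_{\triple\Theta\vartheta f}(\gamma)$ is by definition $\pret{t[\kappa]}_{\triple{\Theta,\#}{\vartheta\basicsub{\#}{n}}{f\basicsub{\kappa}{\#}}}\bigl(\inv{\pret{\pair{i}{\id_\Gamma}}}(\iota\cdot\gamma)\bigr)$, which unfolds with a second fresh clock $\#'$ to the restriction along $\basicsub{\#'}{\#}$ of the $n$-th entry of the family produced by evaluating $i^*\pret{t}$ at that time object. Since $t$ lives in clock context $\Delta$, naturality of $\pret{t}$ along the inclusion $\iota$ identifies this inner family with the $\iota$-action on $\pret{t}_{\triple\Theta\vartheta f}(\gamma)=(\omega_m)_m$. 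A direct calculation then shows that composing the $\iota$-action with $\basicsub{\#'}{\#}$ cancels the fresh clock $\#'$ and returns precisely $\omega_n$, so the two families coincide component-wise.

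The main obstacle is the bookkeeping around the nested fresh clocks $\#$ and $\#'$ and the two uses of the inverse isomorphism from Lemma~\ref{lem:(i,id):invertible}: one must check carefully that the composite morphism on time objects followed by these inverse isomorphisms collapses to projection onto the $n$-th component of the family representing $\pret{t}_{\triple\Theta\vartheta f}(\gamma)$, so that naturality of $\pret{t}$ commutes past the isomorphisms of Lemma~\ref{lem:(i,id):invertible} in the way required.
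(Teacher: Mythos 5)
Your proposal matches the paper's proof essentially verbatim: for $\beta$ it is the same five-step chain (unfold $[\kappa']$, unfold $\Lambda\kappa$, apply the inverse from Lemma~\ref{lem:(i,id):invertible}, commute $\basicsub{\#}{f(\kappa')}$ with $\pret{\pair{\basicsub{\kappa}{\kappa'}}{\id}}$ and observe $\basicsub{\#}{f(\kappa')}\circ\iota=\id$, then finish with the substitution lemma); for $\eta$ it is the same use of the substitution lemma plus naturality of $\pret{t}$ along $\iota$, followed by the cancellation $\basicsub{\#'}{\#}\circ\basicsub{\#}{\#'}=\id$ that collapses the nested fresh clock. No gaps.
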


\section{Interpretation of $\tickc$}
\label{sec:tickc}

To interpret the rule for the tick constant $\tickc$, we define a substitution 
\[
\pret{\pair{[\kappa\mapsto \kappa']}{\basicsub{\tickA}\tickc}} : \pret{\wfcxt{\Gamma}} \to 
\basicsub\kappa{\kappa'}^*\pret{\wfcxt[\Delta,\kappa]{\Gamma, \tickA : \kappa}}
\]
for every context $\wfcxt\Gamma$ with $\kappa \notin \Delta$. 
The interpretation of application to $\tickc$ can then be defined as 
\begin{align*}
 \pret{\hastype{\Gamma}{\tappc{t\subst{\kappa}{\kappa'}}}{A\subst{\kappa}{\kappa'}\subst{\tickA}{\tickc}}}
 & =  (\basicsub\kappa{\kappa'}^*\pret{\hastype[\Delta,\kappa]{\Gamma, \tickA : \kappa}{t}{A}})
 [\pret{\pair{[\kappa\mapsto \kappa']}{\basicsub{\tickA}\tickc}}]
\end{align*}
which has type
\[
(\basicsub\kappa{\kappa'}^*\pret{\istype[\Delta,\kappa]{\Gamma, \tickA : \kappa}{A}})
 [\pret{\pair{[\kappa\mapsto \kappa']}{\basicsub{\tickA}\tickc}}]
\]
which equals the required 
$\pret{\istype[\Delta,\kappa]{\Gamma, \tickA : \kappa}{A\subst{\kappa}{\kappa'}\subst{\tickA}{\tickc}}}$
by the substitution lemma. 

Suppose $\gamma \in \pret{\wfcxt\Gamma}_{\triple\Theta\vartheta f}$. We define
\[
  \pret{\pair{[\kappa\mapsto \kappa']}{\basicsub{\tickA}\tickc}}(\gamma) 
  \defeq \pair{\{\kappa\}}{\inv{\pret{\pair i{\id_\Gamma}}}(\iota\cdot \gamma))}
\]
where $\iota : {\triple\Theta\vartheta f} \to {\triple{\Theta, \#}{\subex\vartheta{\#}{n+1}}f}$ is the inclusion
for $n = \vartheta(f(\kappa'))$. We must show that this defines an element of 
$\basicsub\kappa{\kappa'}^*\pret{\wfcxt[\Delta,\kappa]{\Gamma, \tickA : \kappa}}_{\triple\Theta\vartheta f}$. To see this,
note first that 
\[
 \iota\cdot \gamma \in \pret{\wfcxt\Gamma}_{\triple{\Theta, \#}{\subex\vartheta{\#}{n+1}}f}
 = i^*\pret{\wfcxt\Gamma}_{\triple{\Theta, \#}{\subex\vartheta{\#}{n+1}}{\subex f\kappa\#}}
\]
so by Lemma~\ref{lem:(i,id):invertible}, 
\[
 \inv{\pret{\pair i{\id_\Gamma}}}(\iota\cdot \gamma) \in
 \pret{\wfcxt[\Delta,\kappa]\Gamma}_{\triple{\Theta, \#}{\subex\vartheta{\#}{n+1}}{\subex f\kappa\#}}
\]
and 
\begin{align*}
 \pair{\{\kappa\}}{\inv{\pret{\pair i{\id_\Gamma}}}(\iota\cdot \gamma))} 
 & \in 
 \displaystyle\coprod_{\kappa\in X \subset f^{-1}(f(\kappa))}
 \pret{\wfcxt[\Delta,\kappa]\Gamma}_{\triple{\Theta, \#}{\subex\vartheta{\#}{n+1}}{\subex f\kappa{f(\kappa')}\basicsub{X}{\#}}} \\
 & = \pret{\wfcxt[\Delta,\kappa]{\Gamma,\tickA : \kappa}}_{\triple{\Theta}{\vartheta}{\subex f\kappa{f(\kappa')}}} \\
 & = \basicsub{\kappa}{\kappa'}^*\pret{\wfcxt[\Delta,\kappa]{\Gamma,\tickA : \kappa}}_{\triple{\Theta}{\vartheta}{f}}
\end{align*}

We note that this satisfies the equality for $\tickc$ in Figure~\ref{fig:clott:typing}. 

\begin{lemma}
\label{lem:dfixapptick}
 The interpretations of $\tappc{(\dfix^{\kappa'} \, t)}$ and $t\,(\dfix^{\kappa'}\,t)$ are equal. 
\end{lemma}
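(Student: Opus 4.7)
The plan is to unfold both sides of the equation in the presheaf model and match them after a single application of the recursive definition of $\dfix$. Let $t^+$ denote the weakening of $t$ to clock context $\Delta,\kappa$, so that $\dfix^{\kappa'} t = (\dfix^\kappa t^+)\subst{\kappa}{\kappa'}$, matching the form required by the typing rule for $\tappc$.

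First I would unfold $\pret{\tappc{(\dfix^{\kappa'}t)}}$ at a point $\gamma \in \pret\Gamma_{\triple\Theta\vartheta f}$ using the definitions from Section~\ref{sec:tickc}. Writing $n = \vartheta(f(\kappa'))$ and $\iota$ for the inclusion of $\triple\Theta\vartheta f$ into $\triple{\Theta,\#}{\subex\vartheta\#{n+1}}{f}$ with $\#$ fresh, this rewrites to $(\basicsub\kappa{\kappa'})^*\transp{\pret{\dfix^\kappa t^+}}$ applied to $\pair{\{\kappa\}}{\gamma^\ast}$, where $\gamma^\ast = \inv{\pret{\pair i{\id_\Gamma}}}(\iota\cdot\gamma)$. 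The next task is a concrete description of the transpose $\transp{(-)}$ in the presheaf model: expanding the abstract formula of Lemma~\ref{lem:bijectivecorresp} using the explicit components of $\eta$, $\epsilon$, and $\isocomp{\L\Gamma}A$ in $\gr{\Delta,\kappa}$, one shows that $\transp{\pret{\dfix^\kappa t^+}}$ at $\pair{\{\kappa\}}{\gamma^\ast}$ is precisely $\pret{\dfix^\kappa t^+}$ evaluated at $\triple{\Theta,\#}{\subex\vartheta\#{n+1}}{f[\kappa\mapsto\#]}$ on $\gamma^\ast$; the $\tlater^\kappa$ trivially unwinds in the target type since $\#$ has positive strength, landing in $\pret A(\pair{\{\kappa\}}{\gamma^\ast})$ as required.

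At this stage the recursive definition of $\pret{\dfix}$ applies: because $\#$ has strength $n+1 > 0$, the ``otherwise'' branch yields $\pret{t^+(\dfix^\kappa t^+)}$ at the restricted time object $\triple{\Theta,\#}{\subex\vartheta\#{n}}{f[\kappa\mapsto\#]}$, applied to the restriction of $\gamma^\ast$. To conclude I would use the substitution lemma together with naturality along the morphism $\basicsub\#{f(\kappa')} : \triple{\Theta,\#}{\subex\vartheta\#{n}}{f[\kappa\mapsto\#]} \to \triple\Theta\vartheta f$ (which collapses the fresh clock back to $f(\kappa')$) to identify the expression with $\pret t(\gamma)(\pret{\dfix^{\kappa'}t}(\gamma)) = \pret{t(\dfix^{\kappa'}t)}(\gamma)$. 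The key point is that $t^+$ is a weakening of $t$, so $\pret{t^+}$ along the clock substitution $\basicsub\kappa{\kappa'}$ agrees with $\pret t$, and the recursive definition of $\pret{\dfix^{\kappa'} t}$ at $\triple\Theta\vartheta f$ reproduces exactly the matching expression at the restricted time object.

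The principal obstacle is the concrete evaluation of the transpose in the first step: unwinding the abstract categorical formula of Lemma~\ref{lem:bijectivecorresp} into a component-wise description in $\gr{\Delta,\kappa}$ requires delicate bookkeeping of the comprehension isomorphism $\isocomp{\L\Gamma}A$ and the adjunction counit, and the compatibility with the clock substitution $(\basicsub\kappa{\kappa'})^*$ (which does not commute strictly with $\tearlier^\kappa$) must be handled via the mediating map $e^{\kappa,\basicsub\kappa{\kappa'}}$ from Section~\ref{sec:semantic:subst}. Once this description is in place, the remaining steps reduce to tracking time-object data and applying the substitution lemma.
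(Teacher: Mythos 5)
Your plan is, at bottom, the same idea the paper uses: unfold the interpretation, recognize a single step of the recursive definition of $\pret{\dfix}$, and fold back via the substitution lemma. But the paper organizes the argument very differently, and the difference matters. It first proves Lemma~\ref{lem:dfixproperty}, which states $\pret{\lambda(\alpha\of\kappa)(t\,(\dfix\,t))} = \pret{\dfix\,t}$ in a context with no clock substitution and no tick constant in sight. Because the term $t\,(\dfix\,t)$ lives in the unextended context and is merely tick-weakened by $\p_{\tearlier^\kappa}$, one evaluates the \emph{easy} direction of the transpose of Lemma~\ref{lem:bijectivecorresp}, namely $\transp{a} = (\R a)[\eta]$; this collapses to $(\tlater^\kappa\pret{t\,\dfix\,t})[\transp{\p_{\tearlier^\kappa}}]$, and since $\transp{\p_{\tearlier^\kappa}}$ is just the componentwise restriction map, comparison with the inductive definition of $\pret{\dfix}$ is immediate. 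Lemma~\ref{lem:dfixapptick} then falls out in a few lines: rewrite $\tappc{(\dfix^{\kappa'}t)}$ as $\tappc{((\dfix^\kappa u)\subst\kappa{\kappa'})}$, use Lemma~\ref{lem:dfixproperty} (via bijectivity of $\transp{(\cdot)}$) to replace $\tapp{(\dfix^\kappa u)}$ with $u\,(\dfix^\kappa u)$, and push the substitution $\pair{\basicsub\kappa{\kappa'}}{\basicsub\tickA\tickc}$ through using the substitution lemma.

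You instead attack the statement all at once, which forces you to evaluate the \emph{hard} direction of the transpose, $\transp{b} = \q_A[\epsilon\circ\L(\isocomp{\L\Gamma}{A}\circ\la\eta,b\ra)]$, concretely in the presheaf model and under the clock substitution $\basicsub\kappa{\kappa'}^*$, mediated by $e^{\kappa,\nu}$. You flag this as the principal obstacle, and you are right to: the claim that $\transp{\pret{\dfix^\kappa t^+}}$ at $\pair{\{\kappa\}}{\gamma^\ast}$ is simply $\pret{\dfix^\kappa t^+}$ evaluated at the lifted time object on $\gamma^\ast$ is not quite literally true as stated; $\eta$ in $\gr{\Delta,\kappa}$ produces a coproduct component and a synchronization map $s^\kappa$, and $\epsilon$ reintroduces the collapsing map $r^{X,\kappa}$, so the value must pass through these restrictions rather than being read off directly. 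None of this is fatal, but it is exactly the bookkeeping that the paper's factorization through Lemma~\ref{lem:dfixproperty} is designed to avoid. I would suggest adopting the same factorization: first prove $\pret{\dfix^\kappa u} = \pret{\lambda(\alpha\of\kappa)(u\,(\dfix^\kappa u))}$ in clock context $\Delta,\kappa$ via the easy transpose direction, then discharge the tick-constant application by the substitution lemma, as the paper does.
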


\subsection{Welldefinedness}
\label{sec:welldef}

As mentioned in Section~\ref{sec:clott} the unusual typing rule for $\tappc t$ introduces a problem of welldefinedness of the
interpretation: If $t$ is a term, a proof of the typing judgement $\hastype\Gamma{\tappc t}{A\subst{\tickA}{\tickc}}$
consists of a term $s$ such that $s\subst \kappa{\kappa'} = t$, a type $B$ such that $B\subst \kappa{\kappa'} = A$ 
and a proof of a typing judgement $\hastype[\Delta,\kappa]\Gamma{s}B$. In general there may be different possible
choices of $s$ and $B$, but the next lemma states that the interpretation of the term $\tappc t$ is independent of this 
choice. This means that the interpretation of a welltyped term is a welldefined object, independent of the choice of
typing derivation.

\begin{proposition}
 If $\hastype[\Delta,\kappa]\Gamma{s}B$ and $\hastype[\Delta,\kappa]\Gamma{u}C$ are such that 
 $\hastype[\Delta]\Gamma{s\subst \kappa{\kappa'}}{B\subst \kappa{\kappa'}}$ is equal to
 $\hastype[\Delta]\Gamma{u\subst \kappa{\kappa'}}{C\subst \kappa{\kappa'}}$ then 
 \[
 \basicsub{\kappa}{\kappa'}^*\pret{\tapp s}[\pret{\pair{[\kappa\mapsto \kappa']}{\basicsub{\tickA}\tickc}}] =
 \basicsub{\kappa}{\kappa'}^*\pret{\tapp u}[\pret{\pair{[\kappa\mapsto \kappa']}{\basicsub{\tickA}\tickc}}]
 \]
\end{proposition}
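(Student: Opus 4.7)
The plan is to establish the claim by combining the substitution lemma with a careful pointwise analysis of how $\pret{\pair{\basicsub\kappa{\kappa'}}{\basicsub\tickA\tickc}}$ interacts with the transposition operation. First, I would extract the syntactic consequences of the hypothesis: equality of the two conclusion judgements forces $s\subst\kappa{\kappa'} = u\subst\kappa{\kappa'}$ as terms and $B\subst\kappa{\kappa'} = C\subst\kappa{\kappa'}$ as types. For $\tapp s$ and $\tapp u$ to be welltyped, $B$ and $C$ must have the form $\latbind\tickA\kappa B'$ and $\latbind\tickA\kappa C'$ respectively, and the equality further yields $\tapp[\tickA]{(s\subst\kappa{\kappa'})} = \tapp[\tickA]{(u\subst\kappa{\kappa'})}$ as welltyped terms of type $B'\subst\kappa{\kappa'}$ in $\wfcxt[\Delta]{\Gamma, \tickA : \kappa'}$.

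Next, I would apply the substitution lemma (Lemma~\ref{lem:substitution}) to the substitution $\pair{\basicsub\kappa{\kappa'}}{\subex\id\tickA\tickA} : \wfcxt[\Delta]{\Gamma, \tickA : \kappa'} \to \wfcxt[\Delta, \kappa]{\Gamma, \tickA : \kappa}$, which acts syntactically by sending $\tapp[\tickA]s$ to $\tapp[\tickA]{(s\subst\kappa{\kappa'})}$. Together with the equality from the first step this yields
\begin{equation*}
\basicsub\kappa{\kappa'}^*\pret{\tapp s}[\pret{\pair{\basicsub\kappa{\kappa'}}{\subex\id\tickA\tickA}}] = \basicsub\kappa{\kappa'}^*\pret{\tapp u}[\pret{\pair{\basicsub\kappa{\kappa'}}{\subex\id\tickA\tickA}}]
\end{equation*}
as terms in $\pret{\Gamma, \tickA : \kappa'}$ in $\gr\Delta$.

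Finally, I would bridge from this intermediate identity to the desired one by a pointwise analysis. At $\gamma \in \pret\Gamma_{\triple\Theta\vartheta f}$, unfolding $\pret{\pair{\basicsub\kappa{\kappa'}}{\basicsub\tickA\tickc}}(\gamma) = \pair{\{\kappa\}}{\inv{\pret{\pair i\id}}(\iota\cdot\gamma)}$ and using Lemma~\ref{lem:(i,id):invertible} to identify the inverse with the component of $i^*\pret{\pair{\basicsub\kappa{\kappa'}}\id}$, the concrete description of the adjoint transposition from Lemma~\ref{lem:bijectivecorresp} reduces the value of both $\basicsub\kappa{\kappa'}^*\pret{\tapp s}$ and $\basicsub\kappa{\kappa'}^*\pret{\tapp u}$ at this element to evaluations of $\pret s$ and $\pret u$ at a common underlying datum built naturally from $\gamma$. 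Combined with the intermediate identity from the previous step, these evaluations coincide, yielding the proposition.

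The main obstacle is this third step: the definitions of $\pret{\pair{\basicsub\kappa{\kappa'}}{\basicsub\tickA\tickc}}$ and $\pret{\pair{\basicsub\kappa{\kappa'}}{\subex\id\tickA\tickA}}$ select \emph{different} summands of the coproduct describing $\basicsub\kappa{\kappa'}^*\tearlier^\kappa\pret\Gamma_{\triple\Theta\vartheta f}$---the former chooses $X = \{\kappa\}$, whereas the latter, through the map $e^{\kappa, \basicsub\kappa{\kappa'}}$ of Section~\ref{sec:semantic:subst}, chooses $X = \{\kappa, \kappa'\}$. Reconciling these summands requires a careful naturality argument for the transposition with respect to the morphism in $\catT_{\Delta, \kappa}$ that collapses the fresh clock $\#$ down to $f(\kappa')$; the transposition on the two summands is determined, via this morphism, by a common underlying element of $\pret\Gamma$ on which the substitution hypothesis can be applied.
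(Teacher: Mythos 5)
Your proposal follows the paper's proof essentially step for step. The extraction of $s\subst\kappa{\kappa'} = u\subst\kappa{\kappa'}$ and hence of the equality of $\hastype[\Delta]{\Gamma,\tickA:\kappa'}{(\tapp{s})\subst\kappa{\kappa'}}{B\subst\kappa{\kappa'}}$ with the corresponding judgement for $u$, followed by the substitution lemma applied to $\pair{\basicsub\kappa{\kappa'}}{\subex{\id_\Gamma}\tickA\tickA}$, are exactly the paper's first two moves; and the pointwise unfolding of $\pret{\pair{\basicsub\kappa{\kappa'}}{\basicsub\tickA\tickc}}(\gamma)$ as $\pair{\{\kappa\}}{\inv{\pret{\pair i{\id_\Gamma}}}(\iota\cdot\gamma)}$ together with Lemma~\ref{lem:(i,id):invertible} is the paper's third. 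The only divergence is in how the bridge is closed: the paper asserts directly that this element is $\pret{\pair{\basicsub\kappa{\kappa'}}{\subex{\id_\Gamma}\tickA\tickA}}$ applied to a suitable element of $\pret{\wfcxt{\Gamma,\tickA:\kappa'}}$ and never unfolds $\pret{\tapp s}$, whereas you propose to unfold the transpose and compare values of $\pret s$.

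The obstacle you flag in your last paragraph is genuine, and it is in fact a point the paper's own proof elides: since $e^{\kappa,\basicsub\kappa{\kappa'}}$ sends a summand index $Y\ni\kappa'$ to $\basicsub\kappa{\kappa'}^{-1}(Y)\supseteq\{\kappa,\kappa'\}$, the image of $\pret{\pair{\basicsub\kappa{\kappa'}}{\subex{\id_\Gamma}\tickA\tickA}}$ always lies in a summand whose index contains $\kappa'$, and so cannot literally coincide, as an element of the coproduct, with the summand $\{\kappa\}$ chosen by the $\tickc$-substitution. The paper's displayed chain of equalities therefore needs exactly the repair you sketch: unfold $\transp{(-)}$ via the counit, note that on either summand the value of the transpose is obtained by evaluating $\pret s$ at the extended time object and restricting along the map collapsing the fresh clock to $f(\kappa')$, and use naturality of $\pret s$ (and of the type restrictions) to see that both restrictions are governed by the common element $\pret{\pair{\basicsub\kappa{\kappa'}}{\id_\Gamma}}(\gamma)$. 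Your plan is therefore sound and, on this point, more careful than the published argument; the only thing still owed is carrying out that naturality computation in full rather than leaving it as a sketch.
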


\begin{proof}
 The assumption implies that also 
 \[
 \hastype[\Delta]{\Gamma, \tickA : \kappa'}{(\tapp{s})\subst \kappa{\kappa'}}{B\subst \kappa{\kappa'}}
 \qquad\text{ and } \qquad 
 \hastype[\Delta]{\Gamma, \tickA : \kappa'}{(\tapp{u})\subst \kappa{\kappa'}}{B\subst \kappa{\kappa'}}
 \]
 are equal, and so by the substitution lemma
 \[
   (\basicsub\kappa{\kappa'}\pret{\tapp s})[\pret{\pair{i}{\subex{\id_{\Gamma}}\tickA\tickA}}] 
   = (\basicsub\kappa{\kappa'}\pret{\tapp u})[\pret{\pair{i}{\subex{\id_{\Gamma}}\tickA\tickA}}]
 \]
 Now, if $\gamma\in \pret{\wfcxt\Gamma}_{\triple\Theta\vartheta f}$ then 
\begin{align*}
  \basicsub{\kappa}{\kappa'}^*\pret{\tapp s}[\pret{\pair{[\kappa\mapsto \kappa']}{\basicsub{\tickA}\tickc}}] (\gamma)
 & = \basicsub{\kappa}{\kappa'}^*\pret{\tapp s}\pair{\{\kappa\}}{\inv{\pret{\pair i{\id_\Gamma}}}(\iota\cdot \gamma)} \\
 & = \basicsub{\kappa}{\kappa'}^*\pret{\tapp s}\pair{\{\kappa\}}{\pret{\pair {\basicsub\kappa{\kappa'}}{\id_\Gamma}}(\iota\cdot \gamma)} \\
 & = \basicsub{\kappa}{\kappa'}^*\pret{\tapp s}\pret{\pair{\basicsub\kappa{\kappa'}}{\subex{\id_{\Gamma}}\tickA\tickA}}\pair{\{\kappa\}}{(\iota\cdot \gamma)} \\
 & = \basicsub{\kappa}{\kappa'}^*\pret{\tapp u}\pret{\pair{\basicsub\kappa{\kappa'}}{\subex{\id_{\Gamma}}\tickA\tickA}}\pair{\{\kappa\}}{(\iota\cdot \gamma)} \\
 & = \basicsub{\kappa}{\kappa'}^*\pret{\tapp u}[\pret{\pair{[\kappa\mapsto \kappa']}{\basicsub{\tickA}\tickc}}] (\gamma) \qedhere
\end{align*}
\end{proof}

\section{Clock irrelevance}
\label{sec:cirr}

We now show how to model the clock irrelevance axiom (\ref{eq:cirr}). For this it suffices to show that if 
$\hastype{\Gamma}t{\forall\kappa . A}$, $\istype\Gamma A$ and $\kappa', \kappa''\in \Delta$ then 
\[
\pret{\hastype\Gamma{t[\kappa']}A}
= \pret{\hastype\Gamma{t[\kappa'']}A}
\]
Recall that for $\gamma\in \pret{\wfcxt\Gamma}_{\triple\Theta\vartheta f}$ 
\[
  \pret{\hastype\Gamma{t[\kappa']}A}_{\triple\Theta\vartheta f}(\gamma) = 
  \basicsub{\#}{f(\kappa')}\cdot(\pret{\hastype{\Gamma}t{\forall\kappa . A}}_{\triple\Theta\vartheta f}(\gamma))_{n}
\]
where $n = \vartheta(f(\kappa'))$. 
Here the element $(\pret{\hastype{\Gamma}t{\forall\kappa . A}}_{\triple\Theta\vartheta f}(\gamma))_{n}$
lives in 
\begin{align*}
 & \pret{\istype[\Delta,\kappa]{\Gamma}{A}}_{\triple{\Theta, \#}{\subex{\vartheta}{\#}n}{\subex f\kappa\#}}
 (\inv{\pret{\pair{i}{\id_\Gamma}}} (\iota \cdot \gamma)) \\
 & = i^*(\pret{\istype[\Delta]{\Gamma}{A}})_{\triple{\Theta, \#}{\subex{\vartheta}{\#}n}{\subex f\kappa\#}}
 (\pret{\pair{i}{\id_\Gamma}}(\inv{\pret{\pair{i}{\id_\Gamma}}} (\iota \cdot \gamma))) \\
 & = \pret{\istype[\Delta]{\Gamma}{A}}_{\triple{\Theta, \#}{\subex{\vartheta}{\#}n}{f}}
 (\iota \cdot \gamma) 
\end{align*}
Clock irrelevance will follow from the following lemma.

\begin{lemma} \label{lem:iota:dot:iso}
 Suppose $\istype{\Gamma}A$ and $\gamma\in \pret{\wfcxt\Gamma}_{\triple\Theta\vartheta f}$. The map 
 \[
 \iota\cdot (-) : \pret{\istype{\Gamma}A}_{\triple\Theta\vartheta f} 
 \to \pret{\istype{\Gamma}A}_{\triple{\Theta,\#}{\subex{\vartheta}{\#}{n}}f}(\iota\cdot \gamma)
 \]
is an isomorphism.
\end{lemma}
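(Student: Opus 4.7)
The plan is to prove this by simultaneous structural induction on the well-formed context $\Gamma$, the type $A$, and (where needed for dependent type formers such as identity types) the terms appearing in $A$. The guiding intuition is that because $\Gamma$, $A$, and all sub-terms live over the clock context $\Delta$ while the fresh semantic clock $\#$ is not in the image of $f : \Delta \to \Theta$, extending the time object by $\#$ contributes no information visible to the interpretation, and hence the functorial action of $\iota$ is a bijection on fibres.

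For contexts, the case $\wfcxt{\Gamma, x : A}$ reduces to the cases of $\wfcxt\Gamma$ and $\istype\Gamma A$ by the definition of comprehension, while $\wfcxt{\Gamma, \tickA : \kappa}$ uses the concrete description of $\tearlier^\kappa$ from Lemma~\ref{lem:earlier}: elements at the extended time object are pairs $(X, \gamma')$ with $X \subseteq (\iota f)^{-1}(\iota f(\kappa)) = f^{-1}(f(\kappa))$ (equal because $\#$ is not in the image of $\iota f$), and the inductive hypothesis on $\pret\Gamma$ at the decremented object gives the bijection on the second component. For the type formers inherited from standard dependent type theory (dependent products and sums, extensional identity types, and base types), the interpretation in $\gr\Delta$ is pointwise and $\iota\cdot(-)$ decomposes into its actions on constituents, so iso is immediate from the inductive hypotheses. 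For $\latbind\tickA\kappa A'$ with $\kappa\in\Delta$, we use that $\iota f(\kappa) = f(\kappa)$ together with the definition of $\tlater^\kappa$ from Lemma~\ref{lem:later:types:terms} to reduce $\iota\cdot(-)$ to the action of $\iota$ on the decremented time object (or to the trivial action on $\{\ast\}$ when $\vartheta(f(\kappa)) = 0$), and we conclude by the inductive hypothesis on $A'$ in context $\Gamma,\tickA:\kappa$. For $\forall\kappa' . A'$, the interpretation is a limit over $m\in\nats$ of values of $\pret{A'}$ at a further fresh-clock extension; choosing the auxiliary fresh clock compatibly with $\#$ (i.e.\ fresh for $\Theta,\#$), the inductive hypothesis on $A'$ in clock context $\Delta,\kappa'$ supplies componentwise bijections that assemble to the desired isomorphism of limits, compatibly with the restriction maps that define the limit.

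The main obstacle is the $\forall\kappa'$ case: one must simultaneously introduce two fresh clocks and verify that the inverse given by applying $\iota^{-1}$ componentwise respects the coherence condition $\omega_m = (\omega_{m+1})|_m$ defining the limit, and that the resulting inverse is natural in $\gamma$. A secondary difficulty is setting up the induction hypothesis strongly enough to cover open types and terms, so that substitution into dependent type formers and the projection $\p_{\Gamma''}$ appearing in the interpretation of tick-variables are handled uniformly; this is precisely why one proves the statement simultaneously for contexts, types, and terms.
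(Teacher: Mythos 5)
Your plan differs from the paper's in an interesting way — the paper does not carry out a direct fibrewise structural induction, but instead invokes the machinery of \cite{GDTTmodel}, reformulating the lemma as the statement that the context projection $\pret{\wfcxt{\Gamma,x:A}}\to\pret{\wfcxt\Gamma}$ is \emph{orthogonal} to representables of the form $y(\lambda,n)$ in $\grtotal$. That reformulation is then closed for free under $\Pi$- and $\Sigma$-types and under substitution, so the only new work is the two extra type formers $\latbind\tickA\kappa A$ and $\forall\kappa.A$.

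There is a genuine gap in your proposal, and it is exactly the place where the orthogonality reformulation earns its keep. You assert that for ``dependent products and sums, extensional identity types, and base types, the interpretation in $\gr\Delta$ is pointwise and $\iota\cdot(-)$ decomposes into its actions on constituents, so iso is immediate from the inductive hypotheses.'' This is false for $\Pi$-types. In a presheaf CwF the fibre $\pret{\Pi(x:A).B}_{\triple\Theta\vartheta f}(\gamma)$ consists of families indexed by \emph{all} morphisms $\tau:\triple\Theta\vartheta f\to\triple{\Theta'}{\vartheta'}{f'}$, not just by the identity; the interpretation is an end over the slice, not a pointwise construction. So comparing $\pret{\Pi(x:A).B}$ at $\triple\Theta\vartheta f$ against $\triple{\Theta,\#}{\vartheta\basicsub\# n}{f}$ is not a componentwise check: you must show that morphisms out of the extended object $\triple{\Theta,\#}{\vartheta\basicsub\# n}{f}$ factor (compatibly with the presheaf action on $A$, $B$, $\Gamma$) through morphisms out of $\triple\Theta\vartheta f$ together with an irrelevant choice of image for $\#$, using that $\#$ is not in the image of $f$. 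That factorisation argument is precisely what the orthogonality condition of \cite{GDTTmodel} packages once and for all, and why the paper can then say the $\Pi$-, $\Sigma$- and substitution cases come for free. Without it, your induction does not close at $\Pi$. (The $\Sigma$ and identity-type cases really are pointwise and are fine; and your treatment of $\tearlier^\kappa$, $\tlater^\kappa$ and $\forall\kappa'.A'$ via the concrete descriptions is the right spirit and matches what the paper sketches for the new cases, modulo the coherence bookkeeping you already flag.) To repair your route you would need to prove the factorisation lemma for morphisms out of $\triple{\Theta,\#}{\vartheta\basicsub\# n}{f}$ and use it in the $\Pi$ case, at which point you have essentially reconstructed the orthogonality argument the paper cites.
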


In particular, there is an element $x$ such that 
$\iota\cdot x = (\pret{\hastype{\Gamma}t{\forall\kappa . A}}_{\triple\Theta\vartheta f}(\gamma))_{n}$
and so 
\begin{align*}
 \pret{\hastype\Gamma{t[\kappa']}A}_{\triple\Theta\vartheta f}(\gamma) &  = 
 \basicsub{\#}{f(\kappa')}\cdot \iota \cdot x = x
\end{align*}
Likewise $\pret{\hastype\Gamma{t[\kappa'']}A}_{\triple\Theta\vartheta f}(\gamma) = x$ proving clock irrelevance. 

Lemma~\ref{lem:iota:dot:iso} can be proved by induction on $A$ using the techniques of~\cite{GDTTmodel}. 
In particular,~\cite{GDTTmodel} proves
that the statement of the lemma is equivalent to the statement that the context projection map 
 $\pret{\wfcxt{\Gamma, x : A}} \to \pret{\wfcxt\Gamma}$
(considered as a morphism in $\grtotal$) is \emph{orthogonal} to all objects of the form $y(\lambda, n)$ where $y$ is the yoneda
embedding. Here, orthogonality means that for all squares as below, there exists a unique filling diagonal:
\[
  \begin{diagram}
    y(\lambda, n) \times X \ar{r}{} \ar{d}{\pi} & \pret{\wfcxt{\Gamma, x : A}}\ar{d}{}\\
    X \ar[dotted]{ur}[description]{} \ar{r}{} & \pret{\wfcxt{\Gamma}}
  \end{diagram}
\]
where $\pi$ is the second projection. In particular, this implies that the condition is closed under $\Pi$- and $\Sigma$-types, and 
substitutions, see~\cite{GDTTmodel} for details. This proof can be easily extended to the cases of $\latbind\tickA\kappa A$ 
and $\forall\kappa . A$.

\section{Conclusion and future work}
\label{sec:conclusion}

We have constructed a model of \clott\ modelling ticks on clocks using an adjunction where the right adjoint extends
to types and terms. 
The description of the left adjoint $\tearlier^\kappa$ is fairly heavy to work with, but by abstracting away the required properties
needed to model the tick calculus, the model can be described in reasonable space. 

Future work includes extending the model to universes, which we expect to be easy using the universes
constructed in~\cite{GDTTmodel}. As noted there the axiom of clock irrelevance forces universes 
to be indexed by syntactic clock contexts. Fortunately, we can model a notion of universe polymorphism in the clock context: 
inclusions of clock contexts induce inclusion of universes, and these commute with type constructions on the nose. These
results, however, must be adapted to \clott, in particular the code for the $\later$  type constructor should be extended
to the tick abstracting generalisation used in \clott. We expect this to be a simple adaptation. Using universes, guarded 
recursive types can be encoded~\cite{birkedal2013intensional}, indeed these can also be added as primitive, given that
many of them exist in the model~\cite{GDTTmodel}. 

Our motivation for constructing this model is to study extensions of \clott. In particular, we would like to extend \clott\ with
path types as in~\cite{GCTT}. This requires an adaptation of the model to the cubical setting, using $\catT$ indexed
families of cubical sets~\cite{CTT} rather than just sets. 

\subparagraph*{Acknowledgements.} 

This work was supported by DFF-Research Project 1 Grant no.
4002-00442, from The Danish Council for Independent Research for the Natural Sciences (FNU) and by 
a research grant (13156) from VILLUM FONDEN.

\bibliography{paper.bib}

\newpage
\appendix
\section{Proof appendix for reviewers}

This appendix contains proofs for the reviewers. It is not part of the main submission. 

\section{Soundness for the tick calculus}

We omit the proof of this as it is a fragment of \clott. Instead we give the full proof of soundness of \clott. The 
proof of soundness for this fragment is sufficiently abstract that it also applies to the abstract notion of model for the 
tick calculus. 

\section{Dependent right adjoint types in $\gr\Delta$}

Let $\tau:\triple \Theta \vartheta f \to \triple {\Theta'} {\vartheta'} {f'}$ we write $\tau[\kappa-]$ for the map $\triple \Theta {\vartheta[f(\kappa)-]} f \to \triple {\Theta'} {\vartheta'[f(\kappa)-]} {f'}$ with the same action on $\Theta$ as $\tau$. 

We also define for $\kappa \in X\subset \Delta$ the map $\tau[X,\kappa+]: \triple \Theta \vartheta f [X,\kappa+] \to \triple {\Theta'} {\vartheta'} f [X,\kappa+]$ where $\tau[X,\kappa+](\#_\Theta) = \#_{\Theta'}$ and $\tau [X,\kappa+]|_\Theta = \tau$.

\begin{proofof}{Lemma~\ref{lem:earlier}}
 Note first that the definition of the lemma extends to a presheaf structure: If 
 $\tau:\triple\Theta\vartheta f \to \triple{\Theta'}{\vartheta'}{f'}$ and $X\subseteq f^{-1} (f(\kappa))$ such that  $\kappa \in X$ then 
 also $X \subset f'^{-1}(f'(\kappa))$. We can then define 
\[
  \tearlier^\kappa F(\tau) (X,\gamma) = (X, \tau[X,\kappa+]\cdot \gamma)
\]

For $\varphi:F\rightarrow G$, the map $\tearlier^\kappa \varphi:\tearlier^\kappa F \rightarrow \tearlier^\kappa G$ is given by  
\[(\tearlier^\kappa \varphi)_{\triple \Theta \vartheta  f} (X,\gamma) = (X,\varphi_{{\triple \Theta \vartheta f}[X,\kappa+]} (\gamma))\]

Before we describe the unit $\eta:\id \rightarrow \tlater^\kappa \tearlier^\kappa$ and the counit $\epsilon^\kappa:\tearlier^\kappa \tlater^\kappa \rightarrow \id$ of the adjunction we spell out the description of $\tlater^\kappa \tearlier^\kappa F$ and $\tearlier^\kappa \tlater^\kappa F$.

If $\vartheta(f(\kappa)) > 0$ then 
\begin{align*}(\tlater^\kappa \tearlier^\kappa F) \triple \Theta \vartheta f &= \tearlier^\kappa F \triple \Theta{\vartheta[f(\kappa)-]} f \\
&=  \displaystyle	\coprod_{\kappa \in X\subset \inv{f}(f(\kappa))} F \triple {\Theta,\#_\Theta} {\vartheta [f(\kappa)-],\#_\Theta \mapsto \vartheta(f(\kappa))} {f[X\mapsto \#_\Theta]} 
\end{align*}
If $\vartheta(f(\kappa)) = 0$ then $(\tlater^\kappa \tearlier^\kappa F) \triple \Theta \vartheta f = \{\ast\}$.

On the other hand
\begin{align*}
\tearlier^\kappa \tlater^\kappa F \triple \Theta \vartheta f  &= \displaystyle \coprod_{\kappa \in X\subset \inv{f}(f(\kappa))}\tlater^\kappa F \triple {\Theta,\#_\Theta} {\vartheta, \#_\Theta \mapsto \vartheta(f(\kappa))+1} {f[X\mapsto \#_\Theta]}\\
&= \displaystyle \coprod_{\kappa \in X\subset \inv{f}(f(\kappa))} F \triple {\Theta,\#_\Theta} {\vartheta,\#_\Theta \mapsto \vartheta(f(\kappa))} {f[X\mapsto \#_\Theta]}
\end{align*}

For $\gamma \in F\triple \Theta \vartheta f$ we then have 
$\eta^{\kappa}_F \gamma = \begin{cases} \ast & \vartheta(f(\kappa))=0 \\
(f^{-1}(f(\kappa)), s^{\kappa}_{\triple \Theta \vartheta f} \cdot \gamma) &\text{otherwise}
\end{cases}$
\\
where $s^\kappa_{\triple \Theta \vartheta f}: \triple \Theta \vartheta f \to \triple {\Theta,\#_\Theta} {\vartheta[f(\kappa)-],\#_\Theta \mapsto \vartheta(f(\kappa))} {f[f^{-1}(f(\kappa))\mapsto \#_\Theta]}$ is the map
\begin{align*}
s^\kappa_{\triple \Theta \vartheta f} (\lambda) = 
\begin{cases}
\#_\Theta & \lambda = f(\kappa)\\
\lambda & \text{otherwise}
\end{cases} && \text{When }\vartheta(f(\kappa)) >0 
\end{align*}

For $\gamma \in \tearlier^\kappa \tlater^\kappa F\triple \Theta \vartheta f$ we define $\epsilon^{\kappa}_F (X,\gamma)=r^{X,\kappa}_{\triple \Theta \vartheta f}\cdot \gamma$, where 
\[r^{X,\kappa}_{\triple \Theta \vartheta f}: \triple {\Theta,\#_\Theta}{\vartheta, \#_\Theta \mapsto f(\kappa)}{f[X\mapsto \#_\Theta]} \rightarrow \triple \Theta \vartheta f\] 
is given by
\begin{align*}
r^{X,\kappa}_{\triple \Theta \vartheta f} (\lambda) = 
\begin{cases}
f(\kappa) & \lambda = \#_\Theta\\
\lambda   & \text{otherwise}
\end{cases}
\end{align*}

It remains to show the two triangle equalities:
\begin{align}
\tlater^\kappa \epsilon^{\kappa}_F \circ \eta^\kappa_{\tlater^\kappa F} = \id_{\tlater^\kappa F}\label{eq:adj1}\\
\epsilon^\kappa_{\tearlier^\kappa F} \circ \tearlier^\kappa \eta^\kappa_{F} = \id_{\tearlier^\kappa \Gamma}\label{eq:adj2}
\end{align}

To show~(\ref{eq:adj1}) we first we prove that when $\vartheta(f(\kappa)) > 0$ we have that $ (s^\kappa)_{\triple \Theta \vartheta f}[\kappa-]$ followed by $r^{\inv{f}(f(\kappa)), \kappa}_{\triple \Theta {\vartheta[f(\kappa)-]} f}$ is the identity.
\begin{align*}
r^{\inv{f}(f(\kappa)), \kappa,}_{\triple \Theta {\vartheta[f(\kappa)-]} f} (s^\kappa_{\triple \Theta \vartheta f}[\kappa-](\lambda)) = 
\begin{cases}
r^{\inv{f}(f(\kappa)),\kappa}_{\triple \Theta {\vartheta[f(\kappa)-]} f}(\#_\Theta) = f(\kappa) & \lambda = f(\kappa)\\
\\
r^{\inv{f}(f(\kappa)),\kappa}_{\triple \Theta {\vartheta[f(\kappa)-]} f}(\lambda) = \lambda & \text{Otherwise}
\end{cases}
\end{align*}

Suppose $\gamma \in \tlater^\kappa F \triple \Theta \vartheta f$. If $\vartheta(f(\kappa))=0$ the codomain of the maps is a 
singleton and thus the equality is trivial. 
 Otherwise we have
\begin{align*}
(\tlater^\kappa \epsilon^k_\Gamma)_{\triple \Theta \vartheta f} ((\eta^k_{\tlater^\kappa \Gamma})_{\triple \Theta \vartheta f} (\gamma)) &= (\tlater^\kappa \epsilon^k_\Gamma)_{\triple \Theta \vartheta f}(\inv{f}(f(\kappa)),(\tlater^\kappa \Gamma)(s^\kappa_{\triple \Theta \vartheta f}) (\gamma)) \\
&=(\epsilon^{\kappa}_\Gamma)_{\triple \Theta {\vartheta[f(\kappa)-]} f}(\inv{f}(f(\kappa)),(\Gamma(s^\kappa_{\triple \Theta \vartheta f}[\kappa-]) (\gamma)) \\
&=\Gamma(r^{\inv{f}(f(\kappa)),\kappa}_{\triple \Theta {\vartheta[f(\kappa)-]} f}) (\Gamma(s^\kappa_{\triple \Theta \vartheta f}[\kappa-]) (\gamma)) \\
& = \gamma
\end{align*}

To show~(\ref{eq:adj2}) first we prove that for any $X\subseteq(\inv{f}(f(\kappa)))$ with $\kappa\in X$ we have that $s^\kappa_{\triple \Theta \vartheta f [X,\kappa+]}$ followed by $r^{X,\kappa}_{\triple \Theta \vartheta f} [X,\kappa+]$ is the identity on $\triple \Theta \vartheta f [X,\kappa+]$. We have
\begin{align*}
r_{\triple \Theta \vartheta f}^{X,\kappa}[X,\kappa+](s^\kappa_{\triple \Theta \vartheta f [X,\kappa+]}(\lambda)) = 
\begin{cases}
r^{X,\kappa}_{\triple \Theta \vartheta f}[X,\kappa+] (\#_{\Theta,\#_\Theta}) = \#_\Theta & \lambda = f(\kappa)=\#_\Theta\\
r[X,\kappa+]_{\triple \Theta \vartheta f}(\lambda)= \lambda & \text{Otherwise}
\end{cases}
\end{align*}

Let $(X,\gamma) \in \tearlier^\kappa \Gamma \triple \Theta \vartheta f$ we have
\begin{align*}
(\epsilon^k_{\tearlier^\kappa \Gamma})_{\triple \Theta \vartheta f} ((\tearlier^\kappa \eta^k_{\Gamma})_{\triple \Theta \vartheta f} (X,\gamma)) & = (\epsilon^k_{\tearlier^\kappa \Gamma})_{\triple \Theta \vartheta f}(X, (\eta^\kappa_\Gamma)_{\triple \Theta \vartheta f [X,\kappa+]}(\gamma))\\
& = \tearlier^\kappa \Gamma(r^{X,\kappa}_{\triple \Theta \vartheta f}) ((\eta^\kappa_\Gamma)_{\triple \Theta \vartheta f [X,\kappa+]}(\gamma)) \\
& = \tearlier^\kappa \Gamma(r^{X,\kappa}_{\triple \Theta \vartheta f}) (X, (\Gamma(s^\kappa_{\triple \Theta \vartheta f[X,\kappa+]}) (\gamma)) \\
& = (X, (\Gamma(r^{X,\kappa}_{\triple \Theta \vartheta f}[X,\kappa+])(\Gamma (s^\kappa_{\triple \Theta \vartheta f[X,\kappa+]}) (\gamma))))\\
& =(X, \gamma)\qedhere
\end{align*}
\end{proofof}

\subsection{Tick irrelevance}
\label{sec:tirr}

In this section we give a proof of Proposition~\ref{prop:tirr}. 
%
%
%
%
First we will need the following lemmas.


\begin{lemma}
\label{lem:laterofnextisnextoflater}
Let $\varphi:F\rightarrow \tlater^\kappa G$. We have
\[\tlater^\kappa \transp{\p_{\tearlier^\kappa}} \circ \varphi = \tlater^\kappa \varphi \circ \transp{\p_{\tearlier^\kappa}}\]
Where $\transp{\p_{\tearlier^\kappa}}$ on the left has the type $G\rightarrow \tlater^\kappa G$ and $\transp{\p_{\tearlier^\kappa}}$ on the right has the type $F\rightarrow \tlater^\kappa F$.
\end{lemma}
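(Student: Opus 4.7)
The plan is to prove the equality pointwise. Fix a stage $\triple\Theta\vartheta f$ and an element $\gamma \in F\triple\Theta\vartheta f$; I will show that both sides yield the same element of $\tlater^\kappa\tlater^\kappa G\triple\Theta\vartheta f$. Let $\iota:\triple\Theta\vartheta f \to \triple\Theta{\vartheta[f(\kappa)-]}f$ denote the morphism whose underlying map of clocks is the identity on $\Theta$; this is a morphism in $\catT_\Delta$ because $\vartheta[f(\kappa)-]\leq \vartheta$ pointwise.

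First, I would handle the degenerate cases. If $\vartheta(f(\kappa))\leq 1$ then by the definition of $\tlater^\kappa$ the codomain $\tlater^\kappa\tlater^\kappa G\triple\Theta\vartheta f$ is the singleton $\{\ast\}$, and the equality is automatic. In the remaining case $\vartheta(f(\kappa))\geq 2$, unfold using the description of $\tlater^\kappa$ on morphisms from the proof of Lemma~\ref{lem:later:types:terms} together with the explicit description of $\transp{\p_{\tearlier^\kappa}}_H$ as the restriction $H(\iota)$ stated just after Lemma~\ref{lem:earlier}. The right-hand side then evaluates to $\varphi_{\triple\Theta{\vartheta[f(\kappa)-]}f}(F(\iota)(\gamma))$, while the left-hand side evaluates to $(\tlater^\kappa G)(\iota)(\varphi_{\triple\Theta\vartheta f}(\gamma))$. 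These two expressions are precisely the two sides of the naturality square for $\varphi\colon F\to\tlater^\kappa G$ at the morphism $\iota$, so they agree.

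Equivalently, one can package the same content abstractly: naturality of $\transp{\p_{\tearlier^\kappa}}:\id \to \tlater^\kappa$ applied at $\varphi$ yields $\transp{\p_{\tearlier^\kappa}}_{\tlater^\kappa G} \circ \varphi = \tlater^\kappa\varphi \circ \transp{\p_{\tearlier^\kappa}}_F$, reducing the lemma to the identity $\tlater^\kappa\transp{\p_{\tearlier^\kappa}}_G = \transp{\p_{\tearlier^\kappa}}_{\tlater^\kappa G}$ of natural transformations from $\tlater^\kappa G$ to $\tlater^\kappa\tlater^\kappa G$, which the same componentwise calculation establishes. The main obstacle is notational rather than mathematical: one has to correctly unfold two nested $\tlater^\kappa$'s and keep track of how the action of $\tlater^\kappa$ on a morphism shifts the $\vartheta$ component, but no essential subtlety arises beyond the case split on $\vartheta(f(\kappa))$.
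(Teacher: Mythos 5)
Your proof is correct and uses the same key idea as the paper: the identification $\tlater^\kappa\transp{\p_{\tearlier^\kappa}}_G = \transp{\p_{\tearlier^\kappa}}_{\tlater^\kappa G}$ (your abstract reformulation) together with naturality, which your concrete version phrases equivalently as the naturality square for $\varphi$ at the morphism $\iota$. You additionally spell out the degenerate cases $\vartheta(f(\kappa))\leq 1$, which the paper leaves implicit.
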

\begin{proof}
The statement follows from the observation that $(\transp{\p_{\tearlier^\kappa}})_{\tlater^\kappa G}$ is the same as $\tlater^\kappa (\transp{\p_{\tearlier^\kappa}})_{G}$.

At $\triple \Theta \vartheta f$ the map  $(\transp{\p_{\tearlier^\kappa}})_{\tlater^\kappa G}$ is given by
\[
(\tlater^\kappa G)(\id_\Theta:\triple \Theta \vartheta f \to \triple \Theta {\vartheta[f(\kappa)-]} f = G(\id_\Theta: \triple \Theta {\vartheta[f(\kappa)-]} f \to \triple \Theta {\vartheta[f(\kappa)--]} f)
\]
On the other hand at $\triple \Theta \vartheta f$ the map $\tlater^\kappa (\transp{\p_{\tearlier^\kappa}})_{G}$ is given by
\[
((\transp{\p_{\tearlier^\kappa}})_{G})_{\triple \Theta {\vartheta[f(\kappa)-]} f} = G(\id_\Theta: \triple \Theta {\vartheta[f(\kappa)-]} f \to \triple \Theta {\vartheta[f(\kappa)--]} f)
\]
Let $\gamma \in F(\Theta,\vartheta,f)$ then 
\begin{align*}
(\tlater^\kappa (\transp{\p_{\tearlier^\kappa}})_G)_{\triple \Theta \vartheta f} (\varphi_{\triple \Theta \vartheta f} \gamma) & = ((\transp{\p_{\tearlier^\kappa}})_{\tlater^\kappa G})_{\triple \Theta \vartheta f} (\varphi_{\triple \Theta \vartheta f} \gamma) \\
&= (\varphi_{\triple \Theta \vartheta f} \gamma)|_{\triple {\Theta[f(\kappa)-]} \vartheta f}
\end{align*}
On the other hand 
\begin{align*}
(\tlater^\kappa \varphi)_{\triple \Theta \vartheta f}(((\transp{\p_{\tearlier^\kappa}})_F)_{\triple \Theta \vartheta f} \gamma) &=(\tlater^\kappa \varphi)_{\triple \Theta \vartheta f} (\gamma |_{ \triple \Theta {\vartheta[f(\kappa))-]} f}) \\
&= \varphi_{(\Theta,\vartheta[\vartheta(f(\kappa)) -])} (\gamma|_{ \triple \Theta {\vartheta[f(\kappa))-]} f})
\end{align*}
The statement follows by naturality of $\varphi$.
\end{proof}

\begin{lemma}
\label{lem:tirr}
If $\hastype{\Gamma}{t}{\later(\alpha \of\kappa) A}$ then 
\[
\pret{\hastype{\Gamma}{\lambda(\tickA'\of \kappa) t}{\later^\kappa \later^\kappa A}} =
\pret{\hastype{\Gamma}{\lambda(\tickA' \of \kappa)(\lambda(\tickA\of \kappa) \tapp[\tickA'] t)}{\later^\kappa \later^\kappa A}}
\] 
\end{lemma}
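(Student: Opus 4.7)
The plan is to unfold both interpretations into concrete morphisms in $\gr\Delta$ and reduce the claimed equality to a single instance of Lemma~\ref{lem:laterofnextisnextoflater}. I will treat the type $A$ as one on which $\tickA, \tickA'$ do not occur (which is the intended reading of the doubly-indexed type $\later^\kappa\later^\kappa A$), so that $\pret t$ is a morphism $\pret\Gamma \to \tlater^\kappa\pret A$ and its transpose $\transp{\pret t} : \tearlier^\kappa\pret\Gamma \to \pret A$ is available through Lemma~\ref{lem:bijectivecorresp}. Writing $\eta$ for the unit of $\tearlier^\kappa \dashv \tlater^\kappa$ and $\nxt = \transp{\p_{\tearlier^\kappa}}$, recall from Lemma~\ref{lem:bijectivecorresp} that $\transp{a} = \tlater^\kappa a \circ \eta$ for any morphism $a : \tearlier^\kappa\Gamma \to X$, and in particular $\nxt_F = \tlater^\kappa \p_{\tearlier^\kappa, F} \circ \eta_F$.

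Using the interpretation clauses from Section~\ref{sec:tick:calc:interp}, I would first compute the left-hand side. Inside the context $\L\pret\Gamma$, the weakened term is $\pret t \circ \p_{\tearlier^\kappa, \pret\Gamma}$, and tick-abstracting transposes it across the adjunction, giving
\[
  \pret{\lambda(\tickA'\of\kappa)t} \;=\; \tlater^\kappa\bigl(\pret t \circ \p_{\tearlier^\kappa,\pret\Gamma}\bigr)\circ \eta_{\pret\Gamma} \;=\; \tlater^\kappa \pret t \circ \nxt_{\pret\Gamma}.
\]
For the right-hand side, $\pret{\tapp[\tickA']t}$ in context $\L\L\pret\Gamma$ equals $\transp{\pret t}\circ \p_{\tearlier^\kappa, \L\pret\Gamma}$ by the interpretation of tick application, and abstracting over $\tickA$ transposes this to a morphism $\L\pret\Gamma \to \tlater^\kappa\pret A$, namely $\tlater^\kappa\transp{\pret t} \circ \nxt_{\L\pret\Gamma}$. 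Abstracting once more over $\tickA'$ yields
\[
  \pret{\lambda(\tickA'\of\kappa)(\lambda(\tickA\of\kappa)\tapp[\tickA']t)} \;=\; \tlater^\kappa\tlater^\kappa\transp{\pret t} \circ \tlater^\kappa \nxt_{\L\pret\Gamma} \circ \eta_{\pret\Gamma}.
\]

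The main step is now to recognise that using the identity $\pret t = \tlater^\kappa \transp{\pret t}\circ\eta_{\pret\Gamma}$ (the other direction of the bijection in Lemma~\ref{lem:bijectivecorresp}) and functoriality of $\tlater^\kappa$, the left-hand side rewrites to
\[
  \tlater^\kappa\tlater^\kappa\transp{\pret t}\circ \tlater^\kappa \eta_{\pret\Gamma} \circ \nxt_{\pret\Gamma}.
\]
So the proof reduces to showing $\tlater^\kappa \eta_{\pret\Gamma} \circ \nxt_{\pret\Gamma} = \tlater^\kappa \nxt_{\L\pret\Gamma}\circ \eta_{\pret\Gamma}$. This is exactly Lemma~\ref{lem:laterofnextisnextoflater} instantiated at $F = \pret\Gamma$, $G = \L\pret\Gamma$ and $\varphi = \eta_{\pret\Gamma} : \pret\Gamma \to \tlater^\kappa\L\pret\Gamma$, which concludes the proof.

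The main obstacle is bookkeeping rather than conceptual: one must carefully identify, in both unfoldings, at which objects the natural transformations $\eta$, $\p_{\tearlier^\kappa}$ and $\nxt$ live, and check that the substitution along the context projections $\p_{\L}$ translates (via Lemma~\ref{lem:bijectivecorresp}) into precomposition with the appropriate $\tlater^\kappa$-image of a $\nxt$. Once the two expressions are written in the normal form above, the final equality is a one-line application of the naturality-like identity in Lemma~\ref{lem:laterofnextisnextoflater}.
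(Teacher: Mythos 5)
Your proof finds exactly the right key ingredient: the whole argument does indeed reduce to the ``commutation'' identity of Lemma~\ref{lem:laterofnextisnextoflater} applied to $\varphi = \eta_{\pret\Gamma}$, and the paper's proof uses the same lemma as its pivot. The two unfoldings you give, $\tlater^\kappa\tlater^\kappa\transp{\pret t}\circ\tlater^\kappa\eta_{\pret\Gamma}\circ\nxt_{\pret\Gamma}$ versus $\tlater^\kappa\tlater^\kappa\transp{\pret t}\circ\tlater^\kappa\nxt_{\L\pret\Gamma}\circ\eta_{\pret\Gamma}$, are the right normal forms.

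There is, however, a real gap in the step where you declare that ``$\pret t$ is a morphism $\pret\Gamma \to \tlater^\kappa\pret A$''. The hypothesis ``$\tickA, \tickA'$ do not occur in $A$'' does not make $A$ a closed type: $A$ is still a type \emph{over} $\Gamma$ (this generality is needed, e.g., for the motivating example where $A = \hat P(xs\,[\tickA])$ depends on a stream variable). In a CwF, $\pret A$ is then a family over $\pret\Gamma$, not an object, and $\pret t$ is a term in the sense $\cwftm{\pret\Gamma}{\pret t}{\R_{\pret\Gamma}\pret A}$, not a morphism between two objects of $\gr\Delta$. Consequently $\transp{\pret t}$ is not a morphism $\tearlier^\kappa\pret\Gamma \to \pret A$ either; its unfolding by Lemma~\ref{lem:bijectivecorresp} is the term $\q_{\pret A}[\epsilon\circ\L(\isocomp{\L\pret\Gamma}{\pret A}\circ\la\eta,\pret t\ra)]$, and all the intermediate expressions carry the comprehension iso $\isocomp{\tearlier^\kappa\pret\Gamma}{\pret A}$ around with them. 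The paper's proof manages this explicitly: after pushing $\transp{\p_{\tearlier^\kappa}}$ across $\tearlier^\kappa(\cdots)$ using its naturality and then applying Lemma~\ref{lem:laterofnextisnextoflater}, the final collapse to $\tlater^\kappa\pret t[\transp{\p_{\tearlier^\kappa}}]$ uses the axiom $(\tlater^\kappa \q_{\pret{A}})[\isocomp{\tearlier^\kappa\pret\Gamma}{\pret{A}}] = \q_{\tlater^\kappa A}$ from Definition~\ref{def:cwfa}. That axiom is precisely what lets one mimic the ``morphism-level'' algebra you carry out; calling the remainder ``bookkeeping'' understates that the comprehension-compatibility of $\R$ is doing genuine work, not just notational housekeeping. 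So: right route, right key lemma, but your reduction to plain morphisms only covers the closed-type case and must be replaced by the CwF-term calculation to prove the lemma as stated.
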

\begin{proof}
By the substitution lemma and Lemma~\ref{lem:syn:proj:nu} (below)
\[\pret{\hastype{\Gamma}{\lambda(\tickA'\of \kappa) t}{\later^\kappa \later^\kappa A}} = \transp{\pret{t}[\p_{\tearlier^\kappa]}]}\]
which by definition is equal to
\begin{align*}
(\tlater^\kappa(\pret{t} [\p_{\tearlier^\kappa}]))[\eta] = (\tlater^\kappa(\pret{t})) [\tlater^\kappa \p_{\tearlier^\kappa} \circ \eta] = (\tlater^\kappa(\pret{t})) [\transp{\p_{\tearlier^\kappa}}]
\end{align*}

On the other hand by Lemma~\ref{lem:bijectivecorresp} we have  
\[\pret{\hastype{\Gamma, \tickA' \of \kappa}{\tapp[\tickA'] t}{A}} = \q_\pret{A}[\epsilon \circ \tearlier^\kappa(\isocomp {\tearlier^\kappa \pret{\Gamma}}{\pret{A}} \circ \la \eta, \pret{t}\ra)]\]
By the substitution lemma and Lemma~\ref{lem:syn:proj:nu} (below), we have then that 
\begin{align*}
\pret{\hastype{\Gamma, \tickA' \of \kappa, \tickA \of \kappa}{\tapp[\tickA'] t}{A}} 
& = \q_\pret{A}[\epsilon \circ \tearlier^\kappa(\isocomp{\tearlier^\kappa \pret{\Gamma}}{\pret{A}} \circ \la \eta, \pret{t}\ra)][\p_{\tearlier^\kappa}] 
\end{align*}
and so
\begin{align*}
 \pret{\hastype{\Gamma, \tickA' \of \kappa}{\lambda(\tickA\of\kappa)\tapp[\tickA'] t}{A}} 
& = (\tlater^\kappa (\q_\pret{A}[\epsilon \circ \tearlier^\kappa(\isocomp{\tearlier^\kappa \pret{\Gamma}}{\pret{A}} \circ \la \eta, \pret{t}\ra)][\p_{\tearlier^\kappa}]))[\eta] 
\end{align*}
We then get
\[\pret{\hastype{\Gamma}{\lambda(\tickA' \of \kappa)\lambda(\tickA\of \kappa) \tapp[\tickA'] t }{\later^\kappa \later^\kappa A}} = (\tlater^\kappa ((\tlater^\kappa (\q_{\pret{A}}[\epsilon \circ \tearlier^\kappa (\isocomp{\tearlier^\kappa \pret{\Gamma}}{\pret{A}} \circ \la \eta,\pret{t}\ra)] [\p_{\tearlier^\kappa}]))[\eta]))[\eta]\]
But for any $u$ and $\sigma$ we have $\tlater^\kappa (u[\sigma]) = (\tlater^\kappa u) [\tlater^\kappa \sigma]$, hence
\begin{align*}
(\tlater^\kappa (\q_{\pret{A}}[\epsilon \circ& \tearlier^\kappa (\isocomp {\tearlier^\kappa \pret{\Gamma}}{\pret{A}}\circ \la \eta,\pret{t}\ra)] [\p_{\tearlier^\kappa}]))[\eta] \\
&= (\tlater^\kappa \q_{\pret{A}})[\tlater^\kappa \epsilon \circ \tlater^\kappa \tearlier^\kappa (\isocomp {\tearlier^\kappa \pret{\Gamma}}{\pret{A}} \circ \la \eta,\pret{t}\ra)][\tlater^\kappa \p_{\tearlier^\kappa}][\eta]\\
&=(\tlater^\kappa \q_{\pret{A}})[\tlater^\kappa \epsilon \circ \tlater^\kappa \tearlier^\kappa (\isocomp {\tearlier^\kappa \pret{\Gamma}}{\pret{A}} \circ \la \eta,\pret{t}\ra)][\tlater^\kappa \p_{\tearlier^\kappa} \circ \eta]\\
&=(\tlater^\kappa \q_{\pret{A}})[\tlater^\kappa \epsilon \circ \tlater^\kappa \tearlier^\kappa (\isocomp {\tearlier^\kappa \pret{\Gamma}}{\pret{A}} \circ \la \eta,\pret{t}\ra) \circ \transp{\p_{\tearlier^\kappa}}] \\
&=(\tlater^\kappa \q_{\pret{A}})[\transp{\p_{\tearlier^\kappa}} \circ \epsilon \circ  \tearlier^\kappa (\isocomp {\tearlier^\kappa \pret{\Gamma}}{\pret{A}} \circ \la \eta,\pret{t}\ra)] 
\end{align*}
In the last equality we have used naturality of $\transp{\p_{\tearlier^\kappa}}$, which is an easy consequence of the 
concrete description in Section~\ref{sec:basic:model}. 
Thus
\begin{align*}
\pret{\hastype{\Gamma}{\lambda(\tickA' \of \kappa)\lambda(\tickA\of \kappa) \tapp[\tickA'] t }{\later^\kappa \later^\kappa A}}& = (\tlater^\kappa ( (\tlater^\kappa \q_{\pret{A}})[\transp{\p_{\tearlier^\kappa}} \circ \epsilon \circ  \tearlier^\kappa (\isocomp {\tearlier^\kappa \pret{\Gamma}}{\pret{A}} \circ \la \eta,\pret{t}\ra)] )) [\eta]\\
&= (\tlater^\kappa \tlater^\kappa \q_{\pret{A}}) [\tlater^\kappa \transp{\p_{\tearlier^\kappa}} \circ \tlater^\kappa \epsilon \circ  \tlater^\kappa \tearlier^\kappa (\isocomp {\tearlier^\kappa \pret{\Gamma}}{\pret{A}} \circ \la \eta,\pret{t}\ra)\circ \eta]\\
&=(\tlater^\kappa \tlater^\kappa \q_{\pret{A}}) [\tlater^\kappa \transp{\p_{\tearlier^\kappa}} \circ \tlater^\kappa \epsilon\circ \eta \circ (\isocomp {\tearlier^\kappa \pret{\Gamma}}{\pret{A}} \circ \la \eta,\pret{t}\ra)]\\
&=(\tlater^\kappa \tlater^\kappa \q_{\pret{A}}) [\tlater^\kappa \transp{\p_{\tearlier^\kappa}} \circ (\isocomp {\tearlier^\kappa \pret{\Gamma}}{\pret{A}} \circ \la \eta,\pret{t}\ra)]
\end{align*}
by Lemma~\ref{lem:laterofnextisnextoflater} we have then that
\begin{align*}
\pret{\hastype{\Gamma}{\lambda(\tickA' \of \kappa)\lambda(\tickA\of \kappa) \tapp[\tickA'] t }{\later^\kappa \later^\kappa A}}
&= (\tlater^\kappa \tlater^\kappa \q_{\pret{A}}) [\tlater^\kappa (\isocomp {\tearlier^\kappa \pret{\Gamma}}{\pret{A}} \circ \la \eta,\pret{t}\ra)] [\transp{\p_{\tearlier^\kappa}}] \\
&= (\tlater^\kappa (\tlater^\kappa \q_{\pret{A}} [(\isocomp {\tearlier^\kappa \pret{\Gamma}}{\pret{A}} \circ \la \eta,\pret{t}\ra)]))[\transp{\p_{\tearlier^\kappa}}]
\end{align*}

But $(\tlater^\kappa \q_{\pret{A}})[\isocomp {\tearlier^\kappa \pret{\Gamma}}{\pret{A}}] = \q_{\tlater^\kappa A}$ by Definition~\ref{def:cwfa}, thus
\[(\tlater^\kappa (\tlater^\kappa \q_{\pret{A}} [(\isocomp {\tearlier^\kappa \pret{\Gamma}}{\pret{A}} \circ \la \eta,\pret{t}\ra)]))[\transp{\p_{\tearlier^\kappa}}] = \tlater^\kappa \pret{t} [\transp{\p_{\tearlier^\kappa}}] \]
which we saw above equals $\pret{\hastype{\Gamma}{\lambda(\tickA'\of \kappa) t}{\later^\kappa \later^\kappa A}}$. 
\end{proof}

Finally we can prove Proposition~\ref{prop:tirr}.

\begin{proofof}{Proposition~\ref{prop:tirr}}
By Lemma~\ref{lem:tirr} we have that if $\hastype{\Gamma}{t}{\later(\alpha \of\kappa) A}$ then 
\[
\pret{\hastype{\Gamma}{\lambda(\tickA'\of \kappa) t}{\later^\kappa \later^\kappa A}} =
\pret{\hastype{\Gamma}{\lambda(\tickA' \of \kappa)\lambda(\tickA\of \kappa) \tapp[\tickA'] t }{\later^\kappa \later^\kappa A}}
\] 
By double application we then get that
\[
\pret{\hastype{\Gamma, \tickA : \kappa, \tickA' : \kappa}{\tapp[\tickA']t}{A}}
 = \pret{\hastype{\Gamma, \tickA : \kappa, \tickA' : \kappa}{\tapp t}{A}}
\qedhere\] 
\end{proofof}

\section{Substitutions}

In this section we present the most interesting cases of the substitution lemma, leaving the case of universal quantification
over clocks to Appendix~\ref{app:forall}. We also prove the formulas (\ref{eq:e:comm:eta}) and (\ref{eq:e:comm:eps})
mentioned in Section~\ref{sec:substitutions}. 

We need the following lemma.

\begin{lemma}\label{lem:e:nat}
 If $\nu : \Delta'' \to \Delta'$ and $\mu : \Delta' \to \Delta$ and $\kappa \in \Delta''$ then 
 \[
 \mu^*(e^{\nu,\kappa}) \circ e^{\mu, \nu(\kappa)} = e^{\mu\nu,\kappa} : \,
 \tearlier^{\mu\nu(\kappa)}\mu^*\nu^* \to \mu^*\nu^*\tearlier^\kappa
 \]
 Moreover, $e^{\id,\kappa}$ is the identity. 
\end{lemma}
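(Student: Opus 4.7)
The plan is to recognise $e^{\nu,\kappa}$ as the mate, in the sense of Kelly and Street, of the identity $2$-cell witnessing the equality $\nu^*\tlater^\kappa = \tlater^{\nu(\kappa)}\nu^*$ from the preceding lemma, taken with respect to the adjunctions $\tearlier^\kappa \dashv \tlater^\kappa$ and $\tearlier^{\nu(\kappa)} \dashv \tlater^{\nu(\kappa)}$. Indeed, the defining formula $\epsilon^{\nu(\kappa)}_{\nu^*(\tearlier^\kappa\Gamma)} \circ \tearlier^{\nu(\kappa)}\nu^*(\eta^\kappa_\Gamma)$ is exactly the general mate formula specialised to an identity starting $2$-cell. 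Once this identification is in place, both parts of the lemma reduce to standard compositionality properties of mates.

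For the ``moreover'' clause, $e^{\id,\kappa}$ is the mate of the identity on $\tlater^\kappa$ with respect to $\tearlier^\kappa \dashv \tlater^\kappa$ on both sides; unfolding the mate formula yields $\epsilon^\kappa_{\tearlier^\kappa} \circ \tearlier^\kappa\eta^\kappa$, which equals the identity on $\tearlier^\kappa$ by one of the triangle identities.

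For the main equation, observe that $\mu^*\nu^*\tlater^\kappa = \tlater^{\mu\nu(\kappa)}\mu^*\nu^*$---the equality whose mate is $e^{\mu\nu,\kappa}$---arises as the vertical composite of $\nu^*\tlater^\kappa = \tlater^{\nu(\kappa)}\nu^*$ (whiskered on the left by $\mu^*$) with $\mu^*\tlater^{\nu(\kappa)} = \tlater^{\mu\nu(\kappa)}\mu^*$ (whiskered on the right by $\nu^*$). Since mates behave compositionally under such vertical pasting of squares of adjunctions, specialising this general fact to our setting gives exactly $\mu^*(e^{\nu,\kappa}) \circ e^{\mu,\nu(\kappa)}_{\nu^*(-)} = e^{\mu\nu,\kappa}$, as required.

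An equivalent direct route expands both sides by the definition of $e$, moves $\epsilon^{\mu\nu(\kappa)}$ past the middle composite using naturality, interchanges $\mu^*\tlater^{\nu(\kappa)}$ with $\tlater^{\mu\nu(\kappa)}\mu^*$ via the preceding lemma, and closes by the triangle identities for $\tearlier^{\mu\nu(\kappa)} \dashv \tlater^{\mu\nu(\kappa)}$. The main obstacle on this route is purely bookkeeping---tracking at which object each unit and counit is instantiated, and ensuring that the two identifications $\nu^*\tlater^\kappa = \tlater^{\nu(\kappa)}\nu^*$ and $\mu^*\tlater^{\nu(\kappa)} = \tlater^{\mu\nu(\kappa)}\mu^*$ are applied in a consistent order---which the mates formulation sidesteps by working at the level of pasting diagrams.
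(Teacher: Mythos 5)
Your proposal is correct, and it takes a genuinely more conceptual route than the paper. The paper proves the main equation by a five-step equational calculation: it expands both $e$'s by definition, then applies naturality of $\epsilon$, the fact that $\mu^*$ commutes on the nose with $\tlater$, naturality of $\eta$, recognises a sub-expression as $e^{\nu(\kappa),\mu}_{\nu^*\tlater^\kappa\tearlier^\kappa}$, and finally invokes equation~(\ref{eq:e:comm:eps}), i.e.\ $\nu^*(\epsilon^\kappa)\circ e^{\kappa,\nu}_{\tlater^\kappa}=\epsilon^{\nu(\kappa)}_{\nu^*}$, to close the computation. The \emph{moreover} clause is dispatched in one line by appeal to ``the adjunction equations'', which is exactly the triangle identity $\epsilon^\kappa_{\tearlier^\kappa}\circ\tearlier^\kappa\eta^\kappa=\id$ that you spell out.

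Your identification of $e^{\nu,\kappa}$ as the Kelly--Street mate of the identity $2$-cell $\nu^*\tlater^\kappa=\tlater^{\nu(\kappa)}\nu^*$ is correct: the defining formula $\epsilon^{\nu(\kappa)}_{\nu^*\tearlier^\kappa}\circ\tearlier^{\nu(\kappa)}\nu^*(\eta^\kappa)$ is precisely the mate construction $\tearlier^{\nu(\kappa)}\nu^*\xrightarrow{\tearlier^{\nu(\kappa)}\nu^*\eta^\kappa}\tearlier^{\nu(\kappa)}\nu^*\tlater^\kappa\tearlier^\kappa=\tearlier^{\nu(\kappa)}\tlater^{\nu(\kappa)}\nu^*\tearlier^\kappa\xrightarrow{\epsilon^{\nu(\kappa)}}\nu^*\tearlier^\kappa$. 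With that identification, the main equation is an instance of the standard fact that mates compose under vertical pasting of adjunction squares, and the \emph{moreover} clause is the fact that the mate of the identity square is the identity (which unfolds to the triangle identity exactly as you say). What this buys you over the paper's proof is that the careful tracking of which naturality square is used where, and at which component each unit or counit is instantiated, is entirely absorbed into a single citation to the functoriality of the mates correspondence. The trade-off is that the paper's version is self-contained, whereas yours presupposes the general theory; and the paper's intermediate step via equation~(\ref{eq:e:comm:eps}) is itself used elsewhere, so the authors had a second reason to phrase the argument that way. Both proofs are sound, and your mates-based framing is arguably the better conceptual explanation of \emph{why} the lemma is true.
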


\begin{proof}
\begin{align*}
 \mu^*(e^{\nu,\kappa}) \circ e^{\mu, \nu(\kappa)} &= \mu^* \epsilon^{\nu(\kappa)}_{\nu^* \tearlier^\kappa} \circ \mu^* \tearlier^{\nu(\kappa)} \nu^*(\eta^\kappa) \circ \epsilon^{\mu\nu(\kappa)}_{\mu^* \tearlier^{\nu(\kappa)}} \circ \tearlier^{\mu\nu(\kappa)} \mu^* (\eta^{\nu(\kappa)}_{\nu^*})\\
 &=\mu^* \epsilon^{\nu(\kappa)}_{\nu^* \tearlier^\kappa} \circ \epsilon^{\mu\nu(\kappa))}_{\mu^* \tearlier^{\nu(\kappa)} \nu^* \tlater^\kappa \tearlier^\kappa} \circ \tearlier^{\mu\nu(\kappa)} \tlater^{\mu\nu(\kappa)} \mu^* \tearlier^{\nu(\kappa)} \nu^*(\eta) \circ \tearlier^{\mu\nu(\kappa)} \mu^* (\eta^{\nu(\kappa)}_{\nu^*})\\
 &=\mu^* \epsilon^{\nu(\kappa)}_{\nu^* \tearlier^\kappa} \circ \epsilon^{\mu\nu(\kappa)}_{\mu^* \tearlier^{\nu(\kappa)} \nu^* \tlater^\kappa \tearlier^\kappa} \circ \tearlier^{\mu\nu(\kappa)}\mu^*( \tlater^{\nu(\kappa)} \tearlier^{\nu(\kappa)} \nu^*(\eta) \circ (\eta^{\nu(\kappa)}_{\nu^*}))\\
 &=\mu^* \epsilon^{\nu(\kappa)}_{\nu^* \tearlier^\kappa} \circ \epsilon^{\mu\nu(\kappa)}_{\mu^* \tearlier^{\nu(\kappa)} \nu^* \tlater^\kappa \tearlier^\kappa} \circ \tearlier^{\mu\nu(\kappa)}\mu^*( \eta^{\nu(\kappa)}_{\nu^* \tlater^\kappa \tearlier^\kappa} \circ \nu^* \eta)
\end{align*}
Where the second equality follows by naturality of $\epsilon$, the third by commutation of $\mu^*$ with $\tlater$, and the fourth by naturality of $\eta$. By the definition
\[e^{\nu(\kappa),\mu}_{\nu^* \tlater^\kappa \tearlier^\kappa} = \epsilon^{\mu\nu(\kappa)}_{\mu^* \tearlier^{\nu(\kappa)} \nu^* \tlater^\kappa \tearlier^\kappa} \circ \tearlier^{\mu\nu(\kappa)} \mu^* (\eta^{\nu(\kappa)}_{\nu^* \tlater^\kappa \tearlier^\kappa})\]
We have then
\begin{align*}
 \mu^*(e^{\nu,\kappa}) \circ e^{\mu, \nu(\kappa)} &= \mu^* \epsilon^{\nu(\kappa)}_{\nu^* \tearlier^\kappa} \circ e^{\nu(\kappa),\mu}_{\nu^* \tlater^\kappa \tearlier^\kappa} \circ \tearlier^{\mu\nu(\kappa)}\mu^*(\nu^* \eta)
\end{align*}
By Formula~(\ref{eq:e:comm:eps}) we get
\begin{align*}
\mu^*(e^{\nu,\kappa}) \circ e^{\mu, \nu(\kappa)} &= \epsilon^{\mu\nu(\kappa)}_{(\mu \nu)^*\tearlier^\kappa} \circ \tearlier^{\mu\nu(\kappa)}\mu^*(\nu^* \eta)
\end{align*}
Which by the definition is equal to $e^{\kappa,\mu \nu}$

The second statement, i.e.~$e^{\id,\kappa}$ is the identity, follows from the adjunction equations.
\end{proof}
%
%

\begin{lemma}
 The formulas (\ref{eq:e:comm:eta}) and (\ref{eq:e:comm:eps}) in the main paper hold.
\end{lemma}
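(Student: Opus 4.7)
The plan is to unfold the definition $e^{\kappa,\nu}_\Gamma = \epsilon^{\nu(\kappa)}_{\nu^*(\tearlier^\kappa \Gamma)} \circ \tearlier^{\nu(\kappa)} \nu^*(\eta_\Gamma^\kappa)$ and then reduce each of the two equations to a triangle identity for the adjunction $\tearlier^{\nu(\kappa)} \dashv \tlater^{\nu(\kappa)}$, using naturality of the unit and counit (and the equality $\nu^*\tlater^\kappa = \tlater^{\nu(\kappa)}\nu^*$, which guarantees the composites type-check) as the glue. Both equalities are really the statement that $e^{\kappa,\nu}$ is the mate, under the adjunctions $\tearlier^{\nu(\kappa)} \dashv \tlater^{\nu(\kappa)}$ and $\tearlier^\kappa \dashv \tlater^\kappa$, of the identity $\nu^*\tlater^\kappa = \tlater^{\nu(\kappa)}\nu^*$. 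So the two equations are the two ways of expressing that the transposes agree under the respective adjunctions.

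For (\ref{eq:e:comm:eta}) I would first apply the functor $\tlater^{\nu(\kappa)}$ to both factors in the definition of $e^{\kappa,\nu}$ and precompose with $\eta^{\nu(\kappa)}_{\nu^*}$. The resulting composite contains $\tlater^{\nu(\kappa)}\tearlier^{\nu(\kappa)}\nu^*(\eta^\kappa) \circ \eta^{\nu(\kappa)}_{\nu^*}$, which by naturality of $\eta^{\nu(\kappa)}$ equals $\eta^{\nu(\kappa)}_{\tlater^{\nu(\kappa)}\nu^*\tearlier^\kappa}\circ \nu^*(\eta^\kappa)$. What remains is $\tlater^{\nu(\kappa)}\epsilon^{\nu(\kappa)}_{\nu^*\tearlier^\kappa} \circ \eta^{\nu(\kappa)}_{\tlater^{\nu(\kappa)}\nu^*\tearlier^\kappa}$, which is the identity by one of the triangle identities; composing with the remaining $\nu^*(\eta^\kappa)$ yields the right-hand side.

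For (\ref{eq:e:comm:eps}) the dual reasoning applies. Unfolding $e^{\kappa,\nu}_{\tlater^\kappa}$, I get $\nu^*(\epsilon^\kappa) \circ \epsilon^{\nu(\kappa)}_{\nu^*\tearlier^\kappa\tlater^\kappa} \circ \tearlier^{\nu(\kappa)}\nu^*(\eta^\kappa_{\tlater^\kappa})$. Naturality of $\epsilon^{\nu(\kappa)}$ applied to the morphism $\nu^*(\epsilon^\kappa)$ lets me slide $\nu^*(\epsilon^\kappa)$ past $\epsilon^{\nu(\kappa)}$, turning the composite into $\epsilon^{\nu(\kappa)}_{\nu^*}\circ \tearlier^{\nu(\kappa)}\nu^*\bigl(\tlater^\kappa(\epsilon^\kappa)\circ \eta^\kappa_{\tlater^\kappa}\bigr)$. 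By the other triangle identity of the adjunction $\tearlier^\kappa \dashv \tlater^\kappa$ the inner composite is the identity on $\tlater^\kappa$, leaving $\epsilon^{\nu(\kappa)}_{\nu^*}$ as required.

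No step is particularly delicate: the only things to watch are (i) that $\nu^*\tlater^\kappa = \tlater^{\nu(\kappa)}\nu^*$ is used silently to make the composites well-typed at the right component, as in the definition of $e^{\kappa,\nu}$, and (ii) the naturality squares are applied at the correct component, since the units and counits are natural in their single argument but are being specialised at objects of the form $\nu^*(\tearlier^\kappa \Gamma)$ or $\nu^*(\tlater^\kappa \Gamma)$. Neither concern requires any explicit pointwise calculation; in particular, I would not unfold the concrete formula for $\tearlier^\kappa$ at all.
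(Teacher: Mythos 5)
Your proposal is correct and follows essentially the same route as the paper's proof: unfold the definition of $e^{\kappa,\nu}$, apply naturality of the appropriate unit or counit to slide $\nu^*(\eta^\kappa)$ (resp.\ $\nu^*(\epsilon^\kappa)$) outward, invoke $\nu^*\tlater^\kappa = \tlater^{\nu(\kappa)}\nu^*$ to align indices, and finish with a triangle identity. The framing of both equations as the two transposes of the mate of $\nu^*\tlater^\kappa = \tlater^{\nu(\kappa)}\nu^*$ is a helpful conceptual gloss the paper omits but which accurately explains why both proofs reduce to triangle identities.
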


\begin{proof}
Equation~(\ref{eq:e:comm:eta}) holds because
\begin{align*}
 \tlater^{\nu(\kappa)}e^\kappa \circ \eta^{\nu(\kappa)}_{\nu^*} &= \tlater^{\nu(\kappa)} \epsilon_{\nu^* \tearlier^\kappa }^{\nu(\kappa)} \circ \tlater^{\nu(\kappa)} \tearlier^{\nu(\kappa)} \nu^*(\eta^\kappa)  \circ \eta^{\nu(\kappa)}_{\nu^*}\\
 &= \tlater^{\nu(\kappa)} \epsilon_{\nu^*\tearlier^\kappa }^{\nu(\kappa)} \circ \eta^{\nu(\kappa)}_{\nu^*\tlater^{\kappa} \tearlier^{\kappa}} \circ \nu^*(\eta^\kappa) \\
 &= \tlater^{\nu(\kappa)} \epsilon_{\nu^*\tearlier^\kappa }^{\nu(\kappa)} \circ \eta^{\nu(\kappa)}_{\tlater^{\nu(\kappa)}  \nu^* \tearlier^{\kappa}} \circ \nu^*(\eta^\kappa) \\
 &= \id_{\tlater^{\nu(\kappa)} \nu^* \tearlier^\kappa} \circ \nu^*(\eta^\kappa)\\
 &=  \nu^*(\eta^\kappa)
\end{align*}

Equation~(\ref{eq:e:comm:eps}) holds because by naturality of the counit
\[\epsilon^{\nu(\kappa)}_{\nu^*} \circ \tearlier^{\nu(\kappa)} \tlater^{\nu(\kappa)} (\nu^* (\epsilon^\kappa)) = \nu^* (\epsilon^\kappa) \circ \epsilon^{\nu(\kappa)}_{\nu^* \tearlier^\kappa \tlater^\kappa}\]
postcomposing the left hand side with with $\tearlier^{\nu(\kappa)} \nu^*(\eta^\kappa_{\tlater^\kappa})$ we get
\begin{align*}
\epsilon^{\nu(\kappa)}_{\nu^*} \circ \tearlier^{\nu(\kappa)} \tlater^{\nu(\kappa)} (\nu^* (\epsilon^\kappa)) \circ \tearlier^{\nu(\kappa)} \nu^*(\eta^\kappa_{\tlater^\kappa})&=\epsilon^{\nu(\kappa)}_{\nu^*} \circ \tearlier^{\nu(\kappa)} \nu^* (\tlater^\kappa \epsilon^\kappa) \circ \tearlier^{\nu(\kappa)} \nu^*(\eta^\kappa_{\tlater^\kappa})\\
&= \epsilon^{\nu(\kappa)}_{\nu^*} \circ \tearlier^{\nu(\kappa)} \nu^* (\tlater^\kappa \epsilon^\kappa \circ \eta^\kappa_{\tlater^\kappa})\\
&=\epsilon^{\nu(\kappa)}_{\nu^*} \circ \tearlier^{\nu(\kappa)} \nu^* (\id_{\tlater^\kappa})\\
&=\epsilon^{\nu(\kappa)}_{\nu^*}
\end{align*}
postcomposing the right hand side with $\tearlier^{\nu(\kappa)} \nu^*(\eta^\kappa_{\tlater^\kappa})$ we get
\[\nu^*(\epsilon^\kappa) \circ  \epsilon_{\nu^* \tearlier^\kappa  \tlater^\kappa}^{\nu(\kappa)} \circ \tearlier^{\nu(\kappa)} \nu^*(\eta^\kappa_{\tlater^\kappa}) = \nu^*(\epsilon^\kappa) \circ e^{\kappa,\nu}_{\tlater^\kappa}\]
\end{proof}

\subsection{Projections}

\newcommand{\proj}[2]{\mathsf{pr}^{#1}_{#2}}

Suppose $\wfcxt\Gamma$ and suppose $\Delta'\subseteq \Delta$ and $\Gamma'$ a subcontext of $\Gamma$, i.e., a 
sublist of $\Gamma$ such that $\wfcxt[\Delta']{\Gamma'}$. There is then a projection 
$\pair i{\proj{\Gamma}{\Gamma'}} : \wfcxt\Gamma \to \wfcxt[\Delta']{\Gamma'}$  where $i$ is the inclusion 
and $\proj{\Gamma}{\Gamma'}$ is defined by induction on $\Gamma'$:
\begin{align*}
 \proj{\Gamma}{\Gamma', x:A} & = \subex{\proj{\Gamma}{\Gamma'}}xx \\
 \proj{\Gamma}{\Gamma', \tickA : \kappa} & = \subex{\proj{\Gamma}{\Gamma'}}\tickA\tickA
\end{align*}
The aim of this subsection is to prove the substitution lemma in the special case of projections. This must be done 
before the proof of the general substitution lemma. 

We first prove a series of lemmas.

\begin{lemma}\label{lem:clocksubstpreservesnext}
Let $\nu:\Delta'\to \Delta$ and $\kappa \in \Delta'$. Then $\nu^*(\transp{\p_{\tearlier^\kappa}}) = \transp{\p_{\tearlier^{\nu(\kappa)} \nu^*}}$.
\end{lemma}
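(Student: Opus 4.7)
The plan is to verify the equality at the level of components by unwinding the concrete descriptions given in Section~\ref{sec:basic:model}. Recall from there that for any $G \in \gr\Delta$ and $\mu \in \Delta$, the adjoint transpose $\transp{\p_{\tearlier^\mu}}_G : G \to \tlater^\mu G$ acts at $\triple\Theta\vartheta f$ by sending $\gamma \in G\triple\Theta\vartheta f$ to its restriction in $G\triple\Theta{\vartheta[f(\mu)-]}f$ along the identity-on-$\Theta$ morphism. I will use this description on both sides.

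First I would unpack $\nu^*(\transp{\p_{\tearlier^\kappa}}_F)$: since $\nu^*$ is defined by reindexing $f$ to $f\circ\nu$, its component at $\triple\Theta\vartheta f$ is the component of $\transp{\p_{\tearlier^\kappa}}_F$ at $\triple\Theta\vartheta{f\nu}$, which restricts $\gamma\in F\triple\Theta\vartheta{f\nu}$ along the identity-on-$\Theta$ morphism to $F\triple\Theta{\vartheta[(f\nu)(\kappa)-]}{f\nu}$. Next I would unpack $\transp{\p_{\tearlier^{\nu(\kappa)}}}_{\nu^*F}$: at $\triple\Theta\vartheta f$ this restricts $\gamma\in \nu^*F\triple\Theta\vartheta f = F\triple\Theta\vartheta{f\nu}$ along the identity-on-$\Theta$ map $\triple\Theta\vartheta f\to \triple\Theta{\vartheta[f(\nu(\kappa))-]}f$ in $\catT_\Delta$, yielding an element of $\nu^*F\triple\Theta{\vartheta[f(\nu(\kappa))-]}f = F\triple\Theta{\vartheta[f(\nu(\kappa))-]}{f\nu}$. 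Because $(f\nu)(\kappa)=f(\nu(\kappa))$, the two codomain objects are literally equal, and both maps are induced by the identity on $\Theta$ in $\catT$; hence they agree.

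An alternative, more abstract route is to first verify the pointwise identity $\nu^*(\p_{\tearlier^\kappa})\circ e^{\kappa,\nu} = \p_{\tearlier^{\nu(\kappa)}\nu^*}$ (a short computation from the concrete descriptions of $e^{\kappa,\nu}$ and of the projections in Section~\ref{sec:basic:model}: on $(X,\gamma)$, both sides restrict $\gamma$ along the map $\chi$ collapsing $\#_\Theta$ back to $f(\nu(\kappa))$). Then expand $\transp{\p_{\tearlier^{\nu(\kappa)}\nu^*}} = \tlater^{\nu(\kappa)}\p_{\tearlier^{\nu(\kappa)}\nu^*}\circ \eta^{\nu(\kappa)}_{\nu^*}$, substitute the above identity, use functoriality of $\tlater^{\nu(\kappa)}$, the commutation $\tlater^{\nu(\kappa)}\circ\nu^* = \nu^*\circ\tlater^\kappa$, and equation~(\ref{eq:e:comm:eta}) to rewrite $\tlater^{\nu(\kappa)}e^{\kappa,\nu}\circ \eta^{\nu(\kappa)}_{\nu^*} = \nu^*(\eta^\kappa)$. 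This yields $\tlater^{\nu(\kappa)}\nu^*(\p_{\tearlier^\kappa})\circ \nu^*(\eta^\kappa) = \nu^*(\tlater^\kappa\p_{\tearlier^\kappa}\circ \eta^\kappa) = \nu^*(\transp{\p_{\tearlier^\kappa}})$, as required.

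There is no real obstacle here; the only care needed is bookkeeping the identifications $\nu^*\tlater^\kappa = \tlater^{\nu(\kappa)}\nu^*$ and $(f\nu)(\kappa) = f(\nu(\kappa))$ when comparing components. I would present the direct componentwise argument, since it is a one-line verification once the concrete description of $\transp{\p_{\tearlier^\kappa}}$ is recalled.
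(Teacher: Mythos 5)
Your primary (componentwise) argument is exactly the paper's proof: both unwind $\transp{\p_{\tearlier^\kappa}}$ as restriction along the identity-on-$\Theta$ map, apply $\nu^*$ by reindexing $f$ to $f\nu$, and observe the two restrictions land in the literally identical set $F\triple\Theta{\vartheta[f\nu(\kappa)-]}{f\nu}$.

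One caution about your alternative route: in the paper's development, the identity $\nu^*(\p_{\tearlier^\kappa})\circ e^{\kappa,\nu} = \p_{\tearlier^{\nu(\kappa)}\nu^*}$ (Lemma~\ref{lem:proj:e}) is \emph{derived from} the present lemma together with (\ref{eq:e:comm:eps}), so invoking it would be circular unless, as you propose, you re-establish it directly from the concrete descriptions before using (\ref{eq:e:comm:eta}) to transport along the adjunction. That version works, and it is a nice illustration that the two ``comparison'' equations for $e^{\kappa,\nu}$, one on each side of the $\tearlier$--$\tlater$ adjunction, determine each other, but the direct componentwise check is shorter and avoids the dependency reshuffling.
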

\begin{proof}
Recall that $\transp{\p_{\tearlier^\kappa}}$ is the transformation $\id \to \tlater^\kappa$ mapping $\gamma \in F\triple \Theta \vartheta f$ to its restriction in $F\triple \Theta {\vartheta[f(\kappa)-]} f$. We have then that $\nu^*(\transp{\p_{\tearlier^\kappa}}):\nu^* \to \nu^* \tlater^\kappa = \tlater^{\nu(\kappa)} \nu^*$ is the transformation mapping $\gamma \in \nu^* F\triple \Theta \vartheta f = F\triple \Theta \vartheta {f\nu}$ to its restriction in $\nu^*(\tlater^\kappa F) \triple \Theta \vartheta f = F\triple \Theta {\vartheta[f\nu(\kappa)-]} {f\nu}$. On the other hand $\transp{\p_{\tearlier^{\nu(\kappa)} \nu^*}}$ is the transformation $\nu^* \to \tlater^{\nu(\kappa)} \nu^*$ mapping $\gamma \in \nu^*F \triple \Theta \vartheta f = F \triple \Theta \vartheta {f\nu}$ to its restriction $\nu^*F \triple \Theta {\vartheta[f(\nu(\kappa))-]} f = F\triple \Theta {\vartheta[f\nu(\kappa)-]} {f\nu}$. Thus the two transformations are equal.
\end{proof}

\begin{lemma} \label{lem:proj:e}
 Let $\nu : \Delta' \to \Delta$ and $\kappa\in \Delta'$. Then 
 $\nu^*(\p_{\tearlier^\kappa})\circ e^{\nu,\kappa} = \p_{\tearlier^{\nu(\kappa)} \nu^*}$
\end{lemma}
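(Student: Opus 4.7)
The plan is to prove the equality of the two natural transformations
\[
\nu^*(\p_{\tearlier^\kappa})\circ e^{\nu,\kappa},\ \p_{\tearlier^{\nu(\kappa)} \nu^*} \,:\, \tearlier^{\nu(\kappa)}\nu^* \to \nu^*
\]
by taking their adjoint transposes under $\tearlier^{\nu(\kappa)} \dashv \tlater^{\nu(\kappa)}$ and showing those transposes agree. Since the transposition is a bijection, equality of transposes implies equality of the original maps.

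First I would compute the transpose of the left-hand side. By the formula $\transp{\alpha} = \tlater^{\nu(\kappa)}\alpha \circ \eta^{\nu(\kappa)}$, the transpose is
\[
\tlater^{\nu(\kappa)}\bigl(\nu^*(\p_{\tearlier^\kappa}) \circ e^{\nu,\kappa}\bigr) \circ \eta^{\nu(\kappa)}_{\nu^*}
= \tlater^{\nu(\kappa)}\nu^*(\p_{\tearlier^\kappa}) \circ \tlater^{\nu(\kappa)}e^{\nu,\kappa} \circ \eta^{\nu(\kappa)}_{\nu^*}.
\]
Applying equation (\ref{eq:e:comm:eta}), which says $\tlater^{\nu(\kappa)}e^{\nu,\kappa} \circ \eta^{\nu(\kappa)}_{\nu^*} = \nu^*(\eta^\kappa)$, this simplifies to
\[
\tlater^{\nu(\kappa)}\nu^*(\p_{\tearlier^\kappa}) \circ \nu^*(\eta^\kappa)
= \nu^*\bigl(\tlater^\kappa \p_{\tearlier^\kappa} \circ \eta^\kappa\bigr)
= \nu^*(\transp{\p_{\tearlier^\kappa}}),
\]
using that $\nu^*$ commutes with $\tlater^\kappa$ up to the identification $\nu^*\tlater^\kappa = \tlater^{\nu(\kappa)}\nu^*$, and the definition of the transpose.

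Next, the transpose of the right-hand side is simply $\transp{\p_{\tearlier^{\nu(\kappa)}\nu^*}}$, which by Lemma~\ref{lem:clocksubstpreservesnext} equals $\nu^*(\transp{\p_{\tearlier^\kappa}})$. Thus both transposes coincide, and the lemma follows. I expect no serious obstacle here: the argument is purely formal, the only slightly delicate point being the bookkeeping of the identification $\nu^*\tlater^\kappa = \tlater^{\nu(\kappa)}\nu^*$ when moving $\nu^*$ past $\tlater^{\nu(\kappa)}$ and the correct statement of Lemma~\ref{lem:clocksubstpreservesnext} as an equality of the transposed restriction maps.
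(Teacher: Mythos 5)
Your proof is correct, and it takes a route that is dual to the paper's. The paper works directly with the untransposed maps: it rewrites $\p_{\tearlier^\kappa}$ as $\epsilon^\kappa \circ \tearlier^\kappa\transp{\p_{\tearlier^\kappa}}$, invokes naturality of $e$, applies equation~(\ref{eq:e:comm:eps}) to replace $\nu^*(\epsilon^\kappa)\circ e^{\kappa,\nu}_{\tlater^\kappa}$ by $\epsilon^{\nu(\kappa)}_{\nu^*}$, and then finishes with Lemma~\ref{lem:clocksubstpreservesnext}. You instead transpose the whole equality under $\tearlier^{\nu(\kappa)}\dashv\tlater^{\nu(\kappa)}$, which turns the problem into one about maps $\nu^*\to\tlater^{\nu(\kappa)}\nu^*$; on that side the relevant identity is~(\ref{eq:e:comm:eta}) rather than~(\ref{eq:e:comm:eps}), and the naturality-of-$e$ step disappears because $e$ has already been absorbed. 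Both proofs bottom out at Lemma~\ref{lem:clocksubstpreservesnext}, so neither is substantially shorter, but your version is a bit cleaner in that it packages the entire computation into two cited identities and the bijectivity of transposition, at the small cost of having to keep track of which adjunction the overline refers to (notably that $\transp{\p_{\tearlier^{\nu(\kappa)}\nu^*}}$ really is the transpose under $\tearlier^{\nu(\kappa)}\dashv\tlater^{\nu(\kappa)}$, which Lemma~\ref{lem:clocksubstpreservesnext} guarantees matches $\nu^*(\transp{\p_{\tearlier^\kappa}})$).
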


\begin{proof} By the definition of $\transp{\p_{\tearlier^\kappa}}$
\begin{align*}
\nu^*(\p_{\tearlier^\kappa})\circ e^{\nu,\kappa} = \nu^*(\epsilon^\kappa \circ \tearlier^\kappa \transp{\p_{\tearlier^\kappa}}) \circ e^{\nu,\kappa}
\end{align*}
By naturality of $e$ 
\begin{align*}
\nu^*(\epsilon^\kappa \circ \tearlier^\kappa \transp{\p_{\tearlier^\kappa}}) \circ e^{\nu,\kappa} = \nu^* (\epsilon^\kappa) \circ e^{\nu,\kappa}_{\tlater^\kappa} \circ \tearlier^{\nu(\kappa)} \nu^* (\transp{\p_{\tearlier^\kappa}})
\end{align*}
By (\ref{eq:e:comm:eps})
\begin{align*}
\nu^*(\epsilon^\kappa \circ \tearlier^\kappa \transp{\p_{\tearlier^\kappa}}) \circ e^{\nu,\kappa} = \epsilon^{\nu(\kappa)}_{\nu^*} \circ \tearlier^{\nu(\kappa)} \nu^* (\transp{\p_{\tearlier^\kappa}})
\end{align*}
By Lemma~\ref{lem:clocksubstpreservesnext} we have
\begin{align*}
\epsilon^{\nu(\kappa)}_{\nu^*} \circ \tearlier^{\nu(\kappa)} \nu^* (\transp{\p_{\tearlier^\kappa}}) = \epsilon^{\nu(\kappa)}_{\nu^*} \circ \tearlier^{\nu(\kappa)} \transp{\p_{\tearlier^{\nu(\kappa)} \nu^*}}
\end{align*}
which by the definition is equal to $\p_{\tearlier^{\nu(\kappa)} \nu^*}$.
\end{proof}



\begin{lemma} \label{lem:syn:proj:nu}
 Let $\pair i{\proj\Gamma{\Gamma'}} : \wfcxt[\Delta]{\Gamma} \to \wfcxt[\Delta']{\Gamma'}$ be a context projection
 and let $\wfcxt[\Delta]{\Gamma, \Gamma_1}$. Then $\pret{\pair i{\proj{\Gamma, \Gamma_1}{\Gamma'}}} 
 = \pair i{\proj\Gamma{\Gamma'}} \circ \p_{\Gamma_1}$. If, moreover, $\wfcxt[\Delta']{\Gamma', \Gamma_1}$
 then also $\pret{\pair i{\proj{\Gamma, \Gamma_1}{\Gamma'}}} 
 = i^*(\p_{\Gamma_1})\pret{\pair i{\proj{\Gamma, \Gamma_1}{\Gamma', \Gamma_1}}}$.
\end{lemma}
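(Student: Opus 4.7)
The plan is to establish the two equations in stages: first $\pret{\pair i{\proj{\Gamma,\Gamma_1}{\Gamma'}}} = \pret{\pair i{\proj\Gamma{\Gamma'}}} \circ \p_{\Gamma_1}$ by induction on $\Gamma'$, and then the second equation by induction on $\Gamma_1$, freely using the first.

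For the first equation, the base case $\Gamma'= \cdot$ is immediate since both sides are the unique map into the terminal. In the variable case $\Gamma' = \Gamma'', x\of A$, expand the interpretation via the comprehension clause as $\cpair{\pret{\pair i{\proj{\Gamma, \Gamma_1}{\Gamma''}}}}{\pret{x}_{\Gamma, \Gamma_1}}$, apply the inductive hypothesis to the first component, use $\pret{x}_{\Gamma, \Gamma_1} = \pret{x}_{\Gamma}[\p_{\Gamma_1}]$ (obtained by unfolding the variable rule and observing that the projection past $x$ in $\Gamma, \Gamma_1$ decomposes through $\p_{\Gamma_1}$), and conclude via the CwF identity $\cpair{\gamma}{t}\circ\delta = \cpair{\gamma\circ\delta}{t[\delta]}$. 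Somewhat surprisingly, the tick case $\Gamma' = \Gamma'', \tickA\of\kappa$ does not invoke the inductive hypothesis at all: if the source $\Gamma$ has the form $\Gamma_0, \tickA\of\kappa, \Gamma_2$, then both $\pret{\pair i{\proj{\Gamma}{\Gamma'', \tickA\of\kappa}}}$ and $\pret{\pair i{\proj{\Gamma, \Gamma_1}{\Gamma'', \tickA\of\kappa}}}$ unfold by the tick clause of Section~\ref{sec:semantic:subst} as $e^{i,\kappa}_{\pret{\Gamma''}}\circ\tearlier^{i(\kappa)}\pret{\pair i{\proj{\Gamma_0}{\Gamma''}}}$ precomposed with the projection from the respective source to $\pret{\Gamma_0, \tickA\of\kappa}$, namely $\p_{\Gamma_2}$ in the former case and $\p_{\Gamma_2, \Gamma_1} = \p_{\Gamma_2}\circ\p_{\Gamma_1}$ in the latter by the inductive definition of context projection.

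The second equation is proved by induction on $\Gamma_1$. The base $\Gamma_1 = \cdot$ is trivial because $\p_\cdot = \id$. In the variable case $\Gamma_1 = \Gamma_1', x\of A$, we use that $i^*$ preserves the CwF structure to rewrite $i^*(\p_{\Gamma_1', x\of A}) = i^*(\p_{\Gamma_1'}) \circ \p_{i^*A}$, unfold $\pret{\pair i{\proj{\Gamma, \Gamma_1', x\of A}{\Gamma', \Gamma_1', x\of A}}}$ as a comprehension pair and cancel $\p_{i^*A}$ against the pair, apply the first equation to shift the source from $\Gamma, \Gamma_1', x\of A$ to $\Gamma, \Gamma_1'$, invoke the inductive hypothesis on $\Gamma_1'$, and finally apply the first equation once more to recover the left-hand side.

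The main technical obstacle is the tick case $\Gamma_1 = \Gamma_1', \tickA\of\kappa$. Here we unfold $i^*(\p_{\Gamma_1', \tickA\of\kappa}) = i^*(\p_{\Gamma_1'})\circ i^*(\p_{\tearlier^\kappa})_{\pret{\Gamma', \Gamma_1'}}$ and the tick clause for $\pret{\pair i{\proj{\Gamma, \Gamma_1', \tickA\of\kappa}{\Gamma', \Gamma_1', \tickA\of\kappa}}}$ to bring the composite $i^*(\p_{\tearlier^\kappa})_{\pret{\Gamma', \Gamma_1'}}\circ e^{i,\kappa}_{\pret{\Gamma', \Gamma_1'}}$ into view; Lemma~\ref{lem:proj:e} rewrites this as $(\p_{\tearlier^{i(\kappa)}})_{i^*\pret{\Gamma',\Gamma_1'}}$. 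Naturality of $\p_{\tearlier^{i(\kappa)}}$ lets us pull $\tearlier^{i(\kappa)}\pret{\pair i{\proj{\Gamma, \Gamma_1'}{\Gamma', \Gamma_1'}}}$ to the left of the projection, and using $i(\kappa) = \kappa$ (since $i$ is the inclusion and $\kappa\in\Delta'$ by the well-formedness of $\Gamma', \Gamma_1$ under $\Delta'$), the inductive hypothesis on $\Gamma_1'$ followed by the first equation reduces the expression to $\pret{\pair i{\proj\Gamma{\Gamma'}}}\circ\p_{\Gamma_1'}\circ(\p_{\tearlier^\kappa})_{\pret{\Gamma, \Gamma_1'}} = \pret{\pair i{\proj\Gamma{\Gamma'}}}\circ\p_{\Gamma_1', \tickA\of\kappa}$, which equals the left-hand side by one more application of the first equation.
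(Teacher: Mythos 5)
Your proof is correct and follows essentially the same route as the paper's: induction on $\Gamma'$ for the first equation with both sides unfolding identically in the tick case, and induction on $\Gamma_1$ for the second equation, using Lemma~\ref{lem:proj:e} together with naturality of $\p_{\tearlier}$ in the tick case and the first equation to move the source context in both the variable and tick cases. Your write-up is in fact slightly more explicit than the paper's (which compresses the ``shift the source'' step and is loose with primes), but the decomposition and the lemmas invoked are the same.
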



\begin{proof}
 The proof of the first equality is by induction on $\Gamma'$. In the case of $\Gamma' = \Gamma'', x: A$, 
 we get 
\begin{align*}
 \pret{\pair i{\proj{\Gamma, \Gamma_1}{\Gamma'}}} 
 & = \cpair{\pret{\pair i{\proj{\Gamma, \Gamma_1}{\Gamma''}}}}{\q[\pair i{\pret{\proj{\Gamma, \Gamma_1}{\Gamma''}}}]} \\
 & = \cpair{\pret{\pair i{\proj{\Gamma}{\Gamma''}}}\circ \p_{\Gamma_1}}{\q[\pret{\pair i{\proj{\Gamma}{\Gamma''}}}\circ \p_{\Gamma_1}]} \\
 & = \cpair{\pret{\pair i{\proj{\Gamma}{\Gamma''}}}}{\q[\pret{\pair i{\proj{\Gamma}{\Gamma''}}}]} \circ \p_{\Gamma_1}\\
 & = \pret{\pair i{\proj{\Gamma}{\Gamma'}}} \circ \p_{\Gamma_1}
\end{align*}
In the case of $\Gamma' = \Gamma'', \tickA : \kappa$, the context $\Gamma$ must be of the form 
$\Gamma_0, \tickA : \kappa, \Gamma_2$ and 
\begin{align*}
 \pret{\pair i{\proj{\Gamma, \Gamma_1}{\Gamma'}}} 
 & = e^{i, \kappa} \circ \tearlier^{\kappa}(\pret{\pair i{\proj{\Gamma_0}{\Gamma''}}}) \circ \p_{\Gamma_2, \Gamma_1} \\
 & = e^{i, \kappa} \circ \tearlier^{\kappa}(\pret{\pair i{\proj{\Gamma_0}{\Gamma''}}}) 
 \circ \p_{\Gamma_2} \circ \p_{\Gamma_1} \\
 & = \pret{\pair i{\proj{\Gamma}{\Gamma'}}} \circ \p_{\Gamma_1} 
\end{align*}

The proof of the second equality is also by induction on $\Gamma_1$. 
In the case that $\Gamma_1 = \Gamma_1', x: A$ we get 
\begin{align*}
i^*(\p_{\Gamma_1})\circ \pret{\pair i{\proj{\Gamma, \Gamma_1}{\Gamma', \Gamma_1}}} 
& = i^*(\p_{\Gamma_1})\circ i^*(\p_{A}) \circ
\cpair{\pret{\pair i{\proj{\Gamma, \Gamma_1}{\Gamma', \Gamma_1'}}}}{\q[\pret{\pair i{\proj{\Gamma, \Gamma_1}{\Gamma', \Gamma_1'}}}]}  \\
& = i^*(\p_{\Gamma_1}) \circ \p_{i^*(A)} \circ 
\cpair{\pret{\pair i{\proj{\Gamma, \Gamma_1}{\Gamma', \Gamma_1'}}}}{\q[\pret{\pair i{\proj{\Gamma, \Gamma_1}{\Gamma', \Gamma_1'}}}]}  \\
& = i^*(\p_{\Gamma_1}) \circ 
\pret{\pair i{\proj{\Gamma, \Gamma_1}{\Gamma', \Gamma_1'}}} \\
& =\pret{\pair i{\proj{\Gamma, \Gamma_1}{\Gamma'}}}
\end{align*}
In the case that $\Gamma_1 = \Gamma_1', \tickA : \kappa$ we get 
\begin{align*}
 i^*(\p_{\Gamma_1'})\pret{\pair i{\proj{\Gamma, \Gamma_1}{\Gamma', \Gamma_1}}} 
& = i^*(\p_{\Gamma_1})\circ i^*(\p_{\tearlier^{\kappa}}) \circ e^{i, \kappa} \circ \tearlier^\kappa{\pret{\pair i{\proj{\Gamma, \Gamma_1'}{\Gamma', \Gamma_1'}}}} \\
& = i^*(\p_{\Gamma_1'})\circ \p_{\tearlier^{\kappa}} \circ \tearlier^\kappa{\pret{\pair i{\proj{\Gamma, \Gamma_1'}{\Gamma', \Gamma_1'}}}} \\
& = i^*(\p_{\Gamma_1'}) \circ \pret{\pair i{\proj{\Gamma, \Gamma_1'}{\Gamma', \Gamma_1'}}}
\circ \p_{\tearlier^{\kappa}} \\
& = \pret{\pair i{\proj{\Gamma, \Gamma_1'}{\Gamma'}}} \circ \p_{\tearlier^{\kappa}} \\
& = \pret{\pair i{\proj{\Gamma, \Gamma_1}{\Gamma'}}} 
\end{align*}
Using Lemma~\ref{lem:proj:e} in the second equality and the first statement of the lemma in the last equality.
\end{proof}

\begin{lemma} \label{lem:proj:comp}
 If $\pair i{\proj{\Gamma}{\Gamma'}} : \wfcxt{\Gamma} \to \wfcxt[\Delta']{\Gamma'}$ and 
 $\pair j{\proj{\Gamma'}{\Gamma''}} : \wfcxt[\Delta']{\Gamma'} \to \wfcxt[\Delta'']{\Gamma''}$ then 
 \[
 i^*(\pret{\pair j{\proj{\Gamma'}{\Gamma''}}}) \circ \pret{\pair i{\proj{\Gamma}{\Gamma'}}}
 = \pret{\pair{ij}{\proj\Gamma{\Gamma''}}}
 \]
\end{lemma}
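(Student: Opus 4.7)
The proof proceeds by induction on the length of $\Gamma''$. The base case $\Gamma'' = \cdot$ is immediate: both sides are the unique map to the terminal context, using that $i^*$ preserves the terminal object. For the inductive step where $\Gamma'' = \Gamma''', x : A$, we expand using the CwF identity
\[
\pret{\pair j{\subex{\proj{\Gamma'}{\Gamma'''}}xx}} = \cpair{\pret{\pair j{\proj{\Gamma'}{\Gamma'''}}}}{\q},
\]
use that $i^*$ is a CwF morphism and hence commutes with $\cpair{-}{-}$ and $\q$ on the nose, then apply the induction hypothesis on $\Gamma'''$ together with the standard equation $\cpair{\sigma}{t}\circ \tau = \cpair{\sigma\circ \tau}{t[\tau]}$.

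For the harder inductive step where $\Gamma'' = \Gamma''', \tickA : \kappa$, the context $\Gamma'$ must be of the form $\Gamma'_0, \tickA : \kappa, \Gamma'_1$ with $\Gamma''' \subseteq \Gamma'_0$, and similarly $\Gamma = \Gamma_0, \tickA : \kappa, \Gamma_1$ with $\Gamma'_0 \subseteq \Gamma_0$ and $\Gamma'_1 \subseteq \Gamma_1$ as sublists. Unfolding the definition of the interpretation gives
\[
\pret{\pair j{\proj{\Gamma'}{\Gamma''}}} = e^{j,\kappa} \circ \tearlier^{\kappa}(\pret{\pair j{\proj{\Gamma'_0}{\Gamma'''}}}) \circ \p_{\Gamma'_1}
\]
and analogously for $\pret{\pair{ij}{\proj{\Gamma}{\Gamma''}}}$. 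The plan is to apply $i^*$ to the first expression, compose with $\pret{\pair i{\proj{\Gamma}{\Gamma'}}}$, and rewrite. Using Lemma~\ref{lem:syn:proj:nu} we can express $\pret{\pair i{\proj{\Gamma}{\Gamma'}}}$ as $i^*(\p_{\Gamma'_1}) \circ \pret{\pair i{\proj{\Gamma}{\Gamma', \Gamma'_1}}}$ and push the tail-context projection past the $\tearlier^\kappa$-block using naturality.

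The key algebraic step is then Lemma~\ref{lem:e:nat}, which (using that $j$ is an inclusion and hence $j(\kappa) = \kappa$) gives
\[
i^*(e^{j,\kappa}) \circ e^{i, \kappa} = e^{ij,\kappa}.
\]
After interposing $e^{i,\kappa} \circ \tearlier^\kappa(\pret{\pair i{\proj{\Gamma_0}{\Gamma'_0}}})$ via Lemma~\ref{lem:syn:proj:nu} applied to $\pair i{\proj{\Gamma}{\Gamma'_0}}$, naturality of $\tearlier^{\kappa}$ allows us to collapse the composite $\tearlier^{\kappa}(j^*\pret{\pair j{\proj{\Gamma'_0}{\Gamma'''}}}) \circ \tearlier^{\kappa}(\pret{\pair i{\proj{\Gamma_0}{\Gamma'_0}}})$ into $\tearlier^{\kappa}$ applied to the composite of the two inner projections, which by the induction hypothesis equals $\tearlier^{\kappa}(\pret{\pair{ij}{\proj{\Gamma_0}{\Gamma'''}}})$. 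This yields exactly $\pret{\pair{ij}{\proj{\Gamma}{\Gamma''}}}$.

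The principal obstacle is the bookkeeping around the context projections $\p_{\Gamma_1}$ and $\p_{\Gamma'_1}$: one must carefully apply Lemma~\ref{lem:syn:proj:nu} to isolate the tick insertion and match the left- and right-hand sides of the $e$-identity from Lemma~\ref{lem:e:nat}, since these context projections do not literally live in the same hom-set before $i^*$ is applied.
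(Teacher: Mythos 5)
Your proposal follows the paper's proof essentially line for line: induction on $\Gamma''$, the standard CwF comprehension manipulations for the variable case, and for the tick case the same unfolding, the same use of Lemma~\ref{lem:syn:proj:nu} to reorganize the context projections, naturality of $e$ together with functoriality of $\tearlier^\kappa$, the induction hypothesis, and Lemma~\ref{lem:e:nat} to collapse $i^*(e^{j,\kappa})\circ e^{i,\kappa}$. The only flaw is a bookkeeping slip when you invoke Lemma~\ref{lem:syn:proj:nu}: as written the equation $\pret{\pair i{\proj{\Gamma}{\Gamma'}}} = i^*(\p_{\Gamma'_1}) \circ \pret{\pair i{\proj{\Gamma}{\Gamma', \Gamma'_1}}}$ doesn't type-check (the purported subcontext $\Gamma',\Gamma_1'$ duplicates $\Gamma_1'$); the intended and correct identity is $i^*(\p_{\Gamma_1'}) \circ \pret{\pair i{\proj{\Gamma}{\Gamma'}}} = \pret{\pair i{\proj{\Gamma}{\Gamma_0',\tickA:\kappa}}}$, which is how the paper uses it, and also $j^*$ should read $i^*$ in your ``collapse'' step and the step combining the two $\tearlier^\kappa(-)$ terms is functoriality rather than naturality.
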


\begin{proof}
 The proof is by induction on $\Gamma''$. If $\Gamma'' = \Gamma''', x : A$ then 
\begin{align*}
 i^*(\pret{\pair j{\proj{\Gamma'}{\Gamma''}}}) \circ \pret{\pair i{\proj{\Gamma}{\Gamma'}}}
 & = i^*(\cpair{\pret{\pair j{\proj{\Gamma'}{\Gamma'''}}}}{\q[\pret{\pair j{\proj{\Gamma'}{\Gamma'''}}}]})
 \circ \pret{\pair i{\proj{\Gamma}{\Gamma'}}} \\
  & = \cpair{i^*(\pret{\pair j{\proj{\Gamma'}{\Gamma'''}}})}{\q[i^*(\pret{\pair j{\proj{\Gamma'}{\Gamma'''}}})]})
 \circ \pret{\pair i{\proj{\Gamma}{\Gamma'}}} \\
  & = \cpair{i^*(\pret{\pair j{\proj{\Gamma'}{\Gamma'''}}})\circ \pret{\pair i{\proj{\Gamma}{\Gamma'}}}}
   {\q[i^*(\pret{\pair j{\proj{\Gamma'}{\Gamma'''}}})\circ \pret{\pair i{\proj{\Gamma}{\Gamma'}}}]} \\
  & = \cpair{\pret{\pair {ij}{\proj{\Gamma}{\Gamma'''}}}}
   {\q[\pret{\pair{ij}{\proj{\Gamma}{\Gamma'''}}}]} \\
  & = \pret{\pair {ij}{\proj{\Gamma}{\Gamma''}}}
\end{align*}

If $\Gamma'' = \Gamma''', \tickA : \kappa$ then $\Gamma'$ must be of the form $\Gamma_0', \tickA : \kappa, \Gamma_1'$
and $\Gamma$ of the form $\Gamma_0, \tickA : \kappa, \Gamma_1$. Then 
\begin{align*}
 i^*(\pret{\pair j{\proj{\Gamma'}{\Gamma''}}}) \circ \pret{\pair i{\proj{\Gamma}{\Gamma'}}} 
 & = i^*(e^{j,\kappa} \circ \tearlier^{\kappa}(\pret{\pair j{\proj{\Gamma'_0}{\Gamma'''}}}) \circ \p_{\Gamma_1'}) 
 \circ \pret{\pair i{\proj{\Gamma}{\Gamma'}}}  \\
 & = i^*e^{j,\kappa} \circ i^*\tearlier^{\kappa}(\pret{\pair j{\proj{\Gamma'_0}{\Gamma'''}}}) \circ i^*\p_{\Gamma_1'}
 \circ \pret{\pair i{\proj{\Gamma}{\Gamma'}}} 
\end{align*}
By Lemma~\ref{lem:syn:proj:nu} this equals
\begin{align*}
  & i^*e^{j,\kappa} \circ i^*\tearlier^{\kappa}(\pret{\pair j{\proj{\Gamma'_0}{\Gamma'''}}})
 \circ \pret{\pair i{\proj{\Gamma}{\Gamma_0', \tickA : \kappa}}} \\
& =   i^*e^{j,\kappa} \circ i^*\tearlier^{\kappa}(\pret{\pair j{\proj{\Gamma'_0}{\Gamma'''}}})
 \circ e^{i, \kappa} \tearlier^{\kappa}\pret{\pair i{\proj{\Gamma_0}{\Gamma_0'}}} \circ \p_{\Gamma_1} \\
 & =   i^*e^{j,\kappa} \circ e^{i, \kappa} \circ \tearlier^{\kappa} i^*\pret{\pair j{\proj{\Gamma'_0}{\Gamma'''}}}
  \circ \tearlier^{\kappa}\pret{\pair i{\proj{\Gamma_0}{\Gamma_0'}}} \circ \p_{\Gamma_1} \\
   & =   i^*e^{j,\kappa} \circ e^{i, \kappa} \circ \tearlier^{\kappa} (i^*\pret{\pair j{\proj{\Gamma'_0}{\Gamma'''}}}
  \circ \pret{\pair i{\proj{\Gamma_0}{\Gamma_0'}}}) \circ \p_{\Gamma_1} \\
     & =   i^*e^{j,\kappa} \circ e^{i, \kappa} \circ \tearlier^{\kappa} \pret{\pair{ij}{\proj{\Gamma_0}{\Gamma'''}}}
   \circ \p_{\Gamma_1} \\
     & =   e^{ij,\kappa} \circ \tearlier^{\kappa} \pret{\pair{ij}{\proj{\Gamma_0}{\Gamma'''}}}
   \circ \p_{\Gamma_1} \\
     & =   \pret{\pair{ij}{\proj{\Gamma_0}{\Gamma''}}}
\end{align*}
using naturality of $e$ and Lemma~\ref{lem:e:nat}.
\end{proof}

We can now sketch the proof of the substitution lemma in the special case of projections. 

\begin{lemma} \label{lem:weakening}
Let $\pair{i}{\proj{\Gamma}{\Gamma'}} : \wfcxt[\Delta]{\Gamma} \to \wfcxt[\Delta']{\Gamma'}$ be a 
projection and ${\Gamma'}\vdash_{\Delta'}  J$ a judgement of a wellformed type or a typing judgement. 
Then $\pret{J} = (i^*\pret{ J})[\pret{\pair i{\proj{\Gamma}{\Gamma'}}}]$.
\end{lemma}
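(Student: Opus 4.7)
The plan is to proceed by simultaneous induction on derivations of wellformed types and typing judgements $\cwfty{\Gamma'}{J}$ in clock context $\Delta'$, specialising the strategy used for the general substitution lemma to the projection substitution $\pair{i}{\proj{\Gamma}{\Gamma'}}$. For the standard type formers (dependent functions, sums, identity types, base types) the argument reduces to the ordinary presheaf substitution lemma via the observation in Section~\ref{sec:substitutions}: since $i^*$ commutes on the nose with the CwF operations of $\gr{\Delta'}$ and with comprehension, once the induction hypothesis handles all immediate subderivations the equation $\pret{J} = (i^*\pret{J})[\pret{\pair{i}{\proj{\Gamma}{\Gamma'}}}]$ is obtained by unwinding the definitions.

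The variable case $\hastype[\Delta']{\Gamma'_0, x:A, \Gamma'_1}{x}{A}$, where $\Gamma$ extends $\Gamma'$ by some additional variables and ticks, is the place where the non-standard form of the substitution first shows up. The interpretation in $\Gamma'$ is $\q_{\pret{A}}[\p_{\Gamma'_1}]$, while the interpretation in $\Gamma$ is $\q_{\pret{A}}$ composed with the context projection from $\Gamma$ onto the prefix ending in $x : A$. The second statement of Lemma~\ref{lem:syn:proj:nu} is precisely what is needed to equate the two sides, once we apply $i^*$ and compose with $\pret{\pair{i}{\proj{\Gamma}{\Gamma'}}}$.

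The tick cases are the technical heart of the proof. For $\latbind{\tickA}{\kappa}A$ and $\tabs{\tickA}{\kappa}t$ one follows the calculation already done in Section~\ref{sec:substitutions} for the general case, specialised to $\nu = i$: the step rewriting $\nu^*\eta^\kappa$ as $\tlater^{\nu(\kappa)}e^{\kappa,\nu} \circ \eta^{\nu(\kappa)}_{\nu^*}$ via~(\ref{eq:e:comm:eta}), and the analogous step for tick application using~(\ref{eq:e:comm:eps}) together with Lemma~\ref{lem:bijectivecorresp}, go through unchanged. For tick application $\tapp[\tickA']t$ we additionally need Lemma~\ref{lem:syn:proj:nu} to absorb the inner context projection from the tail of $\Gamma'$ into $\pret{\pair{i}{\proj{\Gamma}{\Gamma'}}}$. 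The case of clock quantification, which additionally invokes Lemma~\ref{lem:(i,id):invertible}, is deferred to the appendix on quantification; for the $\tickc$ rule we use that $\pret{\pair{[\kappa\mapsto\kappa']}{[\tickA\mapsto\tickc]}}$ is built from $\pret{\pair{i}{\id_\Gamma}}$, so its compatibility with a further clock weakening $i'^*$ reduces, via Lemma~\ref{lem:proj:comp}, to the previously-treated inductive cases.

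The main obstacle is bookkeeping in the tick cases: when $\Gamma$ extends $\Gamma'$ on the left of a tick assumption $\tickA : \kappa$ by some block of fresh assumptions $\Gamma_1$, the projection $\pret{\pair{i}{\proj{\Gamma}{\Gamma'}}}$ must be shown to decompose as $\tearlier^{\kappa}$ applied to a smaller projection, precomposed with the context projection $\p_{\Gamma_1}$. This decomposition is exactly what Lemma~\ref{lem:syn:proj:nu} supplies, and combined with naturality of $e^{\kappa,\nu}$ (Lemma~\ref{lem:e:nat}) and Lemma~\ref{lem:proj:e} it reduces each tick case to a calculation of the same shape as the corresponding step in the general substitution lemma of Section~\ref{sec:substitutions}.
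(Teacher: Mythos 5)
Your plan follows the paper's proof closely: redo the substitution-lemma argument for projections, handle the variable case directly with Lemma~\ref{lem:syn:proj:nu}, push the tick and $\tickc$/clock-weakening cases through with Lemma~\ref{lem:proj:comp}, Lemma~\ref{lem:e:nat} and equations~(\ref{eq:e:comm:eta}),~(\ref{eq:e:comm:eps}). That is the right proof.

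What is missing is the paper's central framing, which also simplifies the work considerably. The paper does \emph{not} re-prove every case; it observes that the proof of the general substitution lemma (Lemma~\ref{lem:substitution}) appeals to Lemma~\ref{lem:weakening} itself in exactly two places — the variable-introduction case and the proof of Lemma~\ref{lem:subst:comp:clock:weak} — so that invoking the general argument directly would be circular. The weakening lemma therefore needs an independent proof, and the proof is obtained by noticing that, in the projection case, the inductive hypotheses are always again about projections (e.g.\ the $\tickc$ case invokes the substitution $\pair{\subex\nu\kappa\kappa}\sigma$, which is a projection whenever $\pair\nu\sigma$ is), so the two circular appeals can be replaced by the direct computation for variables you give, and by Lemma~\ref{lem:proj:comp} for Lemma~\ref{lem:subst:comp:clock:weak}. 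Your proposal presents the variable case as special because it is ``where the non-standard form of the substitution first shows up'', but that is not the reason the paper singles it out; without flagging the circularity, a reader cannot see why this lemma cannot simply be cited as an instance of Lemma~\ref{lem:substitution}, nor why the induction is well-founded. Adding these two observations turns your proposal into the paper's argument.
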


\begin{proofsketch}
 The proof is almost the same of the proof of the general substitution lemma, except in the two places where 
 the general proof appeals to Lemma~\ref{lem:weakening}, which is in the case of variable introduction and 
 in the proof of Lemma~\ref{lem:subst:comp:clock:weak}, and so we here provide proofs of these cases for 
 the case of projections. Note that this proof works because the inductive hypothesis always appeals to 
 another substitution, rather than a more general substitution. For example, in the proof of the case of application
 to $\tickc$ provided in the main paper, the induction hypothesis uses the substitution 
 $\pair{\subex\nu\kappa\kappa}\sigma$, which is a projection if $\pair\nu\sigma$ is.
 
 For Lemma~\ref{lem:subst:comp:clock:weak}, in the case of $\pair\nu\sigma$ being a projection 
 $\pair h{\proj{\Gamma}{\Gamma'}}$ both sides of the equality are equal to 
 $\pret{\pair{hi}{\proj{\Gamma}{\Gamma'}}}$ by Lemma~\ref{lem:proj:comp}. 
 In the case of variable introduction, if $x : A$ is in $\Gamma' $  write 
 $\Gamma' = \Gamma'_0, x:A, \Gamma_1'$ and $\Gamma = \Gamma_0, x:A, \Gamma_1$. Then
\begin{align*}
 i^*\pret{\istypeshort[\Delta']{\Gamma'}{x}}[\pret{\pair{i}{\proj{\Gamma}{\Gamma'}}}] 
 & = i^*(\q[\p_{\Gamma_1'}])[\pret{\pair{i}{\proj{\Gamma}{\Gamma'}}}] \\
 & = \q[i^*\p_{\Gamma_1'} \circ \pret{\pair{i}{\proj{\Gamma}{\Gamma'}}}] \\
 & = \q[\pret{\pair{i}{\proj{\Gamma}{\Gamma_0', x:A}}}] \\
 & = \q[\cpair{\pret{\pair{i}{\proj{\Gamma}{\Gamma_0'}}}}{\q[\p_{\Gamma_1}]}] \\
 & = \q[\p_{\Gamma_1}] \\
 & = \pret{\istypeshort[\Delta]{\Gamma}x}
\end{align*}
using Lemma~\ref{lem:syn:proj:nu} in the third equality. 
\end{proofsketch}

\subsection{Composition of substitutions}

Here we will prove identities of particular compositions of substitutions. The need for these lemmas will only become clearer in the coming subsections. The reader can skip this subsection and consult it later when needed.

\begin{lemma} \label{lem:subst:comp:kappa:to:kappa'}
 If $\pair\nu\sigma : \wfcxt\Gamma \to \wfcxt[\Delta']{\Gamma'}$ and $\kappa'\in \Delta'$ and $\kappa \notin \Delta'$ then 
 \[\pair{\subex\nu\kappa{\nu(\kappa')}}\sigma : \wfcxt\Gamma \to \wfcxt[\Delta', \kappa]{\Gamma'}\]
 is a wellformed substitution and 
 \[
 \pret{\pair{\subex\nu\kappa{\nu(\kappa')}}\sigma}
 = \nu^*(\pret{\pair{\basicsub\kappa{\kappa'}}{\id_{\Gamma'}}}) \circ \pret{\pair\nu\sigma}
 \]
\end{lemma}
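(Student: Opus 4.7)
I would prove both claims by induction on the derivation of $\pair\nu\sigma$ through the four substitution-formation rules. For wellformedness, one extends the given derivation step-by-step, replacing $\nu$ by $\nu' \defeq \subex\nu\kappa{\nu(\kappa')}$ in each rule application; since $\kappa \notin \Delta'$, no type appearing in $\Gamma'$ mentions $\kappa$, so $A\pair{\nu'}{\sigma'} = A\pair\nu{\sigma'}$ throughout the derivation and the typing obligations carry over without change.

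For the equation, set $\mu = \basicsub\kappa{\kappa'}$ so that $\nu' = \nu\mu$ as set maps, and consequently $\nu'^* = \nu^*\mu^*$ as functors. The base case (empty substitution) is immediate since both sides are the terminal morphism. In the variable extension case $\pair\nu{\subex{\sigma_0}xt}$, the interpretation of $t$ is unchanged by the clock weakening, and the equation follows from the inductive hypothesis on $\pair\nu{\sigma_0}$, the pairing law $\cpair fg \circ h = \cpair{f\circ h}{g[h]}$, and preservation of comprehension and pairing by $\nu^*$ as a morphism of CwFs.

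The tick variable extension case $\pair\nu{\subex{\sigma_0}\tickB\tickA}$, with $\tickB : \kappa^\dagger$ for some $\kappa^\dagger \in \Delta'$, is the crux of the argument. After unwinding the tick interpretation formula, the key step is to factor $e^{\kappa^\dagger, \nu'} = \nu^*(e^{\kappa^\dagger, \mu}) \circ e^{\kappa^\dagger, \nu}$ via Lemma~\ref{lem:e:nat}, and then use naturality of $e^{\kappa^\dagger, \nu}$ to commute it past $\tearlier^{\nu(\kappa^\dagger)}\nu^*\pret{\pair\mu{\id_{\Gamma'_0}}}$; the inductive hypothesis on $\pair\nu{\sigma_0}$ then closes the case. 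For the $\tickc$ extension case $\pair{\subex{\nu_0}{\kappa''}{\nu_0(\kappa''')}}{\subex{\sigma_0}\tickA\tickc}$, I would observe that the extended substitution may alternatively be derived from $\pair{\nu_0}{\sigma_0}$ by first extending it with $\kappa \to \nu_0(\kappa^*)$, where $\kappa^* \in \Delta'_0$ is chosen so that $\nu_0(\kappa^*) = \nu(\kappa')$, and then applying the $\tickc$-extension rule; unwinding both derivations using the inductive hypothesis on $\pair{\nu_0}{\sigma_0}$ and the covariance $(\nu_0 \circ \basicsub\kappa{\kappa^*})^* = \nu_0^*\basicsub\kappa{\kappa^*}^*$ reduces the identity to a routine commutation of clock substitutions acting on terms whose free clocks are untouched.

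The main obstacle will be the tick variable case, as it is here that the interplay between the left adjoint $\tearlier^\kappa$ and clock substitutions must be carefully mediated by the transformation $e^{\kappa, \nu}$; obtaining the factorisation from Lemma~\ref{lem:e:nat} and applying naturality of $e$ in the right order is what makes the bookkeeping work out.
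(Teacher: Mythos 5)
Your proposal follows essentially the same plan as the paper: induction on the structure of the substitution (equivalently, on $\Gamma'$), with the tick-variable case handled by factoring $e^{\nu',\kappa''}$ through Lemma~\ref{lem:e:nat} and then sliding $e^{\nu,\kappa''}$ past $\tearlier^{\nu(\kappa'')}(\cdots)$ by naturality, and the $\tickc$-extension case handled by re-deriving the substitution with the clock extension interchanged and applying the inductive hypothesis. Two small points where you underestimate the work: your ``routine commutation of clock substitutions'' in the $\tickc$ case is precisely Lemma~\ref{lem:sub:clock:extension}, which the paper isolates as a separate non-trivial statement whose proof threads through Lemmas~\ref{lem:weak:sub:clock:extension}, \ref{lem:subst:comp:clock:weak}, \ref{lem:tickc:proj} and \ref{lem:(i,id):invertible}, so it is not a mere commutation of set maps; and your choice of $\kappa^*$ implicitly hides the case split $\kappa'=\kappa''$ versus $\kappa'\neq\kappa''$ that the paper carries out explicitly (the witness is $\kappa'''$ in the former case and $\kappa'$ in the latter).
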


\begin{proof}
 The proof is by induction on $\Gamma'$. The case of an empty context is trivial, and the case of extension with ordinary 
 variables is easy, so we focus on the case of $\Gamma' = \Gamma'', \tickA : \kappa''$ for some $\kappa'' \in \Delta'$. 
 There are two cases to consider for $\sigma$: Either $\sigma = \subex\tau\tickA\tickB$ or $\sigma = \subex\tau\tickA\tickc$. 
 
 In the first case $\Gamma$ is of form $\Gamma_0, \tickB : \nu(\kappa''), \Gamma_1$ and 
 $\tau : \wfcxt{\Gamma_0} \to \wfcxt[\Delta']{\Gamma''}$. Then 
\begin{align*}
 \pret{\pair{\subex\nu\kappa{\nu(\kappa')}}\sigma} & = 
 e^{\subex\nu\kappa{\nu(\kappa')}, \kappa''} \circ 
 \tearlier^{\nu(\kappa'')}(\pret{\pair{\subex\nu\kappa{\nu(\kappa')}}\tau}) \circ \p_{\Gamma_1} \\
 & =  e^{\subex\nu\kappa{\nu(\kappa')}, \kappa''} \circ 
 \tearlier^{\nu(\kappa'')}(\nu^*\pret{\pair{\basicsub\kappa{\nu(\kappa')}}{\id_{\Gamma''}}} \circ \pret{\pair{\nu}\tau}) 
 \circ \p_{\Gamma_1}\\
 & =   \nu^*e^{\basicsub\kappa{\kappa'}, \kappa''} \circ e^{\nu, \kappa''} \circ
 \tearlier^{\nu(\kappa'')}(\nu^*\pret{\pair{\basicsub\kappa{\nu(\kappa')}}{\id_{\Gamma''}}} \circ \pret{\pair{\nu}\tau}) 
 \circ \p_{\Gamma_1} \\
  & =   \nu^*e^{\basicsub\kappa{\kappa'}, \kappa''} \circ 
 \nu^*(\tearlier^{\kappa''}\pret{\pair{\basicsub\kappa{\nu(\kappa')}}{\id_{\Gamma''}}}) \circ e^{\nu, \kappa''} \circ
 \tearlier^{\kappa''}(\pret{\pair{\nu}\tau})  \circ \p_{\Gamma_1} \\
  & =   \nu^*(\pret{\pair{\basicsub\kappa{\nu(\kappa')}}{\id_{\Gamma'', \tickA: \kappa"}}})  \circ
 \pret{\pair{\nu}{\subex{\tau}\tickA\tickB}}
\end{align*}

 Finally, in the case where $\sigma = \subex{\tau}\tickA\tickc$, we must have $\Delta' = \Delta'', \kappa''$ and 
$\nu = \subex\mu{\kappa''}{\mu(\kappa''')}$ for some $\mu : \Delta'' \to \Delta$ and $\kappa'''\in\Delta''$. Note that $\kappa'$ is
in $\Delta'$ and so can be equal to $\kappa''$. We need to consider the two cases of $\kappa' = \kappa''$ or not.

If $\kappa'' \neq \kappa'$, then 
\begin{align*}
 \subex\nu\kappa{\nu(\kappa')} & = \subex{\subex\mu{\kappa''}{\mu(\kappa''')}}\kappa{\nu(\kappa')} 
 = \subex{\subex\mu{\kappa}{\mu(\kappa')}}{\kappa''}{\mu(\kappa''')}
\end{align*}
and so 
\begin{align*}
\pret{\pair{\subex\nu\kappa{\nu(\kappa')}}{\sigma}} 
& = \pret{\pair{\subex{\subex\mu{\kappa}{\mu(\kappa')}}{\kappa''}{\mu(\kappa''')}}{\subex{\tau}\tickA\tickc}}  \\
& = (\subex\mu{\kappa}{\mu(\kappa')})^*\pret{\pair{\basicsub{\kappa''}{\kappa'''}}{\basicsub\tickA\tickc}} \circ
\pret{\pair{\subex\mu{\kappa}{\mu(\kappa')}}{\tau}}  \\
& = (\subex\mu{\kappa}{\mu(\kappa')})^*\pret{\pair{\basicsub{\kappa''}{\kappa'''}}{\basicsub\tickA\tickc}} \circ
\mu^*\pret{\pair{\basicsub\kappa{\kappa'}}{\id}} \circ \pret{\pair{\mu}{\tau}}  \\
& = \mu^*\left((\basicsub{\kappa}{\kappa'})^*\pret{\pair{\basicsub{\kappa''}{\kappa'''}}{\basicsub\tickA\tickc}} \circ
\pret{\pair{\basicsub\kappa{\kappa'}}{\id}} \right)\circ \pret{\pair{\mu}{\tau}}  
\end{align*}
using the induction hypothesis in third equality. Lemma~\ref{lem:sub:clock:extension} gives the equality
\begin{align*}
&(\basicsub{\kappa}{\kappa'})^*\pret{\pair{\basicsub{\kappa''}{\kappa'''}}{\basicsub\tickA\tickc}} \circ
\pret{\pair{\basicsub\kappa{\kappa'}}{\id}} \\
& = \basicsub{\kappa''}{\kappa'''}^*\pret{\pair{\basicsub\kappa{\kappa'}}{\id}} \circ 
\pret{\pair{\basicsub{\kappa''}{\kappa'''}}{\basicsub\tickA\tickc}} 
\end{align*}
and so
\begin{align*}
\pret{\pair{\subex\nu\kappa{\nu(\kappa')}}{\sigma}} 
& = \mu^*\left(\basicsub{\kappa''}{\kappa'''}^*\pret{\pair{\basicsub\kappa{\kappa'}}{\id}} \circ 
\pret{\pair{\basicsub{\kappa''}{\kappa'''}}{\basicsub\tickA\tickc}} \right)\circ \pret{\pair{\mu}{\tau}}  \\
& = \mu^*\basicsub{\kappa''}{\kappa'''}^*\pret{\pair{\basicsub\kappa{\kappa'}}{\id}} \circ 
\mu^*\pret{\pair{\basicsub{\kappa''}{\kappa'''}}{\basicsub\tickA\tickc}} \circ \pret{\pair{\mu}{\tau}}  \\
& = (\subex\mu{\kappa''}{\mu(\kappa''')})^*\pret{\pair{\basicsub\kappa{\kappa'}}{\id}} 
\circ \pret{\pair{\subex\mu{\kappa''}{\mu(\kappa''')}}{\subex\tau\tickA\tickc}}  \\
& = \nu^*\pret{\pair{\basicsub\kappa{\kappa'}}{\id}} 
\circ \pret{\pair{\nu}{\sigma}}
\end{align*}
as desired. 

In the case where If $\kappa'' = \kappa'$ the proof is very similar, but we do it for completeness. In this case 
\begin{align*}
 \subex\nu\kappa{\nu(\kappa')} & = \subex{\subex\mu{\kappa''}{\mu(\kappa''')}}\kappa{\nu(\kappa')} 
 = \subex{\subex\mu{\kappa}{\mu(\kappa''')}}{\kappa''}{\mu(\kappa''')}
\end{align*}
and so 
\begin{align*}
\pret{\pair{\subex\nu\kappa{\nu(\kappa')}}{\sigma}} 
& = \pret{\pair{\subex{\subex\mu{\kappa}{\mu(\kappa''')}}{\kappa''}{\mu(\kappa''')}}{\subex{\tau}\tickA\tickc}}  \\
& = (\subex\mu{\kappa}{\mu(\kappa''')})^*\pret{\pair{\basicsub{\kappa''}{\kappa'''}}{\basicsub\tickA\tickc}} \circ
\pret{\pair{\subex\mu{\kappa}{\mu(\kappa''')}}{\tau}}  \\
& = (\subex\mu{\kappa}{\mu(\kappa''')})^*\pret{\pair{\basicsub{\kappa''}{\kappa'''}}{\basicsub\tickA\tickc}} \circ
\mu^*\pret{\pair{\basicsub\kappa{\kappa'''}}{\id}} \circ \pret{\pair{\mu}{\tau}}  \\
& = \mu^*\left((\basicsub{\kappa}{\kappa'''})^*\pret{\pair{\basicsub{\kappa''}{\kappa'''}}{\basicsub\tickA\tickc}} \circ
\pret{\pair{\basicsub\kappa{\kappa'''}}{\id}} \right)\circ \pret{\pair{\mu}{\tau}}  
\end{align*}
using the induction hypothesis in third equality. Lemma~\ref{lem:sub:clock:extension} gives the equality
\begin{align*}
&(\basicsub{\kappa}{\kappa'''})^*\pret{\pair{\basicsub{\kappa''}{\kappa'''}}{\basicsub\tickA\tickc}} \circ
\pret{\pair{\basicsub\kappa{\kappa'''}}{\id}} \\
& = \basicsub{\kappa''}{\kappa'''}^*\pret{\pair{\basicsub\kappa{\kappa'''}}{\id}} \circ 
\pret{\pair{\basicsub{\kappa''}{\kappa'''}}{\basicsub\tickA\tickc}} 
\end{align*}
and so
\begin{align*}
\pret{\pair{\subex\nu\kappa{\nu(\kappa')}}{\sigma}} 
& = \mu^*\left(\basicsub{\kappa''}{\kappa'''}^*\pret{\pair{\basicsub\kappa{\kappa'''}}{\id}} \circ 
\pret{\pair{\basicsub{\kappa''}{\kappa'''}}{\basicsub\tickA\tickc}} \right)\circ \pret{\pair{\mu}{\tau}}  \\
& = \mu^*\basicsub{\kappa''}{\kappa'''}^*\pret{\pair{\basicsub\kappa{\kappa'''}}{\id}} \circ 
\mu^*\pret{\pair{\basicsub{\kappa''}{\kappa'''}}{\basicsub\tickA\tickc}} \circ \pret{\pair{\mu}{\tau}}  \\
& = (\subex\mu{\kappa''}{\mu(\kappa''')})^*\pret{\pair{\basicsub\kappa{\kappa'''}}{\id}} 
\circ \pret{\pair{\subex\mu{\kappa''}{\mu(\kappa''')}}{\subex\tau\tickA\tickc}}  \\
& = \nu^*\pret{\pair{\basicsub\kappa{\kappa'''}}{\id}} 
\circ \pret{\pair{\nu}{\sigma}}
\end{align*}
as desired. This ends the final case of the proof.
\end{proof}

\begin{lemma} \label{lem:tickc:proj}
 Let $\kappa'\in \Delta$ and $\kappa \notin\Delta$, and suppose $\wfcxt\Gamma$. Then 
\[
 \basicsub\kappa{\kappa'}^*(p_{\tearlier^\kappa})\circ
 \pret{\pair{\basicsub\kappa{\kappa'}}{\basicsub\tickA\tickc}} 
 = \pret{\pair{\basicsub\kappa{\kappa'}}{\id_{\Gamma}}}
 : \pret{\wfcxt\Gamma} \to \pret{\wfcxt[\Delta,\kappa]\Gamma}
 \]
\end{lemma}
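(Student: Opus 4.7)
The proof will proceed by induction on the structure of $\Gamma$, mirroring the recursive definition of $\pret{\pair{\basicsub\kappa{\kappa'}}{\id_\Gamma}}$. The base case $\Gamma = \emptyset$ is immediate since both sides collapse to the unique morphism into the terminal.

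For the variable case $\Gamma = \Gamma', x : A$, I evaluate both sides pointwise on $(\gamma, a)$. The RHS decomposes by the CwF structure as $\cpair{\pret{\pair{\basicsub\kappa{\kappa'}}{\id_{\Gamma'}}}(\gamma)}{a}$. For the LHS, since $\pret{\pair i{\id_\Gamma}}$ respects context extension, its inverse at the relevant component sends $\iota\cdot(\gamma,a)$ to the pair $(\inv{\pret{\pair i{\id_{\Gamma'}}}}(\iota\cdot\gamma),\, \iota\cdot a)$. Applying $\chi_{\{\kappa\}}$ componentwise then yields $(\chi_{\{\kappa\}}\cdot\inv{\pret{\pair i{\id_{\Gamma'}}}}(\iota\cdot\gamma),\, \chi_{\{\kappa\}}\cdot\iota\cdot a)$. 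The underlying $\catT$-composite of $\iota$ followed by $\chi_{\{\kappa\}}$, read off as a morphism in $\catT_\Delta$, is the identity on $\triple\Theta\vartheta f$, so the second component collapses to $a$, and the first matches $\pret{\pair{\basicsub\kappa{\kappa'}}{\id_{\Gamma'}}}(\gamma)$ by the induction hypothesis applied to $\Gamma'$.

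The main obstacle is the tick case $\Gamma = \Gamma', \tickB : \kappa''$ with $\kappa'' \in \Delta$. I unfold the RHS by its formation rule as $e^{\kappa'', \basicsub\kappa{\kappa'}}_{\pret{\Gamma'}} \circ \tearlier^{\kappa''}(\pret{\pair{\basicsub\kappa{\kappa'}}{\id_{\Gamma'}}})$, and for the LHS I combine Lemma~\ref{lem:(i,id):invertible}, which identifies $\inv{\pret{\pair i{\id_\Gamma}}}$ at the relevant component with $i^*\pret{\pair{\basicsub\kappa{\kappa'}}{\id_\Gamma}}$, with the analogous decomposition $\pret{\pair i{\id_\Gamma}} = e^{\kappa'', i}_{\pret{\Gamma'}} \circ \tearlier^{\kappa''}(\pret{\pair i{\id_{\Gamma'}}})$. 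This reduces the equation to a naturality computation matching the coproduct indices on each side. I expect to need Lemma~\ref{lem:proj:e}, which rewrites $\basicsub\kappa{\kappa'}^*(\p_{\tearlier^\kappa}) \circ e^{\kappa, \basicsub\kappa{\kappa'}}$ as $\p_{\tearlier^{\kappa'}\basicsub\kappa{\kappa'}^*}$, together with the naturality of $e$ from Lemma~\ref{lem:e:nat}, in order to commute $\basicsub\kappa{\kappa'}^*$ through the various $\tearlier$'s and ultimately invoke the induction hypothesis for $\Gamma'$ at the shifted base object $\triple\Theta\vartheta f[X, \kappa''+]$.
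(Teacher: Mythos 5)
Your proof takes a genuinely different route from the paper's, which does no induction at all. The paper's argument is a short direct computation: unfold $\pret{\pair{\basicsub\kappa{\kappa'}}{\basicsub\tickA\tickc}}_{\triple\Theta\vartheta f}(\gamma)$ by its defining formula as $\pair{\{\kappa\}}{\inv{\pret{\pair i{\id}}}(\iota\cdot\gamma)}$, immediately rewrite the inverse to $i^*\pret{\pair{\basicsub\kappa{\kappa'}}{\id}}$ via Lemma~\ref{lem:(i,id):invertible}, then apply $\basicsub\kappa{\kappa'}^*(\p_{\tearlier^\kappa})$, which picks out the second component and pushes it along the collapse map $\chi$; naturality of $\pret{\pair{\basicsub\kappa{\kappa'}}{\id}}$ and the identity $\chi\circ\iota = \id$ finish the argument in one step for arbitrary $\Gamma$. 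Your inductive unfolding of $\Gamma$ ends up re-deriving, case by case, exactly the content of Lemma~\ref{lem:(i,id):invertible} that the paper treats as a black box; once that lemma is invoked, as you yourself do, the induction becomes redundant and the remaining work is just naturality, not a case split.

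The variable case in your sketch is sound, and the base case is trivial. The tick case is the weak point: after applying Lemma~\ref{lem:(i,id):invertible} the expression no longer contains a factor $e^{\kappa, \basicsub\kappa{\kappa'}}$, so the appeal to Lemma~\ref{lem:proj:e} to rewrite $\basicsub\kappa{\kappa'}^*(\p_{\tearlier^\kappa}) \circ e^{\kappa, \basicsub\kappa{\kappa'}}$ has nothing to apply to --- note also that the projection in the lemma statement is for the \emph{fresh} clock $\kappa$, whereas the recursive unfolding of $\pret{\pair i{\id_\Gamma}}$ in the tick case produces an $e^{\kappa'', i}$ for the already-present clock $\kappa''$, so these two things must not be conflated. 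What the tick case actually requires, if you insist on doing induction, is only that $\chi[Y,\kappa''+]\circ\iota[Y,\kappa''+] = \id$ plus the IH at the shifted object $\triple\Theta\vartheta f[Y,\kappa''+]$; the $e$-lemmas do not enter. This will close, but the direct naturality argument is strictly simpler and you already have all of its ingredients in hand.
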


\begin{proof}
  Let $\gamma \in \pret{\wfcxt\Gamma}_{\triple\Theta\vartheta f}$. Then 
\begin{align*}
 \pret{\pair{\basicsub\kappa{\kappa'}}{\basicsub\tickA\tickc}}_{\triple\Theta\vartheta f} (\gamma) 
 & = \pair{\{\kappa\}}{\inv{\pret{\pair i{\id}}_{\triple{\Theta,\#}{\subex\vartheta\#{n+1}}{\iota f}}}(\iota \cdot \gamma)}\\ 
 & = \pair{\{\kappa\}}{i^*(\pret{\pair{\basicsub\kappa{\kappa'}}{\id}})_{\triple{\Theta,\#}{\subex\vartheta\#{n+1}}{\iota \subex f\kappa \#}}(\iota \cdot \gamma)} \\
 & = \pair{\{\kappa\}}{\pret{\pair{\basicsub\kappa{\kappa'}}{\id}}_{\triple{\Theta,\#}{\subex\vartheta\#{n+1}}{\iota f}}(\iota \cdot \gamma)}
\end{align*}
using Lemma~\ref{lem:(i,id):invertible}. So 
\begin{align*}
 (\basicsub\kappa{\kappa'}^*(p_{\tearlier^\kappa})\circ
 \pret{\pair{\basicsub\kappa{\kappa'}}{\basicsub\tickA\tickc}})_{\triple\Theta\vartheta f} (\gamma) 
 & = \chi \cdot \pret{\pair{\basicsub\kappa{\kappa'}}{\id}}_{\triple{\Theta,\#}{\subex\vartheta\#{n+1}}{\iota f}}(\iota \cdot \gamma)\\
 & =\pret{\pair{\basicsub\kappa{\kappa'}}{\id}}_{\triple{\Theta}{\vartheta}{f}}( \chi \cdot \iota \cdot \gamma) \\
 & =\pret{\pair{\basicsub\kappa{\kappa'}}{\id}}_{\triple{\Theta}{\vartheta}{f}}(\gamma)
\end{align*}
where $\chi$ maps $\#$ to $f(\kappa')$ and fixes everything else. 
\end{proof}

\begin{lemma}\label{lem:weakeningsubstitutions}
Suppose $\pair{\nu}{\sigma} : \wfcxt{\Gamma_0} \to\wfcxt[\Delta']{\Gamma'}$ and let $\wfcxt{\Gamma_0,\Gamma_1}$ then 
\[\pret{\pair{\nu}{\sigma}} = \pret{\pair{\nu}{\sigma}} \circ \p_{\Gamma_1}\]
where $\pair{\nu}{\sigma}$ on the left is the substitution  $\wfcxt{\Gamma_0,\Gamma_1} \to\wfcxt[\Delta']{\Gamma'}$ and on the right is the substitution $\wfcxt{\Gamma_0} \to\wfcxt[\Delta']{\Gamma'}$.
\end{lemma}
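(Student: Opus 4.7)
The plan is to proceed by induction on the derivation of $\pair{\nu}{\sigma}$, following the four substitution-formation rules from Section~\ref{sec:substitutions}. In each case $\pret{\pair{\nu}{\sigma}}$ is determined by the corresponding defining equation in Section~\ref{sec:semantic:subst}, so it suffices to compare that equation instantiated at the two domains $\Gamma_0$ and $\Gamma_0, \Gamma_1$ and to expose $\p_{\Gamma_1}$ as a common right factor.

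The empty substitution case is trivial, both sides being the unique morphism into the terminal object. For variable extension $\sigma = \subex{\sigma'}{x}{t}$ the interpretation is a comprehension pairing $\cpair{\pret{\pair{\nu}{\sigma'}}}{\pret{t}}$; I would apply the induction hypothesis to $\pair{\nu}{\sigma'}$ and invoke Lemma~\ref{lem:weakening} instantiated at the projection $\pair{\id}{\proj{\Gamma_0, \Gamma_1}{\Gamma_0}}$ — whose interpretation is $\p_{\Gamma_1}$ by Lemma~\ref{lem:syn:proj:nu} — to rewrite $\pret{t}$ in the extended context as $\pret{t}[\p_{\Gamma_1}]$, then conclude with the standard CwF identity $\cpair{\gamma}{t} \circ \delta = \cpair{\gamma \circ \delta}{t[\delta]}$. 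For the tick variable case $\sigma = \subex{\sigma'}{\tickB}{\tickA}$, whose domain has the form $\Gamma_{00}, \tickA : \nu(\kappa), \Gamma_{01}$, the defining equation produces a factor $\p_{\Gamma_{01}, \Gamma_1}$ where the original substitution used $\p_{\Gamma_{01}}$; by the inductive definition of the context projection in Section~\ref{sec:tick:calc:interp} this splits as $\p_{\Gamma_{01}} \circ \p_{\Gamma_1}$, exposing the extra $\p_{\Gamma_1}$ on the right. The $\tickc$ case is the simplest: the defining equation precomposes a fixed morphism $\nu^* \pret{\pair{\basicsub{\kappa}{\kappa'}}{\basicsub{\tickA}{\tickc}}}$, which does not mention the domain context, with $\pret{\pair{\nu}{\sigma'}}$, so a direct appeal to the induction hypothesis suffices.

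I do not anticipate any serious obstacle. The lemma expresses closure of the interpretation under domain weakening, and the argument is essentially a routine unwinding of the clauses defining $\pret{\pair{\nu}{\sigma}}$. The only real dependency is on the substitution lemma for projections (Lemma~\ref{lem:weakening}), which is already in place by the time the current lemma appears in the development.
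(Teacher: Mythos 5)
Your proposal is correct and follows essentially the same route as the paper's own proof: induction over the structure of the substitution (equivalently, over $\Gamma'$), with the variable case handled via Lemma~\ref{lem:weakening} and Lemma~\ref{lem:syn:proj:nu} together with the CwF pairing identity, the tick-variable case handled by splitting $\p_{\Gamma_{01},\Gamma_1} = \p_{\Gamma_{01}}\circ\p_{\Gamma_1}$, and the $\tickc$ case handled by a direct appeal to the induction hypothesis since the post-composed morphism is independent of the domain context. The only nuance the paper makes explicit that you gloss over slightly is that in the tick-variable case no appeal to the inductive hypothesis is actually needed, whereas in the $\tickc$ case the inductive hypothesis is applied to $\pair{\mu}{\sigma'}$ over the shrunk clock context $\Delta''$; neither affects the soundness of your argument.
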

\begin{proof}
To avoid confusion we will annotate the syntactic substitution with its source context. The statement of the lemma can then be rewritten as
\[\pret{\pair{\nu}{\sigma}}_{\Gamma_0,\Gamma_1} = \pret{\pair{\nu}{\sigma}}_{\Gamma_0} \circ \p_{\Gamma_1}\]

The proof is by induction on $\Gamma'$. In the case where $\Gamma' = \Gamma'', x:A$ we have that $\sigma = \sigma'[x\mapsto t]$
\begin{align*}
\pret{\pair{\nu}{\sigma}_{\Gamma_0,\Gamma_1}} = \cpair {\pret{\pair{\nu}{\sigma'}}_{\Gamma_0,\Gamma_1}}{\pret{t}_{\Gamma_0,\Gamma_1}}
\end{align*}
By the induction hypothesis
\[\cpair {\pret{\pair{\nu}{\sigma'}}_{\Gamma_0,\Gamma_1}}{\pret{t}_{\Gamma_0,\Gamma_1}} = \cpair {\pret{\pair{\nu}{\sigma'}}_{\Gamma_0} \circ \p_{\Gamma_1}}{\pret{t}_{\Gamma_0,\Gamma_1}}\]

By Lemma~\ref{lem:weakening}, $\pret{t}_{\Gamma_0,\Gamma_1} = \id^*\pret{t}[\pret{\pair{\id} {\proj{\Gamma_0,\Gamma_1}{\Gamma_0}}}]$ and by Lemma~\ref{lem:syn:proj:nu} $\pret{\pair{\id} {\proj{\Gamma_0,\Gamma_1}{\Gamma_0}}}] =  \p_{\Gamma_1}$. We have then that
\[\cpair {\pret{\pair{\nu}{\sigma'}}_{\Gamma_0,\Gamma_1}}{\pret{t}_{\Gamma_0,\Gamma_1}} = \cpair {\pret{\pair{\nu}{\sigma'}}_{\Gamma_0} \circ \p_{\Gamma_1}}{\pret{t}_{\Gamma_0}[\p_{\Gamma_1}]} = \cpair {\pret{\pair{\nu}{\sigma'}}_{\Gamma_0}}{\pret{t}_{\Gamma_0}} \circ \p_{\Gamma_1} =\pret{\pair{\nu}{\sigma}}_{\Gamma_0}\circ \p_{\Gamma_1} \]

In the case where $\Gamma' = \Gamma'',\alpha:\kappa$ we have that either 
\begin{itemize}
\item $\sigma = \sigma'[\alpha \mapsto \beta]$ and $\Gamma_0 = \Gamma_{00},\beta:\nu(\kappa), \Gamma_{01}$
We have then that
\begin{align*}
\pret{\pair{\nu}{\sigma}}_{\Gamma_0,\Gamma_1} &= \pret{\pair{\nu}{\sigma'[\alpha \mapsto \beta]}}_{\Gamma_0,\Gamma_1}\\
&= e^{\nu,\kappa} \circ \tearlier^{\nu(\kappa)} \pret{\pair{\nu}{\sigma'}} \circ \p_{\Gamma_{01},\Gamma_1}\\
&=e^{\nu,\kappa} \circ \tearlier^{\nu(\kappa)} \pret{\pair{\nu}{\sigma'}} \circ \p_{\Gamma_{01}} \circ\p_{\Gamma_1}\\
&=\pret{\pair{\nu}{\sigma}}_{\Gamma_0} \circ \p_{\Gamma_1}
\end{align*}
\item $\nu = \mu[\kappa \mapsto \mu(\kappa')]$  and $\sigma = \sigma'[\alpha \mapsto \tickc]$ and $\pair{\mu}{\sigma'}:\wfcxt{\Gamma_0} \to \wfcxt[\Delta'']{\Gamma''}$ and $\Delta' = \Delta'',\kappa$.
We have then that
\begin{align*}
\pret{\pair{\nu}{\sigma}}_{\Gamma_0,\Gamma_1} &= \pret{\pair{\mu[\kappa\mapsto\mu(\kappa')]}{\sigma'[\alpha \mapsto \tickc]}}_{\Gamma_0,\Gamma_1}  \\
&=\mu^*\pret{\pair{[\kappa\mapsto\kappa'}{\alpha\mapsto \tickc}} \circ \pret{\pair{\mu}{\sigma'}}_{\Gamma_0,\Gamma1}
\end{align*}
By the induction hypothesis $\pret{\pair{\mu}{\sigma'}}_{\Gamma_0,\Gamma1} = \pret{\pair{\mu}{\sigma'}}_{\Gamma_0} \circ \p_{\Gamma_1}$ thus 
\begin{align*}
\pret{\pair{\nu}{\sigma}}_{\Gamma_0,\Gamma_1}  &= \mu^*\pret{\pair{[\kappa\mapsto\kappa'}{\alpha\mapsto \tickc}} \circ \pret{\pair{\mu}{\sigma'}}_{\Gamma_0} \circ \p_{\Gamma_1} \\
&= \pret{\pair{\mu[\kappa \mapsto \mu(\kappa')]}{\sigma'[\alpha \mapsto \tickc]}}_{\Gamma_0} \circ \p_{\Gamma_1} \\
&= \pret{\pair{\nu}{\sigma}} \circ \p_{\Gamma_1}
\end{align*}
\end{itemize}
\end{proof}

\begin{lemma}\label{lem:sub:comp:proj}
 Suppose $\pair{\nu}{\tau} : \wfcxt{\Gamma} \to
 \wfcxt[\Delta']{\Gamma_0'}$ and $\pair{\nu}{\tau\tau'} : \wfcxt{\Gamma} \to
 \wfcxt[\Delta']{\Gamma_0', \Gamma_1'}$ is an extension of $\pair{\nu}{\tau}$. Then
\begin{align*}
 \pret{\pair{\nu}{\tau}}= \nu^*(\p_{\Gamma_1'}) \circ 
 \pret{\pair{\nu}{\tau\tau'}} : \pret{\wfcxt{\Gamma}}
 \to \nu^*\pret{\wfcxt[\Delta']{\Gamma_0'}}
\end{align*}
\end{lemma}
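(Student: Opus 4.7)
The plan is to proceed by induction on the length of $\Gamma_1'$. The base case where $\Gamma_1'$ is empty is immediate since then $\tau\tau' = \tau$ and $\p_{\Gamma_1'} = \id$. For the inductive step we split on the form of the last entry of $\Gamma_1'$. Because $\nu$ is the same in both substitutions, the target clock context cannot grow, so the $\tickc$ rule does not arise and only the variable-extension and tick-variable-extension cases need to be considered.

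In the case $\Gamma_1' = \Gamma_1'', x:A$ the substitution formation rules force $\tau\tau' = (\tau\tau'')\subex{}xt$ for some $t$, so $\pret{\pair\nu{\tau\tau'}} = \cpair{\pret{\pair\nu{\tau\tau''}}}{\pret t}$. Using that $\nu^*$ is a morphism of CwFs and so commutes on the nose with comprehension and projection, the composite $\nu^*(\p_{\Gamma_1'}) \circ \pret{\pair\nu{\tau\tau'}}$ reduces, by the CwF equation $\p_A \circ \cpair\gamma t = \gamma$, to $\nu^*(\p_{\Gamma_1''}) \circ \pret{\pair\nu{\tau\tau''}}$, which equals $\pret{\pair\nu\tau}$ by the induction hypothesis.

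In the case $\Gamma_1' = \Gamma_1'', \tickA:\kappa$, the tick formation rule forces $\Gamma$ to decompose as $\Gamma_0, \tickB:\nu(\kappa), \Gamma_2$ with $\pair\nu{\tau\tau''} : \wfcxt{\Gamma_0} \to \wfcxt[\Delta']{\Gamma_0', \Gamma_1''}$ and
\[
\pret{\pair\nu{\tau\tau'}} = e^{\nu,\kappa} \circ \tearlier^{\nu(\kappa)}\pret{\pair\nu{\tau\tau''}} \circ \p_{\Gamma_2}.
\]
Since $\p_{\Gamma_1'} = \p_{\Gamma_1''} \circ \p_{\tearlier^\kappa}$, the key step is to invoke Lemma~\ref{lem:proj:e} to rewrite $\nu^*(\p_{\tearlier^\kappa}) \circ e^{\nu,\kappa}$ as $\p_{\tearlier^{\nu(\kappa)}\nu^*}$. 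Combining this with naturality of $\p_{\tearlier^{\nu(\kappa)}} : \tearlier^{\nu(\kappa)} \to \id$ to slide it past $\tearlier^{\nu(\kappa)}\pret{\pair\nu{\tau\tau''}}$, then the induction hypothesis applied to the substitutions with source $\Gamma_0$, and finally Lemma~\ref{lem:weakeningsubstitutions} to fold the remaining $\p_{\tearlier^{\nu(\kappa)}\pret{\Gamma_0}} \circ \p_{\Gamma_2}$ into $\pret{\pair\nu\tau}_{\Gamma}$, yields the desired equality.

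The main obstacle is the tick case, where the source context changes shape: the induction hypothesis naturally speaks about substitutions with source $\Gamma_0$, but the statement of the lemma concerns substitutions with source $\Gamma$. Lemma~\ref{lem:weakeningsubstitutions} is precisely what is needed to bridge this mismatch, and the calculation must carefully coordinate it with Lemma~\ref{lem:proj:e} and the naturality of the counit projection $\p_{\tearlier^{\nu(\kappa)}}$.
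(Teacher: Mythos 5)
Your base case and the first two inductive cases (ordinary variable extension and tick-variable extension via substitution of a tick variable $\tickB$) match the paper's proof faithfully. However, your claim that ``the $\tickc$ rule does not arise'' is incorrect, and this leaves a genuine gap.

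The reasoning you give is that since $\nu$ and $\Delta'$ agree in $\pair\nu\tau$ and $\pair\nu{\tau\tau'}$, the clock context ``cannot grow,'' so the $\tickc$ formation rule cannot have been used. This argument would be valid if the derivation of $\pair\nu{\tau\tau'}$ had to pass through $\pair\nu\tau$, but it does not. The formation rules build the target context one entry at a time, and the $\tickc$ rule simultaneously extends the target's clock context and alters the clock substitution. So when the last entry of $\Gamma_1'$ is $\tickA:\kappa$ and the last entry of $\tau'$ is $[\tickA \mapsto \tickc]$, the penultimate step in the derivation of $\pair\nu{\tau\tau'}$ is a substitution $\pair\mu{\tau\tau''} : \wfcxt{\Gamma} \to \wfcxt[\Delta'']{\Gamma_0', \Gamma_1''}$ with the strictly smaller clock context $\Delta'' = \Delta' \setminus \{\kappa\}$ and a different clock substitution $\mu$ satisfying $\nu = \subex\mu\kappa{\mu(\kappa')}$; the derivation passes through $\pair\mu\tau$, not $\pair\nu\tau$. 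A concrete instance: take $\Delta = \{a\}$, $\Delta' = \{\kappa_1,\kappa\}$ with $\nu$ sending both to $a$, $\Gamma = \Gamma_0' = \cdot$, $\Gamma_1' = (\tickA:\kappa)$, $\tau' = [\tickA\mapsto\tickc]$. Here $\pair\nu{\tau\tau'}$ is well-formed, uses the $\tickc$ rule, and shares $\nu$ and $\Delta'$ with the empty substitution $\pair\nu\cdot$. The paper's proof handles this case with a separate calculation: it unfolds $\pret{\pair\nu{\tau\tau'}}$ as $\mu^*\pret{\pair{[\kappa\mapsto\kappa']}{[\tickA\mapsto\tickc]}}\circ\pret{\pair\mu{\tau\tau''}}$, invokes Lemma~\ref{lem:tickc:proj} to rewrite $\basicsub\kappa{\kappa'}^*(\p_{\tearlier^\kappa})\circ\pret{\pair{\basicsub\kappa{\kappa'}}{\basicsub\tickA\tickc}}$, and then applies the induction hypothesis (twice) and Lemma~\ref{lem:syn:proj:nu} to the substitution $\pair\mu{\tau\tau''}$ in the smaller clock context $\Delta''$. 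This case is not dispensable, and its proof uses auxiliary lemmas that do not appear in your argument at all.
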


\begin{proof}
The proof is by induction on $\Gamma'_1$. In the case where $\Gamma'_1 = \Gamma''_1, x:A$ we have that $\tau' = \tau''[x\mapsto t]$
\begin{align*}
\nu^*(\p_{\Gamma_1'}) \circ \pret{\pair{\nu}{\tau\tau'}} & = \nu^*(\p_{\Gamma''_1}) \circ \nu^*(\p_A) \circ \pret{\pair{\nu}{\tau\tau''[x\mapsto t]}}\\
&= \nu^*(\p_{\Gamma''_1}) \circ \nu^*(\p_A) \circ \cpair{\pret{\pair{\nu}{\tau\tau''}}}{\pret{t}}\\
&=\nu^*(\p_{\Gamma''_1}) \circ \pret{\pair{\nu}{\tau\tau''}}\\
&=\pret{\pair{\nu}{\tau}}
\end{align*}
In the case where $\Gamma'_1 = \Gamma''_1,\alpha:\kappa$ we have that either 
\begin{itemize}
\item $\tau' = \sigma''[\alpha \mapsto \beta]$ and $\Gamma = \Gamma_0,\beta:\nu(\kappa), \Gamma_{1}$. In which case we have
\begin{align*}
\nu^*(\p_{\Gamma_1'}) \circ \pret{\pair{\nu}{\tau\tau'}} &= \nu^*(\p_{\Gamma_1'}) \circ \pret{\pair{\nu}{\tau\tau''[\alpha \mapsto \beta]}}\\
&= \nu^*(\p_{\Gamma_1'}) \circ e^{\nu,\kappa} \circ \tearlier^{\nu(\kappa)} \pret{\pair{\nu}{\tau\tau''}} \circ \p_{\Gamma_1}\\
&=\nu^*(\p_{\Gamma''_1})\circ \nu^*(\p_{\tearlier^{\kappa}}) \circ e^{\nu,\kappa} \circ \tearlier^{\nu(\kappa)} \pret{\pair{\nu}{\tau\tau''}} \circ \p_{\Gamma_1}\\
&=\nu^*(\p_{\Gamma''_1}) \circ \p_{\tearlier^{\nu(\kappa)}\nu^*} \circ \tearlier^{\nu(\kappa)} \pret{\pair{\nu}{\tau\tau''}} \circ \p_{\Gamma_1}\\
&=\nu^*(\p_{\Gamma''_1}) \circ \pret{\pair{\nu}{\tau\tau''}}\circ \p_{\tearlier^{\nu(\kappa)}\nu^*} \circ \p_{\Gamma_1}
\end{align*}
where the last equality follows by naturality of $\tearlier^\kappa$ and the second to last by Lemma~\ref{lem:proj:e}. By the induction hypothesis we get then that
\[\nu^*(\p_{\Gamma''_1}) \circ \pret{\pair{\nu}{\tau\tau''}}\circ \p_{\tearlier^{\nu(\kappa)}\nu^*} \circ \p_{\Gamma_1} = \pret{\pair{\nu}{\tau}}\circ \p_{\tearlier^{\nu(\kappa)}\nu^*} \circ \p_{\Gamma_1} = \pret{\pair{\nu}{\tau}}\circ \p_{\beta:\nu(\kappa),\Gamma_1} \]
which by Lemma~\ref{lem:weakeningsubstitutions} is equal to $\pret{\pair{\nu}{\tau}}$

\item $\nu = \mu[\kappa \mapsto \mu(\kappa')]$  and $\tau' = \tau''[\alpha \mapsto \tickc]$ and $\pair{\mu}{\tau\tau'}:\wfcxt{\Gamma} \to \wfcxt[\Delta'']{\Gamma'_0, \Gamma''_1}$ and $\Delta' = \Delta'',\kappa$. In which case we have
\begin{align*}
\nu^*(\p_{\Gamma_1'}) \circ \pret{\pair{\nu}{\tau\tau'}} &=\nu^*(\p_{\Gamma''_1})\circ \nu^*(\p_{\tearlier^\kappa}) \circ \pret{\pair{\mu[\kappa \mapsto \mu(\kappa')]}{\tau\tau''[\alpha \mapsto\tickc]}}\\
&=\nu^*(\p_{\Gamma''_1})\circ \nu^*(\p_{\tearlier^\kappa}) \circ {\mu}^*\pret{\pair{[\kappa\mapsto \kappa']}{[\alpha \mapsto \tickc]}} \circ \pret{\pair{\mu}{\tau\tau''}}\\
&= \nu^*(\p_{\Gamma''_1})\circ \mu^*([\kappa \mapsto \kappa']^* (\p_{\tearlier^\kappa})) \circ {\mu}^*\pret{\pair{[\kappa\mapsto \kappa']}{[\alpha \mapsto \tickc]}} \circ \pret{\pair{\mu}{\tau\tau''}}\\
&=\nu^*(\p_{\Gamma''_1})\circ \mu^* ([\kappa \mapsto\kappa']^* (\p_{\tearlier^\kappa})\circ\pret{\pair{[\kappa\mapsto \kappa']}{[\alpha \mapsto \tickc]}}) \circ \pret{\pair{\mu}{\tau\tau''}}
\end{align*}
By Lemma~\ref{lem:tickc:proj} this equal to
\begin{align*}
\nu^*(\p_{\Gamma''_1}) \circ \mu^* \pret{\pair{\basicsub\kappa{\kappa'}}{\id}}\circ \pret{\pair{\mu}{\tau\tau''}} 
&=\mu^*([\kappa\to\kappa']^* (\p_{\Gamma''_1}) \circ \pret{\pair{\basicsub\kappa{\kappa'}}{\id}}) \circ \pret{\pair{\mu}{\tau\tau''}}
\end{align*}
Which by the induction hypothesis is equal to
\begin{align*}
\mu^*(\pret{\pair{\basicsub\kappa{\kappa'}}{\proj{\Gamma'_0,\Gamma''_1}{\Gamma'_0}}})  \circ \pret{\pair{\mu}{\tau\tau''}}
\end{align*}
By Lemma~\ref{lem:syn:proj:nu} this is equal to
\begin{align*}
\mu^*(\pret{\pair{\basicsub\kappa{\kappa'}}{\id}} \circ \p_{\Gamma''_1}) \circ \pret{\pair{\mu}{\tau\tau''}}
&=\mu^*(\p_{\Gamma''_1})\circ \pret{\pair{\mu}{\tau\tau''}}\\
&= \pret{\pair{\mu}{\tau}}
\end{align*}
Where the last equality follows by the induction hypothesis.\qedhere
\end{itemize}
\end{proof}

\begin{lemma} \label{lem:subst:comp:i:kappa:to:kappa'}
 Suppose $\wfcxt\Gamma$ and $\kappa\notin\Delta$, $\kappa'\in \Delta$. Then the composition
  \[
 (\id_\Delta\basicsub{\kappa}{\kappa'})^* \pret{\pair{i}{\id_\Gamma}} \circ \pret{\pair{\id_\Delta\basicsub{\kappa}{\kappa'}}{\id_\Gamma}}
 \]
is the identity.
\end{lemma}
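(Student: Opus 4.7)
The plan is induction on $\Gamma$. The base case $\Gamma = \cdot$ is immediate, since $\pret{\wfcxt{\cdot}}$ is the terminal object of $\gr{\Delta}$ and every endomorphism of the terminal object equals $\id$.

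For the term-variable case $\Gamma = \Gamma', x : A$, I would unfold each of $\pret{\pair{i}{\id_{\Gamma}}}$ and $\pret{\pair{\id_\Delta\basicsub{\kappa}{\kappa'}}{\id_{\Gamma}}}$ into the form $\cpair{\pi \circ \p_{\pret{A}}}{\q_{\pret{A}}}$ using the defining formula for $\pret{\pair{\nu}{\subex{\sigma}{x}{x}}}$, where $\pi$ is the corresponding interpretation at $\Gamma'$. Since $(\id_\Delta\basicsub{\kappa}{\kappa'})^*$ is a morphism of CwFs and hence commutes with $\cpair{-}{-}$, $\p$ and $\q$, the universal property of comprehension reduces the composition to $\cpair{\bigl((\id_\Delta\basicsub{\kappa}{\kappa'})^*\pret{\pair{i}{\id_{\Gamma'}}}\circ\pret{\pair{\id_\Delta\basicsub{\kappa}{\kappa'}}{\id_{\Gamma'}}}\bigr)\circ\p_{\pret{A}}}{\q_{\pret{A}}}$, which by the induction hypothesis equals $\cpair{\p_{\pret{A}}}{\q_{\pret{A}}} = \id$.

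The main work is in the tick case $\Gamma = \Gamma', \tickA : \kappa''$ with $\kappa'' \in \Delta$. Here each interpretation unfolds as $e^{\kappa'',\nu}_{\pret{\Gamma'}} \circ \tearlier^{\kappa''}\pret{\pair{\nu}{\id_{\Gamma'}}}$, so the composition expands to a four-fold chain. Applying naturality of the natural transformation $e^{\kappa'',\id_\Delta\basicsub{\kappa}{\kappa'}}$ at the morphism $\pret{\pair{i}{\id_{\Gamma'}}}$ lets me interchange the middle two factors, regrouping the composition as
\[
\bigl((\id_\Delta\basicsub{\kappa}{\kappa'})^*e^{\kappa'',i}\circ e^{\kappa'',\id_\Delta\basicsub{\kappa}{\kappa'}}\bigr)\circ \tearlier^{\kappa''}\bigl((\id_\Delta\basicsub{\kappa}{\kappa'})^*\pret{\pair{i}{\id_{\Gamma'}}}\circ \pret{\pair{\id_\Delta\basicsub{\kappa}{\kappa'}}{\id_{\Gamma'}}}\bigr).
\]
By Lemma~\ref{lem:e:nat} the first bracket equals $e^{\kappa'',(\id_\Delta\basicsub{\kappa}{\kappa'})\circ i} = e^{\kappa'',\id_\Delta}$, which is the identity, and the second bracket equals $\tearlier^{\kappa''}(\id) = \id$ by the induction hypothesis applied to $\Gamma'$.

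The hard part will be the bookkeeping in the tick case, in particular verifying that the domains and codomains of the $e$-maps and $\tearlier^{\kappa''}$ factors align under the naturality rewriting. However, Lemma~\ref{lem:e:nat} is precisely designed to drive this cancellation---supplying both $\mu^*(e^{\nu,\kappa})\circ e^{\mu,\nu(\kappa)} = e^{\mu\nu,\kappa}$ and $e^{\id,\kappa} = \id$---so once the types are verified, the argument is essentially mechanical.
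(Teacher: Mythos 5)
Your proof is correct and follows the same route as the paper's: induction on $\Gamma$, commuting $\nu^*$ with the CwF structure in the variable case, and in the tick case expanding both interpretations as $e\circ\tearlier^{\kappa''}(-)$, interchanging by naturality of $e$, and cancelling via Lemma~\ref{lem:e:nat}. (The only detail glossed over is that rewriting $\pret{\pair{\nu}{\id_\Gamma}}$ into $\cpair{\pret{\pair{\nu}{\id_{\Gamma'}}}\circ\p_{\pret A}}{\q_{\pret A}}$ uses Lemma~\ref{lem:syn:proj:nu} to convert the syntactic projection into composition with $\p_A$, not merely the defining clause for the substitution.)
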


\begin{proof}
 The proof is by induction on $\Gamma$. The case of $\Gamma$ empty is trivial. In the case of 
 $\Gamma = \Gamma', x:A$, there is a projection $p$ such that $\id_{\Gamma} = \subex pxx$. Then
\begin{align*}
 & (\id_\Delta\basicsub{\kappa}{\kappa'})^* \pret{\pair{i}{\id_\Gamma}} \circ \pret{\pair{\id_\Delta\basicsub{\kappa}{\kappa'}}{\id_\Gamma}} \\
 & = (\id_\Delta\basicsub{\kappa}{\kappa'})^* \cpair{\pret{\pair{i}{p}}}\q \circ 
 \cpair{\pret{\pair{\id_\Delta\basicsub{\kappa}{\kappa'}}{p}}}{\q}\\
 & = (\id_\Delta\basicsub{\kappa}{\kappa'})^* \cpair{\pret{\pair{i}{\id_{\Gamma'}}}\circ\p_A}\q \circ 
\cpair{\pret{\pair{\id_\Delta\basicsub{\kappa}{\kappa'}}{\id_{\Gamma'}}}\circ \p_A}{\q}
\end{align*}
where the last of these equalities uses Lemma~\ref{lem:syn:proj:nu}. Since functors of the form $\nu^*$ commute 
with the CwF structure, including comprehension, $\p$ and $\q$, the above can be rewritten to
\begin{align*}
  &  \cpair{(\id_\Delta\basicsub{\kappa}{\kappa'})^*(\pret{\pair{i}{\id_{\Gamma'}}})\circ\p_A}\q \circ 
\cpair{\pret{\pair{\id_\Delta\basicsub{\kappa}{\kappa'}}{\id_{\Gamma'}}}\circ \p_A}{\q} \\
  & =  \cpair{(\id_\Delta\basicsub{\kappa}{\kappa'})^*(\pret{\pair{i}{\id_{\Gamma'}}})\circ
  \pret{\pair{\id_\Delta\basicsub{\kappa}{\kappa'}}{\id_{\Gamma'}}}\circ \p_A}\q
\end{align*}
 which by the induction hypothesis equals $\cpair{\p_A}\q$ which is the identity. 
 
 In the case of $\Gamma = \Gamma', \tickA : \kappa''$ for $\kappa'' \in \Delta$ we get 
\begin{align*}
 & (\id_\Delta\basicsub{\kappa}{\kappa'})^* \pret{\pair{i}{\id_\Gamma}} \circ \pret{\pair{\id_\Delta\basicsub{\kappa}{\kappa'}}{\id_\Gamma}} \\ 
 & = (\id_\Delta\basicsub{\kappa}{\kappa'})^*(e^{i, \kappa''} \circ \tearlier^{\kappa''}(\pret{\pair{i}{\id_{\Gamma'}}})) 
 \circ e^{\id_\Delta\basicsub{\kappa}{\kappa'}, \kappa''} \circ 
 \tearlier^{\kappa''}(\pret{\pair{\id_\Delta\basicsub{\kappa}{\kappa'}}{\id_{\Gamma'}}}) \\ 
 & = (\id_\Delta\basicsub{\kappa}{\kappa'})^*e^{i, \kappa''} \circ 
 (\id_\Delta\basicsub{\kappa}{\kappa'})^*\tearlier^{\kappa''}(\pret{\pair{i}{\id_{\Gamma'}}})
 \circ e^{\id_\Delta\basicsub{\kappa}{\kappa'}, \kappa''} \circ 
 \tearlier^{\kappa''}(\pret{\pair{\id_\Delta\basicsub{\kappa}{\kappa'}}{\id_{\Gamma'}}}) \\ 
 & = (\id_\Delta\basicsub{\kappa}{\kappa'})^*e^{i, \kappa''} \circ e^{\id_\Delta\basicsub{\kappa}{\kappa'}, \kappa''}
 \circ 
 \tearlier^{\kappa''}(\id_\Delta\basicsub{\kappa}{\kappa'})^*(\pret{\pair{i}{\id_{\Gamma'}}})
  \circ 
 \tearlier^{\kappa''}(\pret{\pair{\id_\Delta\basicsub{\kappa}{\kappa'}}{\id_{\Gamma'}}}) \\ 
 & = e^{\id_{\Delta}, \kappa''} \circ 
 \tearlier^{\kappa''}((\id_\Delta\basicsub{\kappa}{\kappa'})^*(\pret{\pair{i}{\id_{\Gamma'}}})
  \circ 
 \pret{\pair{\id_\Delta\basicsub{\kappa}{\kappa'}}{\id_{\Gamma'}}}) 
\end{align*}
using naturality of $e$ in the third equality and Lemma~\ref{lem:e:nat} in the fourth. Now, by the induction hypothesis
and Lemma~\ref{lem:e:nat} once again, the above reduces to the identity.
Note that the above argument also holds for $\kappa'' = \kappa'$. 
\end{proof}

\begin{lemma} \label{lem:weak:sub:clock:extension}
 Let $\pair\nu\sigma : \wfcxt\Gamma\to \wfcxt[\Delta']{\Gamma'}$ be a \clott\ substitution, let $\kappa$ be fresh and 
 $\kappa'\in \Delta$. Let $i: \Delta \to \Delta, \kappa$ and $j: \Delta' \to \Delta', \kappa$ be the inclusion maps. Then under the assumption that $(\subex{\nu}\kappa\kappa)^*(\inv{\pret{\pair{j}{\id_{\Gamma'}}}})\circ i^*\pret{\pair{\nu}{\sigma}} = 
\pret{\pair{\subex\nu\kappa\kappa}{\sigma}}\circ \inv{\pret{\pair{i}{\id_\Gamma}}}$ the two compositions
\[
\pret{\wfcxt\Gamma} \xrightarrow{\pret{\pair{\nu}{\sigma}}} \nu^*\pret{\wfcxt[\Delta']{\Gamma'}} 
\xrightarrow{\nu^*(\pret{\pair{[\kappa\mapsto \kappa']}{\basicsub{\tickA}\tickc}})} 
\nu^*\basicsub{\kappa}{\kappa'}^* \pret{\wfcxt[\Delta',\kappa]{\Gamma', \tickA : \kappa}} 
\]
and $ \basicsub{\kappa}{\nu(\kappa')}^* \pret{\pair{\subex\nu\kappa\kappa}{\subex\sigma\tickA\tickA}} \circ \pret{\pair{[\kappa\mapsto \nu(\kappa')]}{\basicsub{\tickA}\tickc}}$ of type
\[
\pret{\wfcxt\Gamma}\to 
 \basicsub{\kappa}{\nu(\kappa')}^*\pret{\wfcxt[\Delta,\kappa]{\Gamma, \tickA : \kappa}} 
 \to 
 (\basicsub{\kappa}{\nu(\kappa')})^*\subex\nu{\kappa}{\kappa}^* \pret{\wfcxt[\Delta',\kappa]{\Gamma', \tickA : \kappa}} 
\]
are equal. 
\end{lemma}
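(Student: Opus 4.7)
\begin{proofsketch}
The plan is to verify the equality pointwise at an arbitrary $\gamma \in \pret{\wfcxt\Gamma}_{\triple\Theta\vartheta f}$, unfolding the explicit description of $\pret{\pair{[\kappa\mapsto \kappa']}{\basicsub{\tickA}\tickc}}$ from Section~\ref{sec:tickc} on both sides and exhibiting that each reduces to $\pair{\{\kappa\}}{\pret{\pair{\subex\nu\kappa\kappa}{\sigma}}(\inv{\pret{\pair{i}{\id_\Gamma}}}(\iota\cdot\gamma))}$, where $\iota$ is the inclusion into $\triple{\Theta,\#}{\vartheta\basicsub\#{n+1}}{\iota_\Theta\cdot}$ for $n = \vartheta(f(\nu(\kappa')))$.

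For the left-hand side, I first apply $\pret{\pair\nu\sigma}$ to $\gamma$, then use the formula of Section~\ref{sec:tickc} instantiated at $\kappa'\in \Delta'$ to obtain $\pair{\{\kappa\}}{\inv{\pret{\pair{j}{\id_{\Gamma'}}}}(\iota\cdot\pret{\pair\nu\sigma}(\gamma))}$. Naturality of the transformation $\pret{\pair\nu\sigma}$ along $\iota$ lets me rewrite the inner expression as $\inv{\pret{\pair{j}{\id_{\Gamma'}}}}(\pret{\pair\nu\sigma}(\iota\cdot \gamma))$, which is precisely the value of $(\subex\nu\kappa\kappa)^*(\inv{\pret{\pair{j}{\id_{\Gamma'}}}})\circ i^*\pret{\pair{\nu}{\sigma}}$ at $\iota\cdot\gamma$. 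Invoking the standing assumption then delivers $\pret{\pair{\subex\nu\kappa\kappa}{\sigma}}(\inv{\pret{\pair{i}{\id_\Gamma}}}(\iota\cdot\gamma))$, giving the claimed form.

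For the right-hand side, $\pret{\pair{[\kappa\mapsto \nu(\kappa')]}{\basicsub{\tickA}\tickc}}(\gamma)$ is by definition $\pair{\{\kappa\}}{\inv{\pret{\pair{i}{\id_\Gamma}}}(\iota\cdot\gamma)}$. The substitution $\pret{\pair{\subex\nu\kappa\kappa}{\subex\sigma\tickA\tickA}}$ unfolds, by the interpretation of tick-extension with $\Gamma'' = \emptyset$, to $e^{\subex\nu\kappa\kappa,\kappa}\circ\tearlier^{\kappa}\pret{\pair{\subex\nu\kappa\kappa}{\sigma}}$. Using the concrete description of $\tearlier^\kappa$ on a pair of the form $\pair{\{\kappa\}}{-}$ from Lemma~\ref{lem:earlier} — which simply transports the second component by the ambient natural transformation — together with the fact that on this summand the composite $\basicsub\kappa{\nu(\kappa')}^*(e^{\subex\nu\kappa\kappa,\kappa})$ acts as the renaming of the fresh clock introduced by $\tearlier^\kappa$ back to the fresh $\#$ used by the Section~\ref{sec:tickc} formula, I obtain the same expression $\pair{\{\kappa\}}{\pret{\pair{\subex\nu\kappa\kappa}{\sigma}}(\inv{\pret{\pair{i}{\id_\Gamma}}}(\iota\cdot\gamma))}$.

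The main obstacle is the last step of the third paragraph: showing that on the $X=\{\kappa\}$ summand the clock-substituted map $\basicsub\kappa{\nu(\kappa')}^*(e^{\subex\nu\kappa\kappa,\kappa})$ coherently identifies the freshly-introduced clock of $\tearlier^\kappa$ with the $\#$ chosen by the Section~\ref{sec:tickc} formula, so that the two copies of $\inv{\pret{\pair i{\id_\Gamma}}}(\iota\cdot\gamma)$ — a priori sitting in summands indexed by different choices of fresh name — really coincide. This is a routine but delicate bookkeeping argument of the same flavour as those appearing in the proof of Lemma~\ref{lem:syn:proj:nu} and in the computation of $e$ in Lemma~\ref{lem:e:nat}, and carries through by naturality together with the unit--counit description of $e^{\kappa,\nu}$.
\end{proofsketch}
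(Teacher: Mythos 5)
Your proposal is correct and follows the same strategy as the paper's proof: compute both sides pointwise at an arbitrary $\gamma$, unfold the explicit formula for $\pret{\pair{[\kappa\mapsto\kappa']}{\basicsub{\tickA}\tickc}}$ from Section~\ref{sec:tickc}, use naturality to commute $\pret{\pair\nu\sigma}$ past the restriction $\iota\cdot(-)$, and reduce to the standing assumption. One remark: the ``main obstacle'' you flag at the end dissolves more cleanly than you suggest. Both appearances of the fresh clock refer to the same canonical chosen name $\#_\Theta$, so the objects $\triple\Theta\vartheta{\subex f\kappa{f(\nu(\kappa'))}}[\{\kappa\},\kappa+]$ and $\triple{\Theta,\#}{\subex\vartheta\#{n+1}}{\subex f\kappa\#}$ coincide literally; and $e^{\kappa,\subex\nu\kappa\kappa}$ does not rename anything in the second component --- by its concrete description it sends $(X,y)$ to $(\inv{(\subex\nu\kappa\kappa)}X,y)$ with $y$ unchanged, and since $\kappa\notin\Delta$ the summand index $\{\kappa\}$ is fixed. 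So the last step is a direct computation, not a ``delicate'' identification of summands; the paper simply states the RHS outright without even drawing attention to it.
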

\begin{proof}
 Suppose $\gamma \in \pret{\wfcxt\Gamma}_{\triple\Theta\vartheta f}$. Let $\iota : \Theta \to \Theta, \#$ be the inclusion, then
\begin{align*}
& \nu^*(\pret{\pair{[\kappa\mapsto \kappa']}{\basicsub{\tickA}\tickc}})_{\triple\Theta\vartheta f} 
\circ \pret{\pair{\nu}{\sigma}}_{\triple\Theta\vartheta f}(\gamma) \\
& = \pret{\pair{[\kappa\mapsto \kappa']}{\basicsub{\tickA}\tickc}}_{\triple\Theta\vartheta{f\nu}}
\circ \pret{\pair{\nu}{\sigma}}_{\triple\Theta\vartheta f}(\gamma) \\
 & = \pair{\{\kappa\}}{\inv{\pret{\pair{j}{\id_{\Gamma'}}}}_{\triple{\Theta, \#}{\subex\vartheta\#{n+1}}{\subex{(f\nu)}\kappa\#}}
 (\iota \cdot \pret{\pair{\nu}{\sigma}}_{\triple\Theta\vartheta f}(\gamma))} \\
 & = \pair{\{\kappa\}}{((\subex\nu\kappa\kappa)^*\inv{\pret{\pair{j}{\id_{\Gamma'}}}})_{\triple{\Theta, \#}{\subex\vartheta\#{n+1}}{\subex{f}\kappa\#}}
 (\iota \cdot \pret{\pair{\nu}{\sigma}}_{\triple\Theta\vartheta f}(\gamma))} \\
 & = \pair{\{\kappa\}}{((\subex\nu\kappa\kappa)^*\inv{\pret{\pair{j}{\id_{\Gamma'}}}})_{\triple{\Theta, \#}{\subex\vartheta\#{n+1}}{\subex{f}\kappa\#}}
 (\pret{\pair{\nu}{\sigma}}_{\triple{\Theta,\#}{\subex\vartheta\kappa\#}{\iota f}}(\iota \cdot \gamma))} \\
 & = \pair{\{\kappa\}}{((\subex\nu\kappa\kappa)^*\inv{\pret{\pair{j}{\id_{\Gamma'}}}})_{\triple{\Theta, \#}{\subex\vartheta\#{n+1}}{\subex{f}\kappa\#}}
 i^*((\pret{\pair{\nu}{\sigma}})_{\triple{\Theta,\#}{\subex\vartheta\kappa\#}{\subex{f}\kappa\#}}(\iota \cdot \gamma))} 
\end{align*}
and $\basicsub{\kappa}{\nu(\kappa')}^* \pret{\pair{\subex\nu\kappa\kappa}{\subex\sigma\tickA\tickA}} \circ \pret{\pair{[\kappa\mapsto \nu(\kappa')]}{\basicsub{\tickA}\tickc}}(\gamma)$ is 
\begin{align*} 
  \pair{\{\kappa\}}{\pret{\pair{\subex\nu\kappa\kappa}{\sigma}}_{\triple{\Theta,\#}{\subex\vartheta\kappa\#}{\subex{f}\kappa\#}}
  (\inv{\pret{\pair{i}{\id_\Gamma}}}_{\triple{\Theta,\#}{\subex\vartheta\kappa\#}{\subex{f}\kappa\#}}(\iota\cdot \gamma))}
\end{align*}
By the assumption $(\subex{\nu}\kappa\kappa)^*(\inv{\pret{\pair{j}{\id_{\Gamma'}}}})\circ i^*\pret{\pair{\nu}{\sigma}} = 
\pret{\pair{\subex\nu\kappa\kappa}{\sigma}}\circ \inv{\pret{\pair{i}{\id_\Gamma}}}$ the two are equal.
\end{proof}

\begin{lemma}\label{lem:subst:comp:clock:weak}
 Let $\pair\nu\sigma : \wfcxt\Gamma\to \wfcxt[\Delta']{\Gamma'}$ be a \clott\ substitution, 
 and let $i : \Delta\to \Delta, \kappa$ and $j : \Delta' \to \Delta', \kappa'$ be inclusions into strictly larger sets. Then
\begin{align*}
  (\subex\nu{\kappa'}\kappa)^*\pret{\pair{j}{\id_{\Gamma'}}} \circ \pret{\pair{\subex\nu{\kappa'}\kappa}{\sigma}} 
  =  i^*\pret{\pair{\nu}{\sigma}} \circ \pret{\pair{i}{\id_\Gamma}}
\end{align*}
\end{lemma}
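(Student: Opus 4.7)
The plan is to proceed by induction on the structure of the \clott\ substitution $\pair\nu\sigma$, following the four formation rules of Section~\ref{sec:substitutions}. Both compositions have the same source $\pret{\wfcxt[\Delta,\kappa]{\Gamma}}$ and the same target $i^*\nu^*\pret{\Gamma'} = (\subex\nu{\kappa'}\kappa)^*j^*\pret{\Gamma'}$ in $\gr{\Delta,\kappa}$, the latter equality holding because $i\circ \nu = (\subex\nu{\kappa'}\kappa)\circ j$ as set maps.

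The empty substitution case is immediate since both sides factor through a terminal object. For an extension $\pair\nu{\subex{\sigma'}{x}{t}}$, expand both sides using comprehension. The first components agree by the induction hypothesis applied to $\pair\nu{\sigma'}$; the second components agree by Lemma~\ref{lem:weakening}, which identifies the interpretation of $t$ in the extended clock context with $i^*\pret t$, combined with the fact that the clock pullback functors commute strictly with $\q$, $\p$ and $\cpair{-}{-}$.

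For the tick-variable extension $\pair\nu{\subex{\sigma'}\tickB\tickA}$, with $\Gamma$ of the form $\Gamma_0,\tickA:\nu(\kappa''),\Gamma_1$ and $\Gamma'$ ending in $\tickB:\kappa''$, both sides expand to the form $e^{\kappa'',-}\circ \tearlier^{\nu(\kappa'')}(-)\circ \p_{\Gamma_1}$. Since $\kappa$ is fresh we have $\kappa\neq \nu(\kappa'')$, so $i^*$ commutes with $\tearlier^{\nu(\kappa'')}$. Combining the induction hypothesis on $\pair\nu{\sigma'}$ inside $\tearlier^{\nu(\kappa'')}(-)$ with compositionality of $e$ in its first argument (Lemma~\ref{lem:e:nat}), which relates $e^{\kappa'',\subex\nu{\kappa'}\kappa}$ with $i^*e^{\kappa'',\nu}$, and naturality of the context projection, yields the required equality.

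The main obstacle is the $\tickc$ case $\pair{\subex{\nu'}{\kappa_0}{\nu'(\kappa_1)}}{\subex{\sigma'}\tickA\tickc}$, whose interpretation involves the semantic $\tickc$-substitution $\pret{\pair{\basicsub{\kappa_0}{\kappa_1}}{\basicsub\tickA\tickc}}$ and implicitly the inverse of Lemma~\ref{lem:(i,id):invertible}. The needed commutation with clock weakening is precisely the content of Lemma~\ref{lem:weak:sub:clock:extension}, whose hypothesis is the present statement instantiated to the strictly smaller substitution $\pair{\nu'}{\sigma'}$ and is therefore available as the induction hypothesis. Modulo inverting the isomorphisms $\pret{\pair i{\id_\Gamma}}$ and $\pret{\pair j{\id_{\Gamma'}}}$ via Lemma~\ref{lem:(i,id):invertible}, the conclusion of Lemma~\ref{lem:weak:sub:clock:extension} then coincides with the $\tickc$ case of our equation.
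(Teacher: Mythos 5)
Your overall plan — induction following the formation rules of the substitution, which is in substance the paper's own induction on $\Gamma'$ — is the right one, and your treatment of the empty, ordinary variable, and tick-variable cases matches the paper (using comprehension, Lemma~\ref{lem:weakening}, naturality of $e$, and Lemma~\ref{lem:e:nat}). The problem is in the $\tickc$ case, where the sketch as written does not go through.

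You claim that Lemma~\ref{lem:weak:sub:clock:extension} applied to $\pair{\nu'}{\sigma'}$, with its hypothesis supplied by the induction hypothesis, has a conclusion that ``coincides with the $\tickc$ case of our equation'' up to inverting the $\pret{\pair i{\id}}$ isomorphisms. This is not so. The conclusion of Lemma~\ref{lem:weak:sub:clock:extension} is an equality between two ways of expressing the interpretation of a $\tickc$-extended substitution, namely
\[
\nu'^*(\pret{\pair{[\kappa_0\mapsto \kappa_1]}{\basicsub{\tickA}\tickc}})\circ \pret{\pair{\nu'}{\sigma'}}
= \basicsub{\kappa_0}{\nu'(\kappa_1)}^*\pret{\pair{\subex{\nu'}{\kappa_0}{\kappa_0}}{\subex{\sigma'}\tickA\tickA}}\circ \pret{\pair{[\kappa_0\mapsto \nu'(\kappa_1)]}{\basicsub{\tickA}\tickc}},
\]
which is not the clock-weakening commutation that the present lemma asserts; no amount of conjugation by the isomorphisms of Lemma~\ref{lem:(i,id):invertible} turns the one into the other. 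What the paper actually does in this case is (i) unfold $\pret{\pair{\subex\nu{\kappa'}\kappa}{\sigma}}$ and $\pret{\pair\nu\sigma}$ via the last clause of the definition, (ii) use the induction hypothesis on the inner substitution $\pair{\nu'}{\sigma'}$ to rewrite the right-hand side, and (iii) reduce the remaining obligation to an application of Lemma~\ref{lem:weak:sub:clock:extension} \emph{to the clock-weakening projection} $\pair{j'}{\id_{\Gamma''}}$ (not to $\pair{\nu'}{\sigma'}$), for the fresh clock $\kappa_0$ and $\kappa_1$; the hypothesis of that application is the present statement instantiated at $\pair{j'}{\id_{\Gamma''}}$.

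This last point is where your chosen induction measure matters. The projection $\pair{j'}{\id_{\Gamma''}}$ is not a sub-derivation of $\pair\nu\sigma$, so ``strictly smaller substitution'' in the sense of a proper sub-term does not cover it, and the induction hypothesis as you state it is unavailable. The paper's induction is on the length of $\Gamma'$, for which $\pair{j'}{\id_{\Gamma''}}$ is covered since its target is the shorter $\Gamma''$; alternatively one can appeal to the projection case of the present lemma proved independently from Lemma~\ref{lem:proj:comp} in the course of establishing Lemma~\ref{lem:weakening}. Either way, you need to make this step explicit: the $\tickc$ case uses the induction hypothesis twice (once for $\pair{\nu'}{\sigma'}$, once for a clock-weakening projection), and Lemma~\ref{lem:weak:sub:clock:extension} is invoked at the projection, not at the sub-substitution.
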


\begin{proof}
 The proof is by induction on $\Gamma'$. In the case of $\Gamma' = \Gamma'',x:A$ we have that $\sigma = \subex\tau xt$ where $\pair\nu \tau:\wfcxt\Gamma\to \wfcxt[\Delta']{\Gamma''}$
\begin{align*}
 (\subex\nu{\kappa'}\kappa)^*\pret{\pair{j}{\id_{\Gamma'}}} \circ \pret{\pair{\subex\nu{\kappa'}\kappa}{\sigma}} 
 & = (\subex\nu{\kappa'}\kappa)^*\cpair{\pret{\pair{j}{p}}}{\q} \circ \cpair{\pret{\pair{\subex\nu{\kappa'}\kappa}{\tau}}}{\pret t} 
\end{align*}
where $\pair{j}{p} : \wfcxt[\Delta',\kappa']{\Gamma'} \to \wfcxt[\Delta']{\Gamma''}$ is the projection. By 
Lemma~\ref{lem:syn:proj:nu} we know that $\pret{\pair{j}{p}}$ equals $\pret{\pair{j}{\id_{\Gamma''}}}\circ \p_A$, so 
\begin{align*}
 & (\subex\nu{\kappa'}\kappa)^*\pret{\pair{j}{\id_{\Gamma'}}} \circ \pret{\pair{\subex\nu{\kappa'}\kappa}{\sigma}}  \\
 & = (\subex\nu{\kappa'}\kappa)^*\cpair{\pret{\pair{j}{\id_{\Gamma''}}}\circ \p_A}{\q} \circ 
 \cpair{\pret{\pair{\subex\nu{\kappa'}\kappa}{\tau}}}{\pret t} \\
 & = \cpair{(\subex\nu{\kappa'}\kappa)^*\pret{\pair{j}{\id_{\Gamma''}}}\circ 
 \pret{\pair{\subex\nu{\kappa'}\kappa}{\tau}}}{\pret t}  \\
 & = \cpair{ i^*\pret{\pair{\nu}{\tau}} \circ \pret{\pair{i}{\id_{\Gamma}}}}{\pret t} 
\end{align*}
which by Lemma~\ref{lem:weakening} equals
\begin{align*}
 \cpair{ i^*\pret{\pair{\nu}{\tau}} \circ \pret{\pair{i}{\id_{\Gamma}}}}{i^*\pret t[\pret{\pair i{\id_{\Gamma}}}]} 
 & = \cpair{ i^*\pret{\pair{\nu}{\tau}}}{i^*\pret t}\circ\pret{\pair i{\id_{\Gamma}}} \\
 & =  i^*\pret{\pair{\nu}{\sigma}} \circ \pret{\pair{i}{\id_\Gamma}}
\end{align*}

In the case of $\Gamma' = \Gamma'', \tickA:\kappa''$ we prove the statement by case analysis on the form of $\sigma$

\begin{itemize}
\item If $\sigma = \subex\tau{\tickB}{\tickA}$ we have that $\Gamma$ must be of the form $\Gamma_0, \tickB : \nu(\kappa''), \Gamma_1$ such that $\pair\nu\tau : 
\wfcxt{\Gamma_0} \to \wfcxt[\Delta']{\Gamma''}$. Then
\begin{align*}
 & (\subex\nu{\kappa'}\kappa)^*\pret{\pair{j}{\id_{\Gamma'}}} \circ \pret{\pair{\subex\nu{\kappa'}\kappa}{\sigma}} \\
 & = (\subex\nu{\kappa'}\kappa)^*(e^{j,\kappa''} \circ \tearlier^{\kappa''}(\pret{\pair{j}{\id_{\Gamma''}}}) \circ 
 e^{\subex\nu{\kappa'}\kappa, \nu(\kappa'')}\circ \tearlier^{\nu(\kappa'')}(\pret{\pair{\subex\nu{\kappa'}\kappa}{\tau}}) 
 \circ \p_{\Gamma_1} \\
  & = (\subex\nu{\kappa'}\kappa)^*e^{j,\kappa''} \circ  (\subex\nu{\kappa'}\kappa)^*\tearlier^{\kappa''}(\pret{\pair{j}{\id_{\Gamma''}}}) \circ 
 e^{\subex\nu{\kappa'}\kappa, \nu(\kappa'')}\circ \tearlier^{\nu(\kappa'')}(\pret{\pair{\subex\nu{\kappa'}\kappa}{\tau}}) 
 \circ \p_{\Gamma_1} \\
   & = (\subex\nu{\kappa'}\kappa)^*e^{j,\kappa''} \circ e^{\subex\nu{\kappa'}\kappa, \nu(\kappa'')}\circ
    \tearlier^{\nu(\kappa'')}(\subex\nu{\kappa'}\kappa)^*(\pret{\pair{j}{\id_{\Gamma''}}}) \circ 
  \tearlier^{\nu(\kappa'')}(\pret{\pair{\subex\nu{\kappa'}\kappa}{\tau}}) \circ \p_{\Gamma_1} \\
   & = e^{\subex\nu{\kappa'}\kappa j, \kappa''}\circ
    \tearlier^{\nu(\kappa'')}(\subex\nu{\kappa'}\kappa)^*(\pret{\pair{j}{\id_{\Gamma''}}}) \circ 
  \tearlier^{\nu(\kappa'')}(\pret{\pair{\subex\nu{\kappa'}\kappa}{\tau}}) \circ \p_{\Gamma_1}
\end{align*}
the last of these equalities is by Lemma~\ref{lem:e:nat}. Since $\subex\nu{\kappa'}\kappa j = i\nu$ we get 
\begin{align*}
 &(\subex\nu{\kappa'}\kappa)^*\pret{\pair{j}{\id_{\Gamma'}}} \circ \pret{\pair{\subex\nu{\kappa'}\kappa}{\sigma}} \\
   & = e^{i\nu, \kappa''}\circ
    \tearlier^{\nu(\kappa'')}(\subex\nu{\kappa'}\kappa)^*(\pret{\pair{j}{\id_{\Gamma''}}} \circ 
  \pret{\pair{\subex\nu{\kappa'}\kappa}{\tau}}) \circ \p_{\Gamma_1} \\
   & = e^{i\nu, \kappa''}\circ
    \tearlier^{\nu(\kappa'')}(i^*\pret{\pair{\nu}{\tau}} \circ \pret{\pair{i}{\id_{\Gamma_0}}}) 
    \circ \p_{\Gamma_1}
\end{align*}
By another application of Lemma~\ref{lem:e:nat}, the above equals
\begin{align*}
  & i^*(e^{\nu, \kappa''}) \circ e^{i, \nu(\kappa'')}\circ
    \tearlier^{\nu(\kappa'')}i^*\pret{\pair{\nu}{\tau}} \circ \tearlier^{\nu(\kappa'')}\pret{\pair{i}{\id_{\Gamma_0}}}
    \circ \p_{\Gamma_1} \\
 &=   i^*(e^{\nu, \kappa''}) \circ i^*\tearlier^{\nu(\kappa'')}\pret{\pair{\nu}{\tau}} \circ e^{i, \nu(\kappa'')}\circ
     \tearlier^{\nu(\kappa'')}\pret{\pair{i}{\id_{\Gamma_0}}}
    \circ \p_{\Gamma_1} \\
 &=   i^*(e^{\nu, \kappa''}) \circ i^*\tearlier^{\nu(\kappa'')}\pret{\pair{\nu}{\tau}} \circ 
 \pret{\pair{i}{\id_{\Gamma_0, \beta : \nu(\kappa'')}}} \circ \p_{\Gamma_1} 
\end{align*}
By Lemma~\ref{lem:syn:proj:nu} the latter is equal to 
\begin{align*}
 &=   i^*(e^{\nu, \kappa''}) \circ i^*\tearlier^{\nu(\kappa'')}\pret{\pair{\nu}{\tau}} \circ i^*\p_{\Gamma_1} \circ
 \pret{\pair{i}{\id_{\Gamma}}}  \\
& =   i^*\pret{\pair{\nu}{\sigma}} \circ  \pret{\pair{i}{\id_{\Gamma}}}  
\end{align*}
Finishing the proof of the case.

\item If $\sigma = \subex\tau{\tickB}{\tickc}$, the context $\Gamma'$ must be of the form $\Gamma'', \tickA : \kappa''$, 
the clock context $\Delta'$ of the form $\Delta'', \kappa''$, 
the clock substitution $\nu$ of the form $\subex\mu{\kappa''}{\mu(\kappa''')}$, for some $\kappa'''\in \Delta''$
such that $\pair\mu\tau : \wfcxt{\Gamma} \to \wfcxt[\Delta'']{\Gamma''}$. Then
\begin{align*}
& (\subex\nu{\kappa'}\kappa)^*\pret{\pair{j}{\id_{\Gamma'}}} \circ \pret{\pair{\subex\nu{\kappa'}\kappa}{\sigma}}  \\
 & = (\subex\nu{\kappa'}\kappa)^*\pret{\pair{j}{\id_{\Gamma'}}} \circ 
 (\subex\mu{\kappa'}{\kappa})^*\pret{\pair{\basicsub{\kappa''}{\kappa'''}}{\basicsub\tickA\tickc}} 
 \circ  \pret{\pair{\subex\mu{\kappa'}\kappa}{\tau}}\\
 & = (\subex\mu{\kappa'}\kappa)^*(\basicsub{\kappa''}{\kappa'''}^*\pret{\pair{j}{\id_{\Gamma'}}} \circ 
\pret{\pair{\basicsub{\kappa''}{\kappa'''}}{\basicsub\tickA\tickc}} )
 \circ  \pret{\pair{\subex\mu{\kappa'}\kappa}{\tau}}
\end{align*}
Let $k : \Delta'' \to \Delta'', \kappa''$ be the inclusion. Then $\subex k{\kappa''}{\kappa''} = j$, and 
\begin{align*}
& \basicsub{\kappa''}{\kappa'''}^*\pret{\pair{j}{\id_{\Gamma'}}} \circ 
\pret{\pair{\basicsub{\kappa''}{\kappa'''}}{\basicsub\tickA\tickc}} \\
& = 
 \basicsub{\kappa''}{k(\kappa''')}^*\pret{\pair{\subex k{\kappa''}{\kappa''}}{\subex{\id_{\Gamma''}}\tickA\tickA}} \circ 
\pret{\pair{\basicsub{\kappa''}{k(\kappa''')}}{\basicsub\tickA\tickc}} 
\end{align*}
By the induction hypothesis
\begin{align*}
  (\subex k{\kappa'}{\kappa'})^*\pret{\pair{j'}{\id_{\Gamma''}}} \circ \pret{\pair{\subex k{\kappa'}{\kappa'}}{\id_{\Gamma''}}} 
  =  j^*\pret{\pair{k}{\id_{\Gamma''}}} \circ \pret{\pair{j}{\id_{\Gamma''}}}
\end{align*}
Where $j'$ is the inclusion $\Delta'' \to \Delta'',\kappa'$. We have then that
\[(\subex k{\kappa'}{\kappa'})^*\inv{\pret{\pair{j'}{\id_{\Gamma''}}}}\circ j^*\pret{\pair{k}{\id_{\Gamma''}}} = 
 \pret{\pair{\subex k{\kappa'}{\kappa'}}{\id_{\Gamma''}}} \circ \inv{\pret{\pair{j}{\id_{\Gamma''}}}}\]
By Lemma \ref{lem:weak:sub:clock:extension} we have then that
\begin{align*}\basicsub{\kappa''}{k(\kappa''')}^*\pret{\pair{\subex k{\kappa''}{\kappa''}}{\subex{\id_{\Gamma''}}\tickA\tickA}} \circ &
\pret{\pair{\basicsub{\kappa''}{k(\kappa''')}}{\basicsub\tickA\tickc}}\\& = k^*\pret{\pair{\basicsub{\kappa''}{\kappa''')}}{\basicsub\tickA\tickc}} \circ
\pret{\pair{k}{\id_{\Gamma''}}}
\end{align*}

So 
\begin{align*}
& (\subex\nu{\kappa'}\kappa)^*\pret{\pair{j}{\id_{\Gamma'}}} \circ \pret{\pair{\subex\nu{\kappa'}\kappa}{\sigma}}  \\
& = (\subex\mu{\kappa'}\kappa)^*(k^*\pret{\pair{\basicsub{\kappa''}{\kappa''')}}{\basicsub\tickA\tickc}} \circ
\pret{\pair{k}{\id_{\Gamma''}}})
 \circ  \pret{\pair{\subex\mu{\kappa'}\kappa}{\tau}} \\
& = (\subex\mu{\kappa'}\kappa)^*k^*\pret{\pair{\basicsub{\kappa''}{\kappa''')}}{\basicsub\tickA\tickc}} \circ
(\subex\mu{\kappa'}\kappa)^*\pret{\pair{k}{\id_{\Gamma''}}})
 \circ  \pret{\pair{\subex\mu{\kappa'}\kappa}{\tau}}\\
& = i^*\mu^*\pret{\pair{\basicsub{\kappa''}{\kappa''')}}{\basicsub\tickA\tickc}} \circ
(\subex\mu{\kappa'}\kappa)^*\pret{\pair{k}{\id_{\Gamma''}}})
 \circ  \pret{\pair{\subex\mu{\kappa'}\kappa}{\tau}}
\end{align*}
By the induction hypothesis this equals
\begin{align*}
&  i^*\mu^*\pret{\pair{\basicsub{\kappa''}{\kappa''')}}{\basicsub\tickA\tickc}} \circ
i^*\pret{\pair{\mu}{\tau}}\circ \pret{\pair i{\id_{\Gamma}}} \\
& =  i^*(\mu^*\pret{\pair{\basicsub{\kappa''}{\kappa''')}}{\basicsub\tickA\tickc}} \circ
\pret{\pair{\mu}{\tau}})\circ \pret{\pair i{\id_{\Gamma}}} \\
& =  i^*(\pret{\pair{\nu}{\subex\tau{\tickA}{\tickc}}})
\circ \pret{\pair i{\id_{\Gamma}}} \\
& =  i^*(\pret{\pair{\nu}{\sigma}})
\circ \pret{\pair i{\id_{\Gamma}}} \qedhere
\end{align*}
\end{itemize}
\end{proof}

\subsection{Variable introduction}

We now turn to the general substitution lemma. This subsection proves the case of variable introduction. We must prove that if 
$\pair\nu\sigma : \wfcxt\Gamma \to \wfcxt[\Delta']{\Gamma'}$ and $\Gamma'$ is of the form
$\Gamma'_0, x : A, \Gamma'_1$, then 
\[
\nu^*(\pret{\hastype[\Delta']{\Gamma'}xA})[\pret{\pair\nu\sigma}] = \pret{\hastype{\Gamma}{\sigma(x)}{A\pair\nu\sigma}}
\]
But we have
\begin{align*}
\nu^*(\pret{\hastype[\Delta']{\Gamma'}xA})[\pret{\pair\nu\sigma}] &=(\nu^*(\q_{\pret A}[\p_{\pret{\Gamma_1}}])[\pret{\pair \nu \sigma}]\\
&=\q_{\nu^*\pret{A}} [\nu^* \p_{\pret{\Gamma_1}}][\pret{\pair \nu \sigma}]\\
&= \q_{\nu^*\pret{A}} [\nu^* \p_{\pret{\Gamma_1}} \circ \pret{\pair \nu \sigma}]\\
\end{align*}
We know that $\sigma$ has the form $\tau[x\mapsto \sigma(x)]\tau'$ where $\pair\nu \tau:\wfcxt\Gamma \to \wfcxt[\Delta']{\Gamma'_0}$ and $\sigma(x)$ is of type $A\pair\nu\tau = A\pair\nu\sigma$. We have then that $\q_{\nu^*\pret{A}} [\nu^* \p_{\pret{\Gamma_1}} \circ \pret{\pair \nu \sigma}] =\q_{\nu^*\pret{A}} [\nu^* \p_{\pret{\Gamma_1}} \circ \pret{\pair \nu {\tau[x\mapsto \sigma(x)]\tau'}}]$. By Lemma~\ref{lem:sub:comp:proj}
\begin{align*}
\q_{\nu^*\pret{A}} [\nu^* \p_{\pret{\Gamma_1}} \circ \pret{\pair \nu {\tau[x\mapsto \sigma(x)]\tau'}}] &= \q_{\nu^*\pret{A}} [\pret{\pair \nu {\tau[x\mapsto \sigma(x)]}}]\\
&= \q_{\nu^*\pret{A}} [\cpair {\pret{\pair \nu \tau}} {\pret{\sigma(x)}} ]\\
&=\pret{\sigma(x)}
\end{align*}

\subsection{Tick application and abstraction}


The case of tick abstraction is almost precisely the same as the case of $\latbind\tickA\kappa A$ presented in Section~\ref{sec:substitutions}:
\begin{align*}
 \nu^*\pret{\istypeshort[\Delta']{\Gamma'}{\tabs\tickA\kappa t}{}}[\pret\sigma] & = 
 \nu^*\left(\left(\tlater^{\kappa}\pret{\istypeshort[\Delta']{\Gamma', \tickA : \kappa}{t}}\right)[\eta^\kappa_{\pret{\Gamma'}}]\right)[\pret\sigma] \\
& = \left(\tlater^{\nu(\kappa)}\nu^*\pret{\istypeshort[\Delta']{\Gamma', \tickA : \kappa}{t}}\right)[\nu^*\eta^\kappa_{\pret{\Gamma'}}][\pret\sigma] \\
& = \left(\tlater^{\nu(\kappa)}\nu^*\pret{\istypeshort[\Delta']{\Gamma', \tickA : \kappa}{t}}\right)[\tlater^{\nu(\kappa)}e^\kappa \circ \eta^{\nu(\kappa)}_{\nu^*\pret{\Gamma'}} \circ \pret\sigma] \\
& = \left(\tlater^{\nu(\kappa)}\nu^*\pret{\istypeshort[\Delta']{\Gamma', \tickA : \kappa}{t}}\right)[\tlater^{\nu(\kappa)}(e^\kappa \circ \tearlier^{\nu(\kappa)}\pret\sigma)\circ \eta^{\nu(\kappa)}_{\nu^*\pret{\Gamma'}}] \\
& = \left(\tlater^{\nu(\kappa)}\left(\nu^*\pret{\istypeshort[\Delta']{\Gamma', \tickA : \kappa}{t}}\right)[e^\kappa \circ \tearlier^{\nu(\kappa)}\pret\sigma]\right) [\eta^{\nu(\kappa)}_{\nu^*\pret{\Gamma'}}] \\
& = \left(\tlater^{\nu(\kappa)}\pret{\istypeshort{\Gamma, \tickB : \nu(\kappa)}{t\pair{\nu}{\subex\sigma\tickA\tickB}}}\right) [\eta^{\nu(\kappa)}_{\nu^*\pret{\Gamma'}}] \\
& = \pret{\istypeshort{\Gamma}{\tabs \tickB{\nu(\kappa)}{(t\pair{\nu}{\subex\sigma\tickA\tickB})}}} \\
& = \pret{\istypeshort{\Gamma}{(\tabs \tickA{\kappa}t)\pair{\nu}\sigma}} 
\end{align*}

For the case of application of a term to a tick variable, we are given 
$\hastype[\Delta']{\Gamma_0'}t{\latbind \tickA\kappa A}$ and a substitution
$\pair\nu\sigma :  \wfcxt{\Gamma} \to \wfcxt[\Delta']{\Gamma_0',\tickA : \kappa, \Gamma_1'}$ and must show that
\[
\nu^*(\pret{\istypeshort[\Delta']{\Gamma_0',\tickA : \kappa, \Gamma_1'}{\tapp t}})[\pret{\pair\nu\sigma}]
= \pret{\istypeshort[\Delta]{\Gamma}{(\tapp t)\pair\nu\sigma}}
\]
Note first that since $\sigma$ must be of the form $\tau\tau'$ where 
$\pair\nu\tau :  \wfcxt{\Gamma} \to \wfcxt[\Delta']{\Gamma_0',\tickA : \kappa}$ and 
\begin{align*}
\nu^*(\pret{\istypeshort[\Delta']{\Gamma_0',\tickA : \kappa, \Gamma_1'}{\tapp t}})[\pret{\pair\nu\sigma}]
& = \nu^*(\pret{\istypeshort[\Delta']{\Gamma_0',\tickA : \kappa}{\tapp t}}[\p_{\Gamma'_1}])[\pret{\pair\nu\sigma}] \\
& = \nu^*(\pret{\istypeshort[\Delta']{\Gamma_0',\tickA : \kappa}{\tapp t}})[\nu^*(\p_{\Gamma'_1})\pret{\pair\nu\sigma}] \\
& = \nu^*(\pret{\istypeshort[\Delta']{\Gamma_0',\tickA : \kappa}{\tapp t}})[\pret{\pair\nu{\tau}}] 
\end{align*}
and $(\tapp t)\pair\nu\sigma = (\tapp t)\pair\nu\tau$ it suffices to consider the case where $\Gamma'_1$ is empty. We will
thus assume this from now on. 

By the type of the substitution we know that $\pair\nu\sigma$ must either be of the form $\pair\nu{\subex\tau\tickA\tickB}$ or 
of the form $\pair\nu{\subex\tau\tickA\tickc}$. We first consider the former case.

If $\tau = \subex\tau\tickA\tickB$, the context $\Gamma$ must be of the form $\Gamma_0, \tickB : \nu(\kappa), \Gamma_1$. Then
\begin{align*}
 \nu^*(\pret{\istypeshort[\Delta']{\Gamma_0',\tickA : \kappa}{\tapp t}})[\pret{\pair\nu\sigma}] 
 & = \nu^*(\transp{\pret{\istypeshort[\Delta']{\Gamma_0'}{t}}})[\pret{\pair\nu\sigma}] \\
 & = \nu^*(\transp{\pret{\istypeshort[\Delta']{\Gamma_0'}{t}}})[\pret{\pair\nu{\subex\tau\tickA\tickB}}] \\
 & = \nu^*(\transp{\pret{\istypeshort[\Delta']{\Gamma_0'}{t}}})[\pret{\pair\nu{\subex\tau\tickA\tickB}}] [\p_{\Gamma_1}]
\end{align*}
where in the last equality $\pair\nu{\subex\tau\tickA\tickB}$ is considered an equality from 
$\wfcxt{\Gamma_0, \beta : \nu(\kappa)}$. Since also 
\[
  \pret{\istypeshort[\Delta]{\Gamma}{(\tapp t)\pair\nu\sigma}} = 
  \pret{\istypeshort[\Delta]{\Gamma_0, \tickA : \kappa}{(\tapp t)\pair\nu{\subex\tau\tickA\tickB}}}[\p_{\Gamma_1}]
\]
it suffices to prove 
\[
\nu^*(\transp{\pret{\istypeshort[\Delta']{\Gamma_0'}{t}}})[\pret{\pair\nu{\subex\tau\tickA\tickB}}] 
= \pret{\istypeshort[\Delta]{\Gamma_0, \tickA : \kappa}{(\tapp t)\pair\nu{\subex\tau\tickA\tickB}}} 
\]
To prove this, it suffices to show that their correspondents under the 
bijective correspondence of Lemma~\ref{lem:bijectivecorresp} agree. 
The correspondent to 
$\pret{\istypeshort[\Delta]{\Gamma_0, \tickA : \kappa}{(\tapp t)\pair\nu{\subex\tau\tickA\tickB}}}$ is simply
$\pret{\istypeshort[\Delta]{\Gamma_0}{t\pair\nu{\tau}}}$. 
The correspondent to 
$\nu^*(\transp{\pret{\istypeshort[\Delta']{\Gamma_0'}{t}}})[\pret{\pair\nu{\subex\tau\tickA\tickB}}]$ is 
\begin{align*}
 \tlater^{\nu(\kappa)}(\nu^*(\transp{\pret{\istypeshort[\Delta']{\Gamma_0'}{t}}})[\pret{\pair\nu{\subex\tau\tickA\tickB}}])
 [\eta]
& = \,  \tlater^{\nu(\kappa)}(\nu^*(\transp{\pret{\istypeshort[\Delta']{\Gamma_0'}{t}}}))[\tlater^{\nu(\kappa)}
(\pret{\pair\nu{\subex\tau\tickA\tickB}}) \circ \eta]  \\
& = \, \tlater^{\nu(\kappa)}(\nu^*(\transp{\pret{\istypeshort[\Delta']{\Gamma_0'}{t}}}))[\tlater^{\nu(\kappa)}
(e^{\kappa, \nu} \circ \tearlier^\kappa(\pret{\pair\nu{\tau}})) \circ \eta] \\
& = \, \tlater^{\nu(\kappa)}(\nu^*(\transp{\pret{\istypeshort[\Delta']{\Gamma_0'}{t}}}))[\tlater^{\nu(\kappa)}
(e^{\kappa, \nu}) \circ \eta \circ \pret{\pair\nu{\tau}}] \\
& = \, \tlater^{\nu(\kappa)}(\nu^*(\transp{\pret{\istypeshort[\Delta']{\Gamma_0'}{t}}}))[\nu^*(\eta) \circ \pret{\pair\nu{\tau}}] \\
& =  \nu^*(\tlater^\kappa(\transp{\pret{\istypeshort[\Delta']{\Gamma_0'}{t}}}))[\nu^*(\eta) \circ \pret{\pair\nu{\tau}}] \\
& =  \nu^*(\tlater^\kappa(\transp{\pret{\istypeshort[\Delta']{\Gamma_0'}{t}}})[\eta])[\pret{\pair\nu{\tau}}] \\
& =  \nu^*(\pret{\istypeshort[\Delta']{\Gamma_0'}{t}})[\pret{\pair\nu{\tau}}]  \\
& =  \pret{\istypeshort[\Delta]{\Gamma_0}{t\pair\nu{\tau}}}
\end{align*}
proving the case. 

Finally, we must consider the case of $\pair\nu\sigma = \pair\nu{\subex\tau\tickA\tickc}$. In this case there is a $\kappa$ and
$\Delta''$ such
that $\Delta' = \Delta'', \kappa$ and $\nu = \subex\mu\kappa{\mu(\kappa')}$ for some $\mu : \Delta'' \to \Delta$ and 
$\kappa' \in \Delta''$. Now,
\begin{align*}
 & \nu^*(\pret{\istypeshort[\Delta']{\Gamma_0',\tickA : \kappa}{\tapp t}})[\pret{\pair\nu\sigma}] \\
 & =  \nu^*\pret{\istypeshort[\Delta']{\Gamma_0',\tickA : \kappa}{\tapp t}}
 [\mu^*(\pret{\pair{\basicsub{\kappa}{\kappa'}}{\basicsub\tickA\tickc}})\circ\pret{\pair\mu\tau}]  
 \end{align*}
 By Lemma~\ref{lem:sub:clock:extension} this equals
\begin{align*}
 & \nu^*(\pret{\istypeshort[\Delta']{\Gamma_0',\tickA : \kappa}{\tapp t}})
 [\basicsub{\kappa}{\mu(\kappa')}^*\pret{\pair{\subex\mu{\kappa}\kappa}{\subex\tau\tickA\tickA}}
  \circ \pret{\pair{\basicsub{\kappa}{\mu(\kappa')}}{\basicsub\tickA\tickc}} ]  \\
   & = \basicsub{\kappa}{\mu(\kappa')}^*((\subex\mu{\kappa}{\kappa})^*
   \pret{\istypeshort[\Delta']{\Gamma_0',\tickA : \kappa}{\tapp t}})
 [\basicsub{\kappa}{\mu(\kappa')}^*\pret{\pair{\subex\mu{\kappa}\kappa}{\subex\tau\tickA\tickA}}
  \circ \pret{\pair{\basicsub{\kappa}{\mu(\kappa')}}{\basicsub\tickA\tickc}} ]  \\
   & = \basicsub{\kappa}{\mu(\kappa')}^*((\subex\mu{\kappa}{\kappa})^*
   \pret{\istypeshort[\Delta']{\Gamma_0',\tickA : \kappa}{\tapp t}}
   [\pret{\pair{\subex\mu{\kappa}\kappa}{\subex\tau\tickA\tickA}}])
   [\pret{\pair{\basicsub{\kappa}{\mu(\kappa')}}{\basicsub\tickA\tickc}} ] 
\end{align*}
By the case we just proved above, this equals
\begin{align*}
  & \basicsub{\kappa}{\mu(\kappa')}^*(
   \pret{\istypeshort[\Delta,\kappa]{\Gamma,\tickA : \kappa}{\tapp t\pair{\subex\mu{\kappa}\kappa}{\subex\tau\tickA\tickA}}})
   [\pret{\pair{\basicsub{\kappa}{\mu(\kappa')}}{\basicsub\tickA\tickc}} ] \\
  & = \basicsub{\kappa}{\mu(\kappa')}^*(
   \pret{\istypeshort[\Delta,\kappa]{\Gamma,\tickA : \kappa}{\tapp {(t\pair{\subex\mu{\kappa}\kappa}{\tau})}}})
   [\pret{\pair{\basicsub{\kappa}{\mu(\kappa')}}{\basicsub\tickA\tickc}} ] 
\end{align*}
 which by definition equals
\begin{align*}
   \pret{\istypeshort[\Delta]{\Gamma}{\tappc {(t\pair{\subex\mu{\kappa}{\mu(\kappa')}}{\tau})}}}
   = \pret{\istypeshort[\Delta]{\Gamma}{\tappc {(t\pair{\nu}{\tau})}}}
\end{align*}
as required. 

\section{Interpretation of $\forall\kappa$}
\label{app:forall}

\begin{proofof}{Lemma~\ref{lem:(i,id):invertible}}
  The proof is by induction on $\Gamma$. In the case where $\Gamma=\Gamma',x:A$ we have 
  \[\pret{\pair{i}{\id_{\Gamma', x : A}}} = \cpair{\pret{\pair{i}{\id_{\Gamma'}}} \circ \p_A}{\q_A}\]
  Given a pair $\pair{\gamma}{a} \in \pret{\wfcxt[\Delta,\kappa]{\Gamma'}}_{\triple {\Theta,\#}{\vartheta[\#\mapsto n]}{f[\kappa\mapsto \#]}}$. We have that
  \[\cpair{\pret{\pair{i}{\id_{\Gamma'}}} \circ \p_A}{\q_A} \pair{\gamma}{a} = \pair{\pret{\pair{i}{\id_{\Gamma'}}} (\gamma)} {a}\in \pret{\wfcxt[\Delta]{\Gamma',x:A}}_{\triple{\Theta,\#}{\vartheta[\#\mapsto n]}{\iota f}}\]
  By the induction hypothesis $\pret{\pair{i}{\id_{\Gamma'}}}$ has a two sided inverse $\inv{\pret{\pair{i}{\id_{\Gamma'}}}}$. We get have then that $\cpair{\pret{\pair{i}{\id_{\Gamma'}}} \circ \p_A}{\q_A}$ has the two sided inverse mapping $\pair{\gamma'}{a'} \in \pret{\Gamma',x:A}{\triple {\Theta,\#}{\vartheta[\#\mapsto n]}{\iota f}}$ to  $\pair{\inv{\pret{\pair{i}{\id_{\Gamma'}}}} (\gamma')}{a'}$.
  

In the case where $\Gamma = \Gamma', \tickA:\kappa'$ (where $\kappa' \in \Delta$), the map $\pret{\pair{i}{\id_{\Gamma, \tickA : \kappa'}}}$ is defined as the
 composition 
 \[
 \tearlier^{\kappa'}\pret{\wfcxt[\Delta, \kappa]{\Gamma}} \xrightarrow{\tearlier^{\kappa'}\pret{\pair{i}{\id_{\Gamma}}}}
 \tearlier^{\kappa'}i^*\pret{\wfcxt[\Delta]{\Gamma}} \xrightarrow{e^{\kappa', i}_{\pret\Gamma}}
 i^*\tearlier^{\kappa'}\pret{\wfcxt[\Delta]{\Gamma}}
 \]
 The component at $\triple{\Theta, \#}{\vartheta\basicsub\# n}{f\basicsub{\kappa}{\#}}$ of the first map has domain 
\begin{align*}
  & \displaystyle\coprod_{\kappa'\in X \subset (f\basicsub{\kappa}{\#})^{-1}(f\basicsub{\kappa}{\#}(\kappa'))} \pret{\wfcxt[\Delta, \kappa]{\Gamma}}
 \triple{\Theta,\#,\#'}{\vartheta,\#\mapsto n, \#'\mapsto m+1}{f\basicsub{\kappa}{\#}[X\mapsto \#']} \\
  = & \displaystyle\coprod_{\kappa'\in X \subset f^{-1}(f(\kappa'))} \pret{\wfcxt[\Delta, \kappa]{\Gamma}}
 \triple{\Theta,\#',\#}{\vartheta,\#'\mapsto m+1,\#\mapsto n}{f[X\mapsto \#']\basicsub{\kappa}{\#}}
\end{align*}
 where $m = \vartheta(f(\kappa'))$. Note that this calculation uses crucially that $\kappa$ is never in 
 $(f\basicsub{\kappa}{\#})^{-1}(f\basicsub{\kappa}{\#}(\kappa'))$. The codomain can be be computed likewise as
 \[
 \displaystyle\coprod_{\kappa'\in X \subset f^{-1}(f(\kappa'))} i^*\pret{\wfcxt[\Delta]{\Gamma}}
 \triple{\Theta,\#',\#}{\vartheta,\#'\mapsto m+1,\#\mapsto n}{f[X\mapsto \#']\basicsub{\kappa}{\#}}
 \]
 The component of $\tearlier^{\kappa'}\pret{\pair{i}{\id_{\Gamma}}}$ at 
 ${\triple{\Theta, \#}{\vartheta\basicsub\# n}{f\basicsub{\kappa}{\#}}}$ is 
 \[
 \tearlier^{\kappa'}\pret{\pair{i}{\id_{\Gamma}}}_{\triple{\Theta, \#}{\vartheta\basicsub\# n}{f\basicsub{\kappa}{\#}}}
 = \displaystyle\coprod_{\kappa'\in X \subset f^{-1}(f(\kappa'))}
\pret{\pair{i}{\id_{\Gamma}}}_{\triple{\Theta,\#',\#}{\vartheta,\#'\mapsto m+1,\#\mapsto n}{f[X\mapsto \#']\basicsub{\kappa}{\#}}}
 \]
 By the induction hypothesis, each of the maps of the coproduct is an isomorphism, and thus also  
 \[ \tearlier^{\kappa'}\pret{\pair{i}{\id_{\Gamma}}}_{\triple{\Theta, \#}{\vartheta\basicsub\# n}{f\basicsub{\kappa}{\#}}}\]
 is an isomorphism. 
 
 The component of $e^{\kappa', i}_{\pret\Gamma}$ at $\triple{\Theta, \#}{\vartheta\basicsub\# n}{f\basicsub{\kappa}{\#}}$ 
 has domain 
  \[
 \displaystyle\coprod_{\kappa'\in X \subset f^{-1}(f(\kappa'))} \pret{\wfcxt[\Delta]{\Gamma}}
 \triple{\Theta,\#',\#}{\vartheta,\#'\mapsto m+1,\#\mapsto n}{f[X\mapsto \#']}
 \]
 and the codomain is the same. It maps a pair $(X,\gamma)$ to $\pair{X\cap \Delta}\gamma$ and since $X\cap\Delta = X$,
 this is the identity and thus an isomorphism.

 The last statement of the lemma follows Lemma~\ref{lem:subst:comp:i:kappa:to:kappa'}.
\end{proofof}

\begin{proofof}{Lemma~\ref{lem:beta:eta:clock:quant}}
 Suppose $\hastype[\Delta, \kappa]{\Gamma}tA$ and $\wfcxt\Gamma$, for the $\beta$ rule, we must show that
 if $\kappa'\in \Delta$ then 
 \[
 \pret{\hastype{\Gamma}{(\Lambda \kappa. t) [\kappa']}{A\subst{\kappa}{\kappa'}}} 
 = \pret{\hastype{\Gamma}{t\subst{\kappa}{\kappa'}}{A\subst{\kappa}{\kappa'}}}
 \]
 Let $\triple\Theta\vartheta f$ be an object of $\catT$ and $\gamma \in \pret{\wfcxt\Gamma}_{\triple\Theta\vartheta f}$.
 Let $n = \vartheta(f(\kappa'))$. Then 
\begin{align*}
 \pret{\hastype{\Gamma}{(\Lambda \kappa. t) [\kappa']}{A\subst{\kappa}{\kappa'}}}_{\triple\Theta\vartheta f}(\gamma)
 & = \basicsub{\#}{f(\kappa')}\cdot(\pret{\hastype{\Gamma}{\Lambda \kappa. t }{\forall\kappa . A}}_{\triple\Theta\vartheta f}(\gamma))_n \\
 & = \basicsub{\#}{f(\kappa')}\cdot(\pret{\hastype[\Delta, \kappa]{\Gamma}{t}{A}}_{\triple{\Theta, \#}{\subex\vartheta{\#}{n}} {\subex f\kappa\#}}(\inv{\pret{\pair i{\id}}}(\iota\cdot \gamma))) \\ 
 & = \pret{\hastype[\Delta, \kappa]{\Gamma}{t}{A}}_{\triple{\Theta}{\vartheta} {\subex f\kappa{f(\kappa')}}}(\basicsub{\#}{f(\kappa')}\cdot{\pret{\pair {\basicsub{\kappa}{\kappa'}}{\id}}}(\iota\cdot \gamma)) \\ 
 & = \pret{\hastype[\Delta, \kappa]{\Gamma}{t}{A}}_{\triple{\Theta}{\vartheta} {\subex f\kappa{f(\kappa')}}}({\pret{\pair {\basicsub{\kappa}{\kappa'}}{\id}}}(\basicsub{\#}{f(\kappa')}\cdot\iota\cdot \gamma)) \\ 
 & = \pret{\hastype[\Delta, \kappa]{\Gamma}{t}{A}}_{\triple{\Theta}{\vartheta} {\subex f\kappa{f(\kappa')}}}({\pret{\pair {\basicsub{\kappa}{\kappa'}}{\id}}}(\gamma)) \\ 
 & = \basicsub{\kappa}{\kappa'}^*\pret{\hastype[\Delta, \kappa]{\Gamma}{t}{A}}_{\triple{\Theta}{\vartheta} {f}}({\pret{\pair {\basicsub{\kappa}{\kappa'}}{\id}}}(\gamma)) \\ 
 & = \pret{\hastype{\Gamma}{t\subst{\kappa}{\kappa'}}{A\subst{\kappa}{\kappa'}}}_{\triple{\Theta}{\vartheta} {f}}(\gamma)
\end{align*}

 For the $\eta$ rule, suppose $\hastype[\Delta]{\Gamma}t{\forall \kappa . A}$, we must show that for every
 object $\triple\Theta\vartheta f$ of $\catT$, element $\gamma \in \pret{\wfcxt\Gamma}_{\triple\Theta\vartheta f}$
 and natural number $n$
\[
  (\pret{\hastype[\Delta]{\Gamma}{\Lambda \kappa . (t [\kappa])}{\forall \kappa . A}}_{\triple\Theta\vartheta f}(\gamma))_n
  = (\pret{\hastype[\Delta]{\Gamma}t{\forall \kappa . A}}_{\triple\Theta\vartheta f}(\gamma))_n
\]
  We compute  
\begin{align*}
 (\pret{\hastype[\Delta]{\Gamma}{\Lambda \kappa . (t [\kappa])}{\forall \kappa . A}}_{\triple\Theta\vartheta f}(\gamma))_n
 & = \pret{\hastype[\Delta,\kappa]{\Gamma}{t [\kappa]}{A}}_{\triple{\Theta,\#}{\subex{\vartheta}{\#}n}{\subex{f}{\kappa}{\#}}}(\inv{\pret{\pair i{\id}}}(\iota \cdot \gamma)) \\
  & = \basicsub{\#'}{\#} \cdot (\pret{\hastype[\Delta,\kappa]{\Gamma}{t}{\forall\kappa . A}}_{\triple{\Theta,\#}{\subex{\vartheta}{\#}n}{\subex{f}{\kappa}{\#}}}(\inv{\pret{\pair i{\id}}}(\iota \cdot \gamma)))_n
\end{align*}
where $\#'$ is the chosen clock name fresh for $\Theta, \#$. 

Now, by the substitution lemma 
\begin{align*}
& \pret{\hastype[\Delta,\kappa]{\Gamma}{t}{\forall\kappa . A}}_{\triple{\Theta,\#}{\subex{\vartheta}{\#}n}{\subex{f}{\kappa}{\#}}}(\inv{\pret{\pair i{\id}}}(\iota \cdot \gamma)) \\
& = i^*(\pret{\hastype[\Delta]{\Gamma}{t}{\forall\kappa . A}})_{\triple{\Theta,\#}{\subex{\vartheta}{\#}n}{\subex{f}{\kappa}{\#}}}(\pret{\pair i{\id}} (\inv{\pret{\pair i{\id}}}(\iota \cdot \gamma))) \\
& = \pret{\hastype[\Delta]{\Gamma}{t}{\forall\kappa . A}}_{\triple{\Theta,\#}{\subex{\vartheta}{\#}n}{f}}(\iota \cdot \gamma) \\
& = \iota \cdot (\pret{\hastype[\Delta]{\Gamma}{t}{\forall\kappa . A}}_{\triple{\Theta}{\vartheta}{f}}(\gamma))
\end{align*}
So 
\begin{align}
 (\pret{\hastype[\Delta]{\Gamma}{\Lambda \kappa . (t [\kappa])}{\forall \kappa . A}}_{\triple\Theta\vartheta f}(\gamma))_n
  & = \basicsub{\#'}{\#} \cdot (\iota \cdot (\pret{\hastype[\Delta]{\Gamma}{t}{\forall\kappa . A}}_{\triple{\Theta}{\vartheta}{f}}(\gamma)))_n \label{eq:eta:clocks:subresult}
\end{align}
Here 
\[
\basicsub{\#'}{\#}  : \triple{\Theta,\#, \#'}{\subex{\vartheta}{\#}n\basicsub{\#'}{n}}{\subex{f}{\kappa}{\#'}} \to 
\triple{\Theta,\#}{\subex{\vartheta}{\#}n}{\subex{f}{\kappa}{\#}}
\]
and 
\[
  \iota : \triple{\Theta}{\vartheta}{f} \to 
  \triple{\Theta,\#}{\subex{\vartheta}{\#}n}{\subex{f}{\kappa}{\#}}
\]
is the inclusion. By definition of the presheaf action on the interpretation of $\forall\kappa . A$ we have
 \[
 (\iota \cdot (\pret{\hastype[\Delta]{\Gamma}{t}{\forall\kappa . A}}_{\triple{\Theta}{\vartheta}{f}}(\gamma)))_n
 = 
 \basicsub{\#}{\#'}((\pret{\hastype[\Delta]{\Gamma}{t}{\forall\kappa . A}}_{\triple{\Theta}{\vartheta}{f}}(\gamma))_n)
 \]
 where 
\[
\basicsub{\#}{\#'}  : \triple{\Theta,\#}{\subex{\vartheta}{\#}n}{\subex{f}{\kappa}{\#}}
\to \triple{\Theta,\#, \#'}{\subex{\vartheta}{\#}n\basicsub{\#'}{n}}{\subex{f}{\kappa}{\#'}} 
\]
Since $\basicsub{\#'}{\#} \circ \basicsub{\#}{\#'}$ is the identity, we conclude from (\ref{eq:eta:clocks:subresult}) that
\begin{align*}
 (\pret{\hastype[\Delta]{\Gamma}{\Lambda \kappa . (t [\kappa])}{\forall \kappa . A}}_{\triple\Theta\vartheta f}(\gamma))_n
  & = (\pret{\hastype[\Delta]{\Gamma}{t}{\forall\kappa . A}}_{\triple{\Theta}{\vartheta}{f}}(\gamma)))_n
\end{align*}
concluding the proof. 
\end{proofof}

\subsection{Substitution lemma cases}

We just show the case of the substitution lemma for the type constructor.

Suppose $\pair\nu\sigma : \wfcxt{\Gamma} \to \wfcxt[\Delta']{\Gamma'}$, let $\triple\Theta\vartheta f$ be an object of
$\catT$ and let $\gamma \in \pret{\wfcxt{\Gamma}}_{\triple\Theta\vartheta f}$. We must show that
\[
\nu^*{\pret{\istypeshort[\Delta']{\Gamma'}{\forall \kappa' . A}}}_{\triple\Theta\vartheta f}
(\pret{\pair\nu\sigma}_{\triple\Theta\vartheta f}(\gamma)) 
= \pret{\istypeshort[\Delta]{\Gamma}{\forall \kappa . (A \pair\nu{\subex\sigma{\kappa'}\kappa})}}_{\triple\Theta\vartheta f}(\gamma)
\]
for $\kappa$ fresh for $\Delta$. The left hand side of this equation is a set of families $(\omega_n)_n$ where each $\omega_n$
lives in
\begin{align} \label{eq:sub:case:forall:lhs}
 \pret{\istypeshort[\Delta', \kappa']{\Gamma'}{A}}_{\triple{\Theta,\#}{\subex\vartheta\#n}{\subex{\iota f\nu}{\kappa'}\#}}
 (\inv{\pret{\pair{j}{\id_{\Gamma'}}}}_{\triple{\Theta,\#}{\subex\vartheta\#n}{\subex{\iota f\nu}{\kappa'}{\#}}}
 (\iota \cdot\pret{\pair\nu\sigma}_{\triple\Theta\vartheta f}(\gamma)))
\end{align}
where $j : \Delta' \to \Delta', \kappa'$ is the inclusion. The right hand side is likewise a set of families where each $\omega_n$ 
lives in
\begin{align}\label{eq:sub:case:forall:rhs}
 \pret{\istypeshort[\Delta, \kappa]{\Gamma}{(A \pair{\subex\nu{\kappa'}\kappa}{\sigma})}}_{\triple{\Theta,\#}{\subex\vartheta\#n}{\subex{\iota f}{\kappa}\#}}
 (\inv{\pret{\pair{i}{\id_{\Gamma}}}}_{\triple{\Theta,\#}{\subex\vartheta\#n}{\iota f}}
 (\iota \cdot\gamma))
\end{align}
where $i : \Delta \to \Delta, \kappa$ is the inclusion. Since the naturality requirements for these families are the same, 
we will just show that (\ref{eq:sub:case:forall:lhs}) and (\ref{eq:sub:case:forall:rhs}) are equal. 

\begin{align*}
 & \inv{\pret{\pair{j}{\id_{\Gamma'}}}}_{\triple{\Theta,\#}{\subex\vartheta\#n}{\subex{\iota f\nu}{\kappa'}{\#}}}
 (\iota \cdot\pret{\pair\nu\sigma}_{\triple\Theta\vartheta f}(\gamma)) \\
 & = 
 \inv{\pret{\pair{j}{\id_{\Gamma'}}}}_{\triple{\Theta,\#}{\subex\vartheta\#n}{\subex{\iota f\nu}{\kappa}{\#}}}
 (\pret{\pair\nu\sigma}_{\triple{\Theta,\#}{\subex\vartheta\#n}{\iota f}}(\iota \cdot \gamma)) \\
  & = 
 \inv{\pret{\pair{j}{\id_{\Gamma'}}}}_{\triple{\Theta,\#}{\subex\vartheta\#n}{\subex{\iota f\nu}{\kappa'}{\#}}}
 (i^*\pret{\pair\nu\sigma}_{\triple{\Theta,\#}{\subex\vartheta\#n}{\subex{\iota f}{\kappa}\#}}(\iota \cdot \gamma)) \\
  & = 
 (\subex\nu{\kappa'}{\kappa})^*\inv{\pret{\pair{j}{\id_{\Gamma'}}}}_{\triple{\Theta,\#}{\subex\vartheta\#n}{\subex{\iota f}{\kappa}{\#}}}
 (i^*\pret{\pair\nu\sigma}_{\triple{\Theta,\#}{\subex\vartheta\#n}{\subex{\iota f}{\kappa}\#}}(\iota \cdot \gamma))
\end{align*}
Which by Lemma~\ref{lem:subst:comp:clock:weak} equals
\begin{align} \label{eq:subst:forall:subresult1}
 \pret{\pair{\subex\nu{\kappa'}\kappa}\sigma}_{\triple{\Theta,\#}{\subex\vartheta\#n}{\subex{\iota f}{\kappa'}\#}}
 (\inv{\pret{\pair{j}{\id_{\Gamma}}}}_{\triple{\Theta,\#}{\subex\vartheta\#n}{\iota f}}(\iota \cdot \gamma)) 
\end{align}
So (\ref{eq:sub:case:forall:lhs}) equals $(\subex\nu{\kappa'}\kappa)^*\pret{\istypeshort[\Delta', \kappa']{\Gamma'}{A}}_{\triple{\Theta,\#}{\subex\vartheta\#n}{\subex{\iota f}{\kappa}\#}}$ applied to (\ref{eq:subst:forall:subresult1})
which by the induction hypothesis equals
\begin{align*}
\pret{\istypeshort[\Delta, \kappa]{\Gamma}{(A \pair\nu{\subex\sigma{\kappa'}\kappa})}}_{\triple{\Theta,\#}{\subex\vartheta\#n}{\subex{\iota f}{\kappa'}\#}}
(\inv{\pret{\pair{j}{\id_{\Gamma}}}}_{\triple{\Theta,\#}{\subex\vartheta\#n}{\iota f}}(\iota \cdot \gamma)) 
\end{align*}
which equals  (\ref{eq:sub:case:forall:rhs}), completing the proof.

\section{Interpretation of $\tickc$}

For the proof of Lemma~\ref{lem:dfixapptick} we need the following. 

\begin{lemma}
\label{lem:dfixproperty}
If $\hastype{\Gamma}{t}{\later^\kappa A \to A}$ and $\tickA$ not in $\Gamma$ then 
\[\pret{\hastype{\Gamma}{\lambda (\alpha\of \kappa) (t\, (\dfix \,t))}{\later^\kappa A}} 
= \pret{\hastype{\Gamma}{\dfix\,t}{\later^\kappa A}}\]
\end{lemma}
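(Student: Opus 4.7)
The plan is to unfold both sides pointwise and observe that they match the definitional clause for $\pret{\dfix t}$. First I would rewrite the left-hand side using the defining equation for tick abstraction: $\pret{\lambda(\alpha\of\kappa)(t\,(\dfix t))} = \transp{\pret{t\,(\dfix t)}_{\Gamma,\alpha\of\kappa}}$. Since $\alpha$ is fresh, it does not occur free in the body $t\,(\dfix t)$, so by Lemma~\ref{lem:weakening} applied to the projection $\pair{i}{\proj{\Gamma,\alpha\of\kappa}{\Gamma}}$ and Lemma~\ref{lem:syn:proj:nu}, we have $\pret{t\,(\dfix t)}_{\Gamma,\alpha\of\kappa} = \pret{t\,(\dfix t)}_{\Gamma}[\p_{\tearlier^\kappa}]$.

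Next I would expand the adjoint transpose using $\transp{a} = (\tlater^\kappa a)[\eta]$ from Lemma~\ref{lem:bijectivecorresp}, together with the identity $\transp{\p_{\tearlier^\kappa}} = \tlater^\kappa(\p_{\tearlier^\kappa})\circ\eta$ (the adjoint transpose of the counit-like projection defined after Lemma~\ref{lem:earlier}). Chaining these gives
\[
\pret{\lambda(\alpha\of\kappa)(t\,(\dfix t))}
= \bigl(\tlater^\kappa(\pret{t\,(\dfix t)}_\Gamma[\p_{\tearlier^\kappa}])\bigr)[\eta]
= (\tlater^\kappa\pret{t\,(\dfix t)}_\Gamma)\bigl[\transp{\p_{\tearlier^\kappa}}\bigr]
\]
The same calculation already appears in the proof of Lemma~\ref{lem:tirr}, which I would cite.

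Finally, I would evaluate both sides at an arbitrary $\triple\Theta\vartheta f$ and $\gamma\in\pret\Gamma\triple\Theta\vartheta f$. The component of $\transp{\p_{\tearlier^\kappa}}$ at $\triple\Theta\vartheta f$ sends $\gamma$ to its restriction $\gamma|_{\triple\Theta{\vartheta[f(\kappa)-]}f}$, as described just after Lemma~\ref{lem:earlier}. A case split on $\vartheta(f(\kappa))$ then finishes the argument: if $\vartheta(f(\kappa)) = 0$, then $(\tlater^\kappa F)\triple\Theta\vartheta f = \{\ast\}$ and both sides are forced to be $\ast$; if $\vartheta(f(\kappa)) > 0$, then $(\tlater^\kappa \pret{t\,(\dfix t)}_\Gamma)\triple\Theta\vartheta f = \pret{t\,(\dfix t)}_{\triple\Theta{\vartheta[f(\kappa)-]}f}$, and the resulting value
\[
\pret{t\,(\dfix t)}_{\triple\Theta{\vartheta[f(\kappa)-]}f}\bigl(\gamma|_{\triple\Theta{\vartheta[f(\kappa)-]}f}\bigr)
\]
is precisely the second clause in the definition of $\pret{\dfix t}\triple\Theta\vartheta f(\gamma)$.

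No step looks substantively difficult: the only obstacle worth noting is the bookkeeping in the first step, where one must be careful that weakening by a tick $\alpha\of\kappa$ is implemented by $\p_{\tearlier^\kappa}$ (via Lemma~\ref{lem:syn:proj:nu}) so that the subsequent manipulation pushing $\tlater^\kappa$ through the substitution goes through and $\tlater^\kappa(\p_{\tearlier^\kappa})\circ\eta$ can be collapsed to $\transp{\p_{\tearlier^\kappa}}$. Once that is in place, the argument is a direct unfolding of the defining equation for $\dfix$.
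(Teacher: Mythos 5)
Your proposal is correct and follows essentially the same route as the paper's proof: both identify the tick-weakening projection with $\p_{\tearlier^\kappa}$, push $\tlater^\kappa$ through the substitution so the transpose collapses to $(\tlater^\kappa \pret{t\,(\dfix\,t)})[\transp{\p_{\tearlier^\kappa}}]$, and then close by a pointwise case split on $\vartheta(f(\kappa))$ matching the defining clauses of $\pret{\dfix\,t}$. The only difference is that you name Lemmas~\ref{lem:weakening} and~\ref{lem:syn:proj:nu} where the paper just invokes the substitution lemma and remarks that $\pret{\pair{\id_\Delta}{p}}=\p_{\tearlier^\kappa}$; the substance is identical.
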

\begin{proof}
By the substitution lemma
\[
\pret{\hastype{\Gamma, \tickA : \kappa}{t\, (\dfix \,t)}{\later^\kappa A}} 
= \pret{\hastype{\Gamma}{t\, (\dfix \,t)}{\later^\kappa A}} \pret{\pair{\id_{\Delta}}{p}}
\]
where $p$ is the projection. It is easy to see that $\pret{\pair{\id_{\Delta}}{p}} = \p_{\tearlier^\kappa}$, so that
\begin{align*}
 \pret{\hastype{\Gamma}{\lambda (\alpha\of \kappa) (t\, (\dfix \,t))}{\later^\kappa A}} 
 & = \transp{\pret{t\,\dfix\,t} [\p_{\tearlier^\kappa}}] \\
 & = \, \tlater^\kappa(\pret{t\,\dfix\,t} [\p_{\tearlier^\kappa}])[\eta] \\
 & = (\tlater^\kappa \pret{t\,\dfix\,t})[\transp{\p_{\tearlier^\kappa}}]
\end{align*}
and
\[
((\tlater^\kappa \pret{t\,\dfix\,t})[\transp{\p_{\tearlier^\kappa}}])_{\triple \Theta \vartheta f} (\gamma) = 
  \begin{cases} 
    (\tlater^\kappa \pret{t\,\dfix\,t})_{\triple \Theta \vartheta f} (\ast)  = \ast & \vartheta(f(\kappa)) = 0\\
    \pret{t\,\dfix\,t}_{\triple \Theta {\vartheta[f(\kappa)-]} f} (\gamma|_{\triple \Theta {\vartheta[f(\kappa)-]} f}) & \text{Otherwise}
  \end{cases}
\] 
This is by definition equal to $\pret{\hastype{\Gamma}{\dfix\,t}{\later^\kappa A}}_{\triple \Theta \vartheta f}(\gamma)$
\end{proof}

\begin{proofof}{Lemma~\ref{lem:dfixapptick}}
For the equality to be well typed we have $t = u[\kappa'/\kappa]$
\begin{align*}
\pret{\hastype[\Delta]{\Gamma}{\tappc{(\dfix^{\kappa'} \, t)}}{A}} &= \pret{\hastype[\Delta]{\Gamma}{\tappc{(\dfix^{\kappa} \, u)\subst{\kappa}{\kappa'}}} {A}}\\
&= ([\kappa \mapsto \kappa']^* \pret{\hastype[\Delta,\kappa]{\Gamma,\alpha\of\kappa}{\tapp {(\dfix^{\kappa}\, u)}}{A}}) \pret{\pair{[\kappa\mapsto \kappa']}{\basicsub{\tickA}\tickc}}
\end{align*}
By Lemma~\ref{lem:dfixproperty} $\tapp{(\dfix^\kappa \,u)} = u\,(\dfix^\kappa\,u)$ thus
\begin{align*}\pret{\hastype[\Delta]{\Gamma}{\tappc{(\dfix^{\kappa'} \, t)}}{A}} = ([\kappa \mapsto \kappa']^* \pret{\hastype[\Delta,\kappa]{\Gamma,\alpha\of\kappa}{u\,(\dfix^\kappa\,u)}{A}}) \pret{\pair{[\kappa\mapsto \kappa']}{\basicsub{\tickA}\tickc}} \end{align*}
Which by the substitution Lemma is equal to 
\[\pret{\hastype[\Delta]{\Gamma}{u\,(\dfix^\kappa\,u) \pair{[\kappa\mapsto \kappa']}{\basicsub{\tickA}\tickc}}{A}}\]
Which, since $\alpha$ is not free in $u$, is equal to 
\[\pret{\hastype[\Delta]{\Gamma}{u\subst{\kappa'}{\kappa}\,(\dfix^{\kappa'}\,u\subst{\kappa'}{\kappa})}{A}}\]
\end{proofof}


\subsection{Substitution lemma case}
The case of the substitution lemma for application to $\tickc$ is as follows:
Suppose $\wfcxt[\Delta']{\Gamma'}$ and $\kappa' \in \Delta'$ and $\hastype[\Delta',\kappa]{\Gamma'}{t}{\latbind\tickA\kappa A}$.
Suppose moreover that $\pair\nu\sigma : \wfcxt{\Gamma} \to \wfcxt[\Delta']{\Gamma'}$. Assume for simplicity that 
$\kappa\notin \Delta$. We must show that $\pret{\pair\nu\sigma}^*(\pret{\tappc{t\subst\kappa{\kappa'}}}) = \pret{\tappc{t\subst\kappa{\kappa'}}\pair\nu\sigma}$. 

The induction hypothesis will state that 
\[
(\subex\nu\kappa\kappa)^*\pret t[\pret{\pair{\subex\nu\kappa\kappa}\sigma}] = \pret{t\pair{\subex\nu\kappa\kappa}\sigma}
\] 
from which one can prove 
\[
(\subex\nu\kappa\kappa)^*\pret{\istypeshort[\Delta',\kappa]{\Gamma', \tickA : \kappa}{\tapp t}}
[\pret{\pair{\subex\nu\kappa\kappa}{\subex{\sigma}\tickA\tickA}}]
= \pret{\istypeshort[\Delta,\kappa]{\Gamma, \tickA : \kappa}{\tapp{(t\pair{\subex\nu\kappa\kappa}{\sigma})}}}
\] 
as in the case of the substitution lemma for application to a tick variable. For the proof we need the following lemma.

\begin{lemma} \label{lem:sub:clock:extension}
 Let $\pair\nu\sigma : \wfcxt\Gamma\to \wfcxt[\Delta']{\Gamma'}$ be a \clott\ substitution, let $\kappa$ be fresh and 
 $\kappa'\in \Delta$. Then
the two compositions
\[
\pret{\wfcxt\Gamma} \xrightarrow{\pret{\pair{\nu}{\sigma}}} \nu^*\pret{\wfcxt[\Delta']{\Gamma'}} 
\xrightarrow{\nu^*(\pret{\pair{[\kappa\mapsto \kappa']}{\basicsub{\tickA}\tickc}})} 
\nu^*\basicsub{\kappa}{\kappa'}^* \pret{\wfcxt[\Delta',\kappa]{\Gamma', \tickA : \kappa}} 
\]
and $ \basicsub{\kappa}{\nu(\kappa')}^* \pret{\pair{\subex\nu\kappa\kappa}{\subex\sigma\tickA\tickA}} \circ \pret{\pair{[\kappa\mapsto \nu(\kappa')]}{\basicsub{\tickA}\tickc}}$ of type
\[
\pret{\wfcxt\Gamma}\to 
 \basicsub{\kappa}{\nu(\kappa')}^*\pret{\wfcxt[\Delta,\kappa]{\Gamma, \tickA : \kappa}} 
 \to 
 (\basicsub{\kappa}{\nu(\kappa')})^*\subex\nu{\kappa}{\kappa}^* \pret{\wfcxt[\Delta',\kappa]{\Gamma', \tickA : \kappa}} 
\]
are equal. 
\end{lemma}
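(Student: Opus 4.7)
The plan is to derive Lemma~\ref{lem:sub:clock:extension} as a straightforward consequence of Lemma~\ref{lem:subst:comp:clock:weak} and Lemma~\ref{lem:weak:sub:clock:extension}, both of which appear earlier in the appendix. Lemma~\ref{lem:weak:sub:clock:extension} already establishes exactly the equality asserted in Lemma~\ref{lem:sub:clock:extension}, but only under an auxiliary hypothesis relating $\pret{\pair{\subex\nu\kappa\kappa}\sigma}\circ\inv{\pret{\pair{i}{\id_\Gamma}}}$ and $(\subex\nu\kappa\kappa)^*\inv{\pret{\pair{j}{\id_{\Gamma'}}}}\circ i^*\pret{\pair\nu\sigma}$. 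The whole task therefore reduces to verifying this hypothesis.

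To do so, I will apply Lemma~\ref{lem:subst:comp:clock:weak} with both of its fresh clock names instantiated to our $\kappa$, which is permissible because $\kappa$ is fresh for both $\Delta$ and $\Delta'$. With $i : \Delta \to \Delta,\kappa$ and $j : \Delta' \to \Delta',\kappa$ the resulting inclusions, that lemma yields the equation $(\subex\nu\kappa\kappa)^*\pret{\pair{j}{\id_{\Gamma'}}} \circ \pret{\pair{\subex\nu\kappa\kappa}{\sigma}} = i^*\pret{\pair\nu\sigma} \circ \pret{\pair i{\id_\Gamma}}$. Now Lemma~\ref{lem:(i,id):invertible} tells us that the components of $\pret{\pair i{\id_\Gamma}}$ and $\pret{\pair j{\id_{\Gamma'}}}$ are isomorphisms at objects of the form $\triple{\Theta,\#}{\vartheta\basicsub\# n}{f\basicsub{\kappa}{\#}}$, which are exactly the test objects that appear when the $\tickc$-substitution is unfolded. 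Pre- and post-composing the equation above with the appropriate inverses on those components yields precisely the auxiliary hypothesis required by Lemma~\ref{lem:weak:sub:clock:extension}, and an application of that lemma then delivers the desired conclusion.

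The main obstacle is to avoid circularity, since Lemma~\ref{lem:subst:comp:kappa:to:kappa'}, which is used earlier in the paper, itself relies on Lemma~\ref{lem:sub:clock:extension}. The resolution is that Lemma~\ref{lem:weak:sub:clock:extension} can be proved by direct computation in the presheaf model, appealing only to Lemma~\ref{lem:(i,id):invertible}, and Lemma~\ref{lem:subst:comp:clock:weak} is then proved by induction on $\Gamma'$ invoking Lemma~\ref{lem:weak:sub:clock:extension} (rather than Lemma~\ref{lem:sub:clock:extension}) in its $\tickc$ subcase. The genuine order of proof is therefore Lemma~\ref{lem:weak:sub:clock:extension}, then Lemma~\ref{lem:subst:comp:clock:weak}, then the present Lemma~\ref{lem:sub:clock:extension}, and only then Lemma~\ref{lem:subst:comp:kappa:to:kappa'}. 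As a fallback, if one wished to avoid invoking Lemma~\ref{lem:subst:comp:clock:weak} at all, the present lemma could alternatively be proved by a direct induction on $\Gamma'$ paralleling the computation in Lemma~\ref{lem:weak:sub:clock:extension}; the only real bookkeeping subtlety in that route is matching up the coproduct indices produced by $\tearlier^\kappa$ on both sides.
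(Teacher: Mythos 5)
Your proposal is correct and matches the paper's proof, which in the appendix consists of the single sentence that the statement follows from Lemma~\ref{lem:weak:sub:clock:extension} and Lemma~\ref{lem:subst:comp:clock:weak}. You have made explicit the step the paper leaves implicit (instantiating Lemma~\ref{lem:subst:comp:clock:weak} with $\kappa' = \kappa$, then rearranging via the componentwise inverses supplied by Lemma~\ref{lem:(i,id):invertible} to obtain the auxiliary hypothesis of Lemma~\ref{lem:weak:sub:clock:extension}), and your remarks on the logical ordering of the lemmas correctly diagnose that the apparent forward reference from Lemma~\ref{lem:subst:comp:kappa:to:kappa'} is harmless because the actual dependency chain is acyclic.
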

\begin{proof}
 The statement follows from Lemma~\ref{lem:weak:sub:clock:extension} and Lemma~\ref{lem:subst:comp:clock:weak}.
\end{proof}
From this, the case now follows:
\begin{align*}
& \nu^*(\pret{\tappc{t\subst\kappa{\kappa'}}})[\pret{\pair\nu\sigma}] \\
 & = \nu^*\left((\subex{\id_{\Delta'}}\kappa{\kappa'}^*\pret{\tapp t}) \pret{\pair{[\kappa\mapsto \kappa']}{\basicsub{\tickA}\tickc}}\right) [\pret{\pair\nu\sigma}] \\
 & = \left(\nu^*(\subex{\id_{\Delta'}}\kappa{\kappa'}^*\pret{\tapp t})\right) [\nu^*\pret{\pair{[\kappa\mapsto \kappa']}{\basicsub{\tickA}\tickc}} \circ \pret{\pair\nu\sigma}] \\
 & = \left(\nu^*(\subex{\id_{\Delta'}}\kappa{\kappa'}^*\pret{\tapp t})\right) [\basicsub{\kappa}{\nu(\kappa')}^* \pret{\pair{\subex\nu\kappa\kappa}{\subex\sigma\tickA\tickA}} \circ \pret{\pair{[\kappa\mapsto \nu(\kappa')]}{\basicsub{\tickA}\tickc}}] \\
 & = \left((\basicsub\kappa{\nu(\kappa')}^*(\subex\nu\kappa\kappa)^*\pret{\tapp t})\right) [\basicsub{\kappa}{\nu(\kappa')}^* \pret{\pair{\subex\nu\kappa\kappa}{\subex\sigma\tickA\tickA}}] [\pret{\pair{[\kappa\mapsto \nu(\kappa')]}{\basicsub{\tickA}\tickc}}] \\
 & = \left(\basicsub\kappa{\nu(\kappa')}^*((\subex\nu\kappa\kappa)^*\pret{\tapp t}[\pret{\pair{\subex\nu\kappa\kappa}{\subex\sigma\tickA\tickA}}]\right) [\pret{\pair{[\kappa\mapsto \nu(\kappa')]}{\basicsub{\tickA}\tickc}}] \\
 & = \left(\basicsub\kappa{\nu(\kappa')}^*\pret{\tapp{(t\pair{\subex\nu\kappa\kappa}{\sigma})}}\right) [\pret{\pair{[\kappa\mapsto \nu(\kappa')]}{\basicsub{\tickA}\tickc}}] \\
 & = \pret{\tappc{(t\pair{\subex\nu\kappa{\nu(\kappa')}}{\sigma})}} \\
 & = \pret{(\tappc{t\subst{\kappa}{\kappa'}})\pair{\nu}{\sigma}}
 \end{align*}


\end{document}